\documentclass[11pt,onecolumn,draftcls]{IEEEtran}
\usepackage{booktabs}
\usepackage{subfigure}
\usepackage{amsmath,epsfig,bm}
% Example definitions.
% --------------------
\newcommand{\y}{\mathbf{y}}

\newcommand{\x}{\mathbf{x}}

\renewcommand{\a}{\mathbf{a}}
\renewcommand{\b}{\mathbf{b}}   
\renewcommand{\r}{\mathbf{r}}
\newcommand{\g}{\mathbf{g}}
\newcommand{\f}{\mathbf{f}}

\renewcommand{\c}{\mathbf{c}}

\newcommand{\F}{\mathbf{F}}

\newcommand{\X}{\mathbf{X}}
\newcommand{\Y}{\mathbf{Y}}
\newcommand{\G}{\mathbf{G}}

\newcommand{\I}{\mathbf{I}}  
\newcommand{\W}{\mathbf{W}}
\newcommand{\V}{\mathbf{V}}

\newcommand{\U}{\mathbf{U}}
\newcommand{\Z}{\mathbf{Z}}

\newcommand{\R}{\mathbf{R}}

\renewcommand{\H}{\mathbf{H}}
\newcommand{\exE}{\mathbb{E}}
\newcommand{\ZV}{\bm{0}} % zero vector
\newcommand{\GAMMA}{\bm{\Gamma}} % zero vector
\newcommand{\Hnull}{\mathcal{H}_0}
\newcommand{\Halt}{\mathcal{H}_1}

\newcommand{\Real}{\mathcal{R}}
\newcommand{\Imag}{\mathcal{I}}

\usepackage{pifont}
\usepackage{graphicx}
\usepackage{amssymb}
\usepackage{amsmath}
\usepackage{cancel}
\usepackage[normalem]{ulem}
\usepackage[dvips]{color}
\usepackage{algorithm}  %for code
\newfont{\fsc}{eusm10}                         % frenchscript letters

\DeclareMathOperator*{\argmax}{arg\,max}
%             this might be already defined

\usepackage{lineno}
% Running line numbers:
%\linenumbers
% Same, but that reset on every page:
%\pagewiselinenumbers
% Number only every 5:th line:
\modulolinenumbers[1]

% Algorithms
\usepackage{algorithmic}
\usepackage{algorithm}

\newtheorem{theorem}{\noindent \textbf{Theorem}}
\newtheorem{lemma}{\noindent \textbf{Lemma}}

%\newtheorem{lemma}[theorem]{Lemma}
%%%%%%%%%%%%%%%%%%%%%%%%%%%%%%%%%%%%%%%%%%%%%%%%%
\title{Cooperative Spectrum Sensing for Amplify-and-Forward Cognitive Networks}
%%%%%%%%%%%%%%%%%%%%%%% 
\author{
\IEEEauthorblockN{Ido Nevat$^1$,
                  Gareth W. Peters $^{2,3}$,
                  Jinhong Yuan$^4$
                  and Iain B. Collings $^1$\\
\IEEEauthorblockA{$^1$
Wireless \& Networking Tech. Lab, CSIRO, Sydney, Australia.}\\
\IEEEauthorblockA{$^2$
School of Mathematics and Statistics, University of NSW, Sydney, Australia.}\\
\IEEEauthorblockA{$^3$
CSIRO Mathematical and Information Sciences, Sydney, Australia.}\\
\IEEEauthorblockA{$^4$
School of Electrical Engineering,University of NSW, Sydney, Australia.}} }

%%%%%%%%%%%%%%%%%%%%%%%
\begin{document}
\maketitle
%%%%%%%%%%%%%%%%%%%%%%%%%%%%%%%%%%%%%%%%%%%%%%%%%%%
% ABSTRACT
%%%%%%%%%%%%%%%%%%%%%%%%%%%%%%%%%%%%%%%%%%%%%%%%%%%
\begin{abstract}
We develop a framework for spectrum sensing in cooperative amplify-and-forward cognitive radio networks. We consider a stochastic model where relays are assigned in cognitive radio networks to transmit the primary user's signal to a cognitive Secondary Base Station (SBS).
We develop the Bayesian optimal decision rule under various scenarios of Channel State Information (CSI) varying from perfect to imperfect CSI.
In order to obtain the optimal decision rule based on a Likelihood Ratio Test (LRT), the marginal likelihood under each hypothesis relating to presence or absence of transmission needs to be evaluated pointwise. 
However, in some cases the evaluation of the LRT can not be performed analytically due to the intractability of the multi-dimensional integrals involved. In other cases, the distribution of the test statistic can not be obtained exactly.
To circumvent these difficulties we design two algorithms to approximate the marginal likelihood, and obtain the decision rule. The first is based on Gaussian Approximation where we quantify the accuracy of the approximation via a multivariate version of the Berry-Esseen bound. The second algorithm is based on Laplace approximation for the marginal likelihood, which results in a non-convex optimisation problem which is solved efficiently via Bayesian Expectation-Maximisation method.
We also utilise a Laguerre series expansion to approximate the distribution of the test statistic in cases where its distribution can not be derived exactly.
Performance is evaluated via analytic bounds and compared to numerical simulations.
\end{abstract}
\begin{keywords}
Cognitive radio, Cooperative spectrum sensing, Likelihood Ratio Test, Laplace method, Laguerre polynomial, Berry-Esseen theorem, Bayesian Expectation Maximization.
\end{keywords}
%%%%%%%%%%%%%%%%%%%%%%%%%%%%%%%%%%%%%%%%%%%%%%%%%%%
% INTRODUCTION
%%%%%%%%%%%%%%%%%%%%%%%%%%%%%%%%%%%%%%%%%%%%%%%%%%%
\section{Introduction}
In recent years, cognitive radio \cite{haykin2005cognitive}, \cite{fette2006cognitive} has attracted intensive research due to the pressing demand for efficient frequency spectrum usage. In a cognitive radio system, the secondary users (SU) try to find ``blank spaces'', in which the licensed
frequency band is not being used by the Primary Base Station (PBS). A key requirement in cognitive radio is that the SUs need to vacate the frequency band as quickly as possible if the corresponding Primary User (PU) emerges. 

Spectrum sensing is a mandatory functionality in any CR-based wireless system that shares spectrum bands with primary services, such as the IEEE 802.22 standard \cite{IEEE802_22}. This standard proposes to reuse vacant spectrum in the TV broadcast bands.
There has been a significant amount of research on spectrum sensing for cognitive radio, see \cite{ghasemi2008spectrum}, \cite{akyildiz2008survey} for overviews.

%%%%%%%%%%%%%%%%%%%%%%%%%%%%%5
Essentially, spectrum sensing can be cast as a decision making or classification problem. The secondary network needs to make a decision between the two possible hypotheses given an observation vector: that the frequency band is either occupied or vacant.
The more knowledge we have on the nature of the primary user's signal, the more reliable our decision. If no knowledge is assumed regarding the primary user, energy detector based approaches (also called radiometry) are the most common way of spectrum sensing
because of their low computational complexity. 
Cooperative networks can improve the performance of the network by enabling users to share information and create diversity. This helps to combat the detrimental effect caused by the fading channels.
%%%%%%%%%%%%%%
In this context, cooperative spectrum sensing has been studied extensively as a promising alternative to improve the sensing performance. In \cite{quan2008optimal}, the authors proposed algorithms to optimise detection performance by operating over a linear combination of local test statistics from individual secondary users. In \cite{zhang2008cooperative}, the performance of cooperative spectrum sensing was derived. It was found that the optimal decision fusion rule to minimize the total error probability is the half-voting rule. In \cite{unnikrishnan2008cooperative}, centralized and decentralized detection schemes were developed.

In contrast to those methods, our system model for cooperative spectrum sensing contains the practical scenario of channel uncertainty.  This includes the case of partial CSI knowledge at the SBS or the more severe case of blind spectrum sensing.
We also assume that the relays have no processing capability, therefore are not capable of performing any local decisions. 
This is a practical scenario encountered in many relay networks \cite{chen2011}, \cite{chen2011cooperative}.
In order to perform LRT, the SBS performs a hypothesis test to decide whether the PBS is transmitting or idle in a given frame.
As we show, the densities involved in making a decision under this framework are intractable, meaning they can not be evaluated point wise. This is due to the fact that they involve multi-variate integrals which can not be solved analytically. 

\textbf{Contribution:}
\begin{enumerate}
\item We propose a novel statistical model to address the problem of spectrum sensing with partial CSI. To the best of our knowledge, a cooperative spectrum sensing, where both the channels from the PBS to the relays and the channel between the relays to the SBS are only partially known has not been addressed previously. In most cooperative CR systems, the relays perform a local soft or hard decision and then report their summary statistics to the SBS \cite{yucek2009survey}, \cite{srinivasa2010soft}. In the system model we present all the statistical processing is performed at the SBS, thus removing the computational complexity from the relays and placing it at the SBS, enabling the use of standard relays systems already developed and in operation, making such an approach widely applicable . 
	\item We derive the probabilities of detection and mis-detection as well as the associated optimal tests under several different scenarios. Some bounds have exact closed-form expressions while others have closed-form approximations that we derive via Laguerre series expansion.
	\item For the most complicated case of imperfect CSI, we derive two low complexity algorithms to perform spectrum sensing:
\begin{enumerate}
	\item [i.] The first is based on Gaussian approximation via moment matching. This results in a simple closed-form test statistic, for the decision process. In addition we study the approximation error providing closed form expression for the bound via Berry-Esseen theorem. 
	\item [ii.] The second is based on the Laplace approximation of the marginal likelihood which involves solving a non-convex optimisation problem.
\end{enumerate} 
\end{enumerate}

%%%%%%%%%%%%%%%%%%%%%%%%%%%%%%
%%%%%%%%%%%%5
The paper is structured as follows: in Section \ref{System_Model} the stochastic system model is developed and the Bayesian estimation problem is presented. Section \ref{PERFECT_CSI} presents an analysis of the case of perfect CSI. In Section \ref{Laguerre_section} we develop the optimal decision rule and approximate the performance for the case of imperfect-perfect CSI case. In Section \ref{App_Algorithms} we present two novel algorithms to perform the hypothesis test in the case of imperfect-imperfect CSI. Section \ref{SimulationResults} presents extensive simulation results. Conclusions are provided in Section \ref{Conclusions}.

%%%%%%%%%%%%%%5
\textbf{Notation.}The following notation is used throughout: random variables are denoted by upper case letters and their realizations by lower case letters, and bold case will be used to denote a vector or matrix quantity.

\section{Problem definition and System model} \label{System_Model}
In general, in cooperative spectrum sensing model, the task of the SBS is to discriminate between two hypotheses, the null ($\Hnull$) that the bandwidth is idle versus the alternative ($\Halt$) that the bandwidth is occupied, given a set of observations.
%%%%%%
Based on the decision regarding the presence or absence of primary user's activities, the cognitive radio can utilise the spectrum or vacate it. Since we will formulate the problem of deciding whether the channel is occupied or not based on the observational evidence via a nested model structure, we can consider the likelihood ratio test framework, see \cite{vantrees1968dea}. 

In this paper, the challenging aspect of this problem, that extends beyond solutions developed previously, is that the probability distribution (PDF) of the test statistic that we wish to use for inference is not known in closed form. In general, it will also depend on a set of unknown parameters, for each hypothesis. Therefore we resort to formulating analytic approximate solutions for the distribution of the LRT under both hypotheses in order to perform inference.
\vspace{-0.5cm}
\subsection{Statistical model} \label{Statistical_model}
The network architecture we consider is a centralized network entity such as a base-station in infrastructure-based networks (see Fig. \ref{fig:system_model}). 
We consider a frame by frame scenario where one PBS may be active (transmitting data) or idle (not transmitting) during a frame. If active, its signal is transmitted over independent wireless channels and is captured by $M$ relay links. 
Each relay, instead of making individual decisions about the presence of the primary user, simply transmits the noisy received signal to the SBS  over a fading channel. The SBS is equipped with $N$ receive antennas. 
We further assume that the SBS has only limited knowledge of the CSI (noisy channel estimates), which is a practical scenario \cite{chen2011}. We now outline the system model and associated assumptions.\\
\textbf{Model Assumptions:}
\begin{enumerate}
%\begin{itemize}
\item \textsl{Assume a wireless network with one PBS equipped with a single antenna, that may be \textbf{active} or \textbf{idle} in a given frame.}
%\item \textsl{The PBS is \textbf{active} in a frame with probability $P_1$ and \textbf{idle} with probability $P_0$.}
\item \textsl{In case that the PBS is \textbf{active}, it periodically transmits pilot signals, $s(l), \;l=1,\ldots,L$, within a frame of $L$ symbols, see \cite{IEEE802_22}. This model assumption will be discussed in Remark 2 below.}
\item \textsl{At each frame the received signal at the $m$-th relay ( $m=1,\ldots,M$) is a random variable given as a composite model, where $\Hnull$ and $\Halt$ correspond to \textbf{idle} and \textbf{active} model hypotheses, respectively:}
\begin{align}
\label{composite_model}
\begin {cases}
\begin{split}
    &\Hnull: R_m(l)=V_m(l)            \hspace{2.2cm}\; l=1,\ldots,L\\
    &\Halt: R_m(l)=F_m(l) s(l)+V_m(l)\;\hspace{0.2cm}l=1,\ldots,L,
\end{split}
\end {cases} 
\end{align}
\textsl{where $F_m(l)$ denotes the channel coefficient between the PBS and the $m$-th relay and $V_m(l)$ is the unknown noise realization associated with the $m$-th relay receiver.
Note, each of the relays is equipped with single receive and single transmit antenna.
%We remark that the notation utilised here for the null and alternative hypothesis is popular in the literature. It is interpreted to represent testing of the hypothesis under the null that all received signals have zero population mean versus the alternative that the mean is as specified by the alternative signal model. This represents a nested model structure which will lend itself to a likelihood-ratio tests statistic.}
}
%%%%%%%%%%%%%%%%%
\item \textsl{The relays re-transmit their received signal, $\left\{R_m(l)\right\}_{m=1}^M$, over $M$ fading channels. These channels can either occupy the same frequencies as the PBS-Relay channels or be dedicated reporting channels.}
%%%%%%%%%%%%%%%%%
\item \textsl{The received signal at the SBS from all $M$ relays at epoch $l$ can be written as
\begin{align}
\label{lin_system}
\left[
\begin{array}{c}
Y^{(1)}\left(l\right)\\
Y^{(2)}\left(l\right)\\
\vdots\\
Y^{(N)}\left(l\right)
\end{array}
\right]
=
\left[
\begin{array}{cccc}
G^{(1,1)}\left(l\right) & G^{(1,2)}\left(1\right)& \cdots& G^{(1,M)}\left(l\right)\\
G^{(2,1)}\left(l\right) & G^{(2,2)}\left(1\right)& \ddots& G^{(2,M)}\left(l\right)\\
\vdots&\ddots &\ddots&\vdots \\
G^{(N,1)}\left(l\right) & G^{(N,2)}\left(1\right)& \cdots& G^{(N,M)}\left(l\right)
\end{array}
\right]
%%%%%%%%%%%%%
\left[
\begin{array}{c}
R^{(1)}\left(l\right)\\
\vdots\\
R^{(M)}\left(l\right)
\end{array}
\right]+
%%%%%%%%%%%%%%%%%5
\left[
\begin{array}{c}
W^{(1)}\left(l\right)\\
W^{(2)}\left(l\right)\\
\vdots\\
W^{(N)}\left(l\right)
\end{array}
\right]
\end{align}
and can be expressed compactly as the following composite model:
\begin{align}
\label{observation_model}
\begin {cases}
\begin{split}
    &\Hnull: \Y(l)=\G(l) \V(l) + \W(l)	\hspace{2.3cm}\; l=1,\ldots,L\\
    &\Halt: \Y(l)=\G(l) \left(\F(l)s(l)+\V(l)\right) + \W(l)\;l=1,\ldots,L,
\end{split}
\end {cases}
\end{align}
where $\Y(l) \in C^{N \times 1}$ is the received signal at the $l$-th sample, $\G(l) \in C^{N \times M}$ is the random channel matrix between the relay and the SBS, $\F(l)\triangleq \left[F_1(l),\cdots, F_M(l)\right]^T  \in C^{M \times 1}$ is the random channel vector between the PBS and the relays. The random vector, $\W(l)  \in C^{N \times 1}$, is the random additive noise at the SBS, and 
$\V(l)  \triangleq \left[V_1(l),\cdots, V_M(l)\right]^T \in C^{M \times 1}$ is the random additive noise at the relays.\\
}
%\end{itemize}
\end{enumerate}
\vspace{-0.7cm}
\noindent \textbf{Remark 1:} \textit{Note, the dimension $N$ at the SBS can be attributed to several factors. For example, the SBS can be a MIMO receiver equipped with $N$ receive antennas. A different option is that the relays, while observing the same frequency band, transmit their information over $M$ dedicated orthogonal frequency bands (i.e. $N=M$). In that case, $\G$ would be a diagonal matrix. Here, we wish to make the system model as general as possible and not impose particular constraints or assumptions.}\\
%%%%%%%%%%%%%%%%%
\noindent \textbf{Remark 2:} \textit{Cognitive radio standard defined in IEEE $802.22$ is implemented in the TV bands \cite{IEEE802_22}. The TV bands digital signals can be either ATSC (North America), DVB-T (Europe), or ISDB (Japan). These standards contain within them many features, such as \textbf{pilot symbols and synchronization patterns}. For example, ATSC signals \cite{atsc74recommended} contain a $511$-symbol long PN sequence, pilot symbols and synchronization patterns of $828$ symbols. This makes our assumption regarding pilot symbols and synchronized transmission practical.}\\
%%%%%%%%%%%%%%%%%
\noindent	\textbf{Remark 3:} \textit{CSI can be obtained using the knowledge of the pilot symbols from Remark 2. If the relays have the capability of performing channel estimation, they can forward these estimates to the SBS. The SBS can also perform channel estimation to obtain matrix $\G$. 
%This makes the channel models presented here practical. 
%For example, the SBS may have an estimate of $\F$ and $\G$ at the beginning of the frame (l=1). At every symbol within the frame the channels slightly change. We could, for example, impose an autoregressive evolution of the channel coefficients, or other stochastic models to describe the stochastic nature of the channels. However, in order to keep the system model general, we do not impose any particular statistical model on those variations, and our approaches apply generally.
}\\
%%%%%%%%%%%%%%%%%
%\noindent \textbf{Remark 4:} \textit{The general approach we take will allow us to consider both fast and slow fading channels. Slow fading channels behavior is obtained by setting $\G(l)=\G,\; \F(l)=\F,  \; \forall l \in \left\{1,\ldots,L\right\}$.
%}
%%%%%%%%%%%%%%%%%%%%%%%%%%%%%%%%%%%%%%%%%%%%%%%%%%%
%\section{Optimal Spectrum Sensing via Bayes' Risk Hypotheses Testing} \label{SpectrumSensingOptimalHypothesesTesting}
%The objective of spectrum sensing is to make a decision whether the spectrum band is \textbf{idle} or \textbf{active} (choose $\Hnull$ or $\Halt$) in a given frame, based on the received signal at the SBS.
%
%To solve the decision problem, we will take a Bayesian approach. That is we consider the Bayes' risk formulation of the decision problem which generalises the LRT to a Bayesian framework. To work with this approach we first specify our prior beliefs on the model.
%%%%%%%%%%%%%%%%%%5
%\subsection{Prior specification}
\textbf{Prior specification:}\\
Here we present the relevant aspects of the Bayesian model and associated assumptions. 
\begin{enumerate}
%\item \textit{The PBS is \textbf{active} or \textbf{idle} with the same prior probability, i.e. $P\left(\Hnull\right)=P\left(\Halt\right)=\frac{1}{2}$ .}
\item \textsl{The PBS is \textbf{active} or \textbf{idle} with prior probabilities $P\left(\Halt\right)$ and $P\left(\Hnull\right)$, respectively.}

\item \textsl{All the channels are time varying, meaning that they are constant within a symbol, but may change from one symbol to the next.}
\item \textsl{The SBS has only a noisy estimate of the true channel realisation, $\G$. This is the result of a channel estimation phase which we do not detail here, see \cite{ding2009mimo}. A common approach is to model the channel as 
$\G(l)  = \overline{\G}(l)+ \Delta,$
%\begin{align}
%\G(l)  = \overline{\G}(l)+ \Delta,
%\end{align}
where $\overline{\G}(l)$ is the noisy channel estimate, and $\Delta$ is the associated estimation error.
The distribution of $\G(l)$ conditioned on $\overline{\G}(l)$ and $\Delta$ can be written as
$\G(l) \sim CN\left(\ \overline{\G}(l), \Sigma_{\G} \right),$
%\begin{align}
%\G(l) \sim CN\left(\ \overline{\G}(l), \Sigma_{\G} \right),
%\end{align}
$\Sigma_{\G} $ is the covariance matrix with known elements $\sigma^2_{\G}$, see details on noisy channel models in \cite{ding2009mimo}, \cite{rey2005robust}.}
%%%%%%%%%%%%%%%%%%%%%%%%%%%%%%%
\item \textsl{The SBS has only a noisy channel estimate of the true channel realisation, $\F(l)$. As with the $\G$ channels, $\F(l)$ can be written as
$\F(l) \sim CN\left(\ \overline{\F}(l), \Sigma_{\F} \right),$
%\begin{align}
%\label{source_relays_channels}
%\F(l) \sim CN\left(\ \overline{\F}(l), \Sigma_{\F} \right),
%\end{align}
where $\overline{\F}(l)$ is the estimated channel and $\Sigma_{\F}=\sigma^2_{\F} \I$ is the covariance matrix with known elements $\sigma^2_{\F}$.
Note: this stochastic model covers the case where CSI is unavailable, and only the channel prior distributions are available. In that case, $\overline{\F}(l)=\ZV$ and $\sigma^2_{\F}=1$.
}
%%%%%%%%%%%%%%%%%%%%%%%%%%%%%%%
\item \textsl{The additive noise at the relays is a zero-mean i.i.d Complex Gaussian distribution, 
%\begin{align}
$
\V(l) \sim CN \left( \mathbf{0}, \Sigma_{\V} \right),
$
%\end{align}
where $\Sigma_{\V} = \sigma^2_{\V} \I $ is the covariance matrix, known at the SBS.
}
%%%%%%%%%%%%%%%%%%%%%%%%%%%%%%%%
\item \textsl{The additive noise at the SBS is a zero-mean i.i.d Complex Gaussian distribution,
%\begin{align}
$
\W(l) \sim CN\left(\ \mathbf{0}, \Sigma_{\W} \right),
$
%\end{align}
where $\Sigma_{\W} = \sigma^2_{\W} \I $ is the covariance matrix, known at the SBS.
}
%%%%%%%%%%%%%%%%%%%%%%%%%%%%%%%%
\item \textsl{The symbols $s(l)$ are known at the SBS in the form of pilot symbols \cite{IEEE802_22}. For ease of presentation and w.l.o.g, we assume that $s(l)=1, \; \forall l \in \left\{1,\ldots,L\right\}$. }
%%%%%%%%%%%%%%%%%%%%%%%%%%%%%%%%
\end{enumerate}

%Having specified the prior structure, we can now formalize the decision rule.
%%%%%%%%%%%%%%%%%%%%%%%
\subsection{Spectrum sensing decision criterion}
%Here we formulate the problem using Bayes' criterion \cite{barkat2005signal}.
%In doing so, two assumptions are made: first, the prior model probabilities $P\left(\Hnull\right)$ and $P\left(\Halt\right)$ are specified; and second, a cost, $C_{xy}$, is assigned to each possible decision. Formally we define $C_{xy}$ as the associated cost of making a decision $\mathcal{H}_x$,  given that the true hypothesis is $\mathcal{H}_y$.
The objective of spectrum sensing is to make a decision whether the spectrum band is \textbf{idle} or \textbf{active} (choose $\Hnull$ or $\Halt$) in a given frame, based on the received signal at the SBS. To solve the decision problem, we will take a Bayesian approach. That is we consider the Bayes' risk formulation of the decision problem which generalises the LRT to a Bayesian framework. The problem of designing the decision rule can be treated as an optimization problem whose objective is to minimize the cost function:
\begin{align}
\begin{split}
C &=
P\left(\Hnull\right)
\left(C_{00}\int_{A_0}p\left(\y_{1:L}|\Hnull\right)d\y_{1:L}
+C_{10}\int_{A_1}p\left(\y_{1:L}|\Hnull\right)d\y_{1:L}\right)\\
%%%%%%%%%%%%
&+P\left(\Halt\right)
\left(C_{01}\int_{A_0}p\left(\y_{1:L}|\Halt\right)d\y_{1:L}
+C_{11}\int_{A_1}p\left(\y_{1:L}|\Halt\right)d\y_{1:L}\right).
\end{split}
\end{align}
It can be shown, \cite{vantrees1968dea}, that the optimum decision rule is a likelihood-ratio test (LRT) given by 
\begin{equation}
\label{LRT}
\Lambda\left(\Y_{1:L}\right) \triangleq
\frac{p\left(\y_{1:L}|\Halt\right)}{p\left(\y_{1:L}|\Hnull\right) }
\begin{array}{c}
\stackrel{\Halt}{\geq} \\
\stackrel{<}{\Hnull}
\end{array}%
\frac{P\left(\Hnull\right)}{P\left(\Halt\right)}
\frac{C_{10}-C_{00}}{C_{01}-C_{11}}\triangleq \gamma,
\end{equation}
where $C_{xy}$ is the predefined associated cost of making a decision $\mathcal{H}_x$,  given that the true hypothesis is $\mathcal{H}_y$, and we define random matrix $\Y_{1:L} \triangleq \left[\Y\left(1\right),\ldots, \Y\left(L\right)\right]$.

%%%%%%%%%%%%%%%%%%%%%%%%%%%%%%%%%%
Under both hypotheses $\Hnull$ and $\Halt$, all random quantities in the model are independent. We can therefore decompose the full marginals under each hypothesis, $p\left(\y_{1:L}|\mathcal{H}_k\right)$, $k={0,1}$, as
\begin{equation}
\label{marginal_likelihood}
p\left(\y_{1:L}|\mathcal{H}_k\right) 
= \prod_{l=1}^L p\left(\y(l)| \mathcal{H}_k\right).
\end{equation}
This decomposition is useful as it allows us to work on a lower dimensional space, resulting in efficiency gains for the algorithms we develop in the next sections and requiring no memory storage for data.

Table \ref{Table_algorithms} presents a summary of the different scenarios that will be covered in the next Sections as well as the type of solution that is provided under each scenario.
%%%%%%%%%%%%%%%%%%%%%%%%%%%%%
%\begin{table} 	[b]
%  \begin{center}
%    	\begin{tabular}{|c|c|c|c|l|l|}
%			\addlinespace[1pt	]
%			%\toprule[1pt]
%			%\addlinespace[1pt]		
%			\multicolumn{5}{c}{} \\			\hline
%			%\addlinespace[1pt	]
%			Case &PBS-relays ($\F$)&relays - SBS ($\G$) &  Section & Decision rule & Performance analysis
%			\\
%			\hline\addlinespace[2pt	]
%			I&$\surd$&$\surd$& \ref{PERFECT_CSI}& Exact 		&Exact analytic\\
%			\hline\addlinespace[1pt	]
%			II&$\times$&$\surd$&\ref{Laguerre_section}& Exact & Analytic approximation  		\\
%			\addlinespace[1pt	]
%			%%%
%			$$&$$&$$&$$& $$ & via Generalized Laguerre polynomials 		\\
%			\hline\addlinespace[1pt	]
%			%%%
%			III&$\surd$&$\times$&\ref{App_Algorithms}&  Special case of IV	& see Section IV\\
%			\hline\addlinespace[1pt	]
%			IV&$\times$&$\times$&\ref{App_Algorithms}& Analytic approximation	& Simulation \\
%			\addlinespace[1pt	]
%			%%%
%			$$&$$&$$&$$& via Laplace integrals	& $$ \\
%			\addlinespace[1pt	]
%			%%%%
%			IV&$\times$&$\times$&\ref{App_Algorithms}& Analytic approximation  & Simulation \\
%			\addlinespace[1pt	]
%			%%%
%			$$&$$&$$&$$& via Moments	matching  & $$ \\
%			\addlinespace[1pt	]
%			%%%
%%			IV&$\times$&$\times$&\ref{App_Algorithms}&  Asympthotic exact& Simulation \\
%%			\addlinespace[1pt	]
%%			%%%
%%			$$&$$&$$&$$&  via Bootstrap& $$ \\
%%			\addlinespace[1pt	]
%			%%%
%			\hline
%		%\bottomrule[1pt]
%			\end{tabular}
%\end{center}
%\caption{Summary of proposed solutions based on CSI knowledge} 
%\label{Table_algorithms}
%\end{table}     
%%%%%%%%%%%%%%%%%%%%%%%%%%%
%%%%%%%%%%%%%%%%%%%%%%%%%%%
\section{Perfect Knowledge of PBS-relays and relays-SBS Channels} \label{PERFECT_CSI}
%%%%%%%%%%%%%%%%%%%%%%%%%%%%%%%%
We consider the situation of perfect CSI of both $\G(l)$ and $\F(l)$ for all $l$, which corresponds to Case I in Table \ref{Table_algorithms}. 
We derive the optimal decision rule and the probabilities of detection and false alarm. This scenario enables us to obtain a lower bound on the overall system performance in terms of error probabilities in analytic form.
%  Developing these bounds will involve first stating the form for the likelihood in Lemma \ref{Lemma_1} and then deriving the distribution of the resulting likelihood ratio test statistic under both the null and alternative hypothesis.
\begin{lemma}
\label{Lemma_1}
\textit{
The marginal likelihood under perfect CSI is:
\begin{align}
\label{marginal_likelihood_relay}
\Y(l) |\g(l),\f(l) \sim F(\y(l)|\g(l),\f(l)) \triangleq
\begin {cases} 
&CN \left( \ZV, \Sigma(l) \right), \hspace{0.5cm}\Hnull\\
&CN \left( \mu(l) , \Sigma(l)  \right),\Halt,
 	\end {cases}
\end{align}
%%%%%%%%%%%%%%%%%%%%%%%%%%%%%
where $\Sigma(l) \triangleq \sigma^2_{\V} \g(l)\g(l)^H + \sigma^2_{\W}\I $, and
$\mu(l) \triangleq \g(l) \f(l) $.
}
\end{lemma}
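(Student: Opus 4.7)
The plan is to verify the claim by direct computation, since under perfect CSI both $\g(l)$ and $\f(l)$ are treated as deterministic realisations, and the only remaining randomness in $\Y(l)$ comes from the additive noise vectors $\V(l)$ and $\W(l)$, both of which are complex Gaussian by the prior specification. I would proceed hypothesis by hypothesis, exploiting the standard fact that an affine map of jointly complex Gaussian random vectors is itself complex Gaussian, so that it suffices to identify the mean and covariance.

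Under $\Hnull$ I would substitute the observation model $\Y(l) = \G(l)\V(l) + \W(l)$ from \eqref{observation_model} and, conditioning on $\g(l)$, write $\Y(l) = \g(l)\V(l) + \W(l)$. Since $\V(l) \sim CN(\ZV,\sigma^2_\V \I)$ and $\W(l) \sim CN(\ZV,\sigma^2_\W \I)$ are independent, I would compute $\exE[\Y(l) \mid \g(l)] = \ZV$ and
\begin{equation*}
\COV[\Y(l) \mid \g(l)] = \g(l)\,\sigma^2_\V \I\,\g(l)^H + \sigma^2_\W \I = \sigma^2_\V \g(l)\g(l)^H + \sigma^2_\W \I,
\end{equation*}
which is exactly $\Sigma(l)$.

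Under $\Halt$ I would substitute $\Y(l) = \G(l)(\F(l)s(l) + \V(l)) + \W(l)$ and condition on both $\g(l)$ and $\f(l)$, using the convention $s(l)=1$ from the prior specification. Then $\Y(l) = \g(l)\f(l) + \g(l)\V(l) + \W(l)$, where the first term is deterministic under the conditioning and only shifts the mean, while the stochastic part coincides with the expression in the null case. Therefore $\exE[\Y(l) \mid \g(l),\f(l)] = \g(l)\f(l) = \mu(l)$ and $\COV[\Y(l) \mid \g(l),\f(l)] = \Sigma(l)$, matching the stated distribution.

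Honestly, I do not expect a real obstacle: the result is a routine application of the closure of the complex Gaussian family under affine transformations. The only minor care-point is keeping the Hermitian transpose in the covariance calculation straight (writing $\g(l)\g(l)^H$ rather than $\g(l)^H\g(l)$) and making explicit that the conditioning renders $\g(l)$, $\f(l)$, and $s(l)$ constants so that the noise vectors are the sole source of randomness. Once those bookkeeping points are handled, the identification of $\mu(l)$ and $\Sigma(l)$ in \eqref{marginal_likelihood_relay} follows immediately.
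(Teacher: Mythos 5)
Your proposal is correct and matches the reasoning the paper relies on: the paper states Lemma~\ref{Lemma_1} without an explicit proof precisely because, conditional on $\g(l)$ and $\f(l)$, the observation is an affine transformation of the independent complex Gaussian noise vectors $\V(l)$ and $\W(l)$, so one simply reads off the mean $\g(l)\f(l)$ (or $\ZV$) and covariance $\sigma^2_{\V}\g(l)\g(l)^H + \sigma^2_{\W}\I$. Your hypothesis-by-hypothesis computation is exactly this argument made explicit, and it is complete.
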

Next, using the decomposition property of (\ref{marginal_likelihood}), the test statistic and associated decision rule are presented.
%%%%%%%%%%%%%%%%%%%
\begin{theorem}
\textit{
\label{Theorem_3}
Under perfect CSI, the optimal decision rule defined in (\ref{LRT}) is given by
\begin{align}
\Lambda\left(\Y_{1:L}\right) \triangleq
\frac{p\left(\y_{1:L}|\Halt\right)}{p\left(\y_{1:L}|\Hnull\right)}
%=\frac{\prod_{l=1}^L p\left(\y(l)|\Halt\right)}
%			 {\prod_{l=1}^L p\left(\y(l)|\Hnull\right)}
%%%%%%%%%%%%%
=\frac{\exp^{-\frac{1}{2}\sum_{l=1}^L\left(\y(l) - \mu(l)\right)^H \Sigma^{-1}(l) \left(\y(l) - \mu(l)\right) }}
{ \exp^{-\frac{1}{2}\sum_{l=1}^L\y(l)^H\Sigma^{-1}(l) \y(l) }},
\end{align}
which results in the following decision rule
\begin{align}
\begin{split}
  \log \gamma +
  \frac{1}{2} \sum_{l=1}^L \mu(l)^H \Sigma^{-1}(l)\mu(l)
\begin{array}{c}
\stackrel{\Hnull}{\geq} \\
\stackrel{<}{\Halt}
\end{array}%
%%%%%%%%%%%%%
 \sum_{l=1}^L
\text{Re}\left[\mu(l)^H \Sigma^{-1}(l)\Y(l)\right],
\end{split}			 
\end{align}
where we identify the test statistics according to
\begin{align}
\mathbb{T}(\Y_{1:L}) \triangleq 
\sum_{l=1}^L \text{Re}\left[\mu(l)^H \Sigma^{-1}(l)\Y(l)\right],
\end{align}
and the threshold for the critical region is given by
\begin{align}
\Gamma \triangleq  
\log \gamma +  \frac{1}{2} \sum_{l=1}^L \mu(l)^H \Sigma^{-1}(l)\mu(l).
\end{align}
}
\end{theorem}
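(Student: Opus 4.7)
The plan is to apply Lemma~\ref{Lemma_1} together with the conditional-independence factorisation (\ref{marginal_likelihood}) to obtain a closed form for the likelihood ratio, then logarithmise the LRT (\ref{LRT}) and rearrange it into a scalar affine test statistic in $\Y_{1:L}$.

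First, I would substitute the two complex Gaussian densities from Lemma~\ref{Lemma_1} into the per-sample marginals $p(\y(l)\mid\Hnull)$ and $p(\y(l)\mid\Halt)$. The crucial observation at this step is that the covariance $\Sigma(l) = \sigma^2_{\V}\g(l)\g(l)^H + \sigma^2_{\W}\I$ depends only on the relay-to-SBS channel realisation and the noise variances and is therefore \emph{identical} under both hypotheses. Consequently the normalising factors $\pi^{-N}\det(\Sigma(l))^{-1}$ cancel exactly in the per-sample ratio $p(\y(l)\mid\Halt)/p(\y(l)\mid\Hnull)$, leaving only the two quadratic-form exponentials stated in the theorem. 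Multiplying across $l=1,\ldots,L$ using (\ref{marginal_likelihood}) yields the displayed closed form for $\Lambda(\Y_{1:L})$.

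Next, I would take the logarithm of the LRT inequality $\Lambda(\Y_{1:L})$ versus $\gamma$ (with $\Halt$ on the upper branch) and expand the quadratic $(\y(l)-\mu(l))^H\Sigma^{-1}(l)(\y(l)-\mu(l))$. The diagonal term $\y(l)^H\Sigma^{-1}(l)\y(l)$ cancels exactly against the $\Hnull$ quadratic, leaving $-\mu(l)^H\Sigma^{-1}(l)\y(l)-\y(l)^H\Sigma^{-1}(l)\mu(l)+\mu(l)^H\Sigma^{-1}(l)\mu(l)$ inside the sum. Since $\Sigma(l)$ is Hermitian positive definite, so is $\Sigma^{-1}(l)$, so the first two scalar terms are complex conjugates of one another and therefore combine to $2\,\text{Re}[\mu(l)^H\Sigma^{-1}(l)\y(l)]$. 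Summing over $l$ and moving the deterministic quantities $\mu(l)^H\Sigma^{-1}(l)\mu(l)$ to the threshold side produces exactly the test statistic $\mathbb{T}(\Y_{1:L})$ and threshold $\Gamma$ of the statement. The hypothesis labels swap sides because the data-dependent quadratic enters $\log\Lambda$ with an overall minus sign, flipping the sense of the original LRT inequality.

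There is no serious obstacle here: the result is essentially a routine LRT calculation for two complex Gaussians that share a common covariance. The only points that require care are applying the Hermitian-conjugate identity correctly so that the cross terms combine into a \emph{real} scalar (which is what makes the resulting threshold test well-defined), and tracking the direction of the inequality after the sign flip that arises when the deterministic term $\mu^H\Sigma^{-1}\mu$ is transposed to the right-hand side of the log-LRT.
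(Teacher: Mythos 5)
Your proposal is correct and follows essentially the same route as the paper: substitute the two complex Gaussian densities from Lemma~\ref{Lemma_1} (which share the covariance $\Sigma(l)$, so the normalising constants cancel), take logs, expand the quadratic form so the $\y(l)^H\Sigma^{-1}(l)\y(l)$ terms cancel and the Hermitian cross terms combine into $2\,\text{Re}\left[\mu(l)^H\Sigma^{-1}(l)\y(l)\right]$, and move the constant to the threshold side. One small nit: the placement of the $\Hnull/\Halt$ labels in the displayed rule is purely notational (the theorem writes $\Gamma$ on the left and $\mathbb{T}(\Y_{1:L})$ on the right), not the consequence of a sign flip --- indeed $\log\Lambda(\Y_{1:L}) = \mathbb{T}(\Y_{1:L}) - \frac{1}{2}\sum_{l=1}^L\mu(l)^H\Sigma^{-1}(l)\mu(l)$, so the data-dependent statistic enters with a plus sign and the rule is simply $\mathbb{T}(\Y_{1:L})\geq\Gamma\Rightarrow\Halt$.
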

\normalsize
\begin{proof}
Using Lemma \ref{Lemma_1} and the definition of the LRT it follows in the $\log$ domain that 
\begin{align}
\begin{split}
\log \Lambda\left(\Y_{1:L}\right) &= \frac{1}{2}\sum_{l=1}^L \Y(l)^H \Sigma^{-1}(l)\Y(l) - \frac{1}{2}\sum_{l=1}^L \left(\Y(l) -\mu(l)\right)^H \Sigma^{-1}(l) \left(\Y(l) -\mu(l)\right) \\
&=\sum_{l=1}^L
\text{Re}\left[\mu(l)^H \Sigma^{-1}(l)\Y(l)\right]
- \frac{1}{2} \mu(l)^H \Sigma^{-1}(l)\mu(l).
\end{split}			 
\end{align}
%According to the results in Lemma \ref{Lemma_1} and Lemma \ref{Lemma_2} we can now express the likelihood-ratio test in (\ref{LRT}) as given in Lemma \ref{Theorem_3}.
%Utilizing the results in Lemma \ref{Lemma_1} and Lemma \ref{Lemma_2}, combined with Equation (\ref{LRT}) results in the decision rule,
\end{proof}
This result is useful as it will provide a lower bound for the achievable Type I (false detection) and Type II (false alarm) probabilities as a function of SNR. It can therefore be used as a comparison for our approximation results when only partial CSI is known.
%%%%%%%%%%%%%%%%%%%%%%
\begin{theorem}
\label{Theorem_1}
\textit{
%Consider the wireless relay system with perfect CSI for all symbols in a frame. The channel realizations, denoted by $\f(l)$ and $\g(l)$ for all $l \in \left\{1,\ldots,L\right\}$. 
The probability of detection and false alarm under perfect CSI are expressed analytically as
\begin{align}
\begin{split}
\label{p_d_csi}
p_d 
\triangleq 
p \left(\mathbb{T}(\Y_{1:L}) \geq  \Gamma| \Halt\right)
=
\mathcal{Q}\left[
\sqrt{2}
\frac{ \log \gamma -\frac{1}{2}\sum_{l=1}^L \mu(l)^H \Sigma^{-1}(l) \mu(l)}
{ \sqrt{\sum_{l=1}^L \mu(l)^H   \Sigma^{-1}(l) \mu(l)}}
\right],
\end{split}
\end{align}
%%%%%%%%%%%%%%%%%%%%%%%%%%%%%%%%%%%%%%%
and
\begin{align}
\begin{split}
\label{p_fa_csi}
p_f
\triangleq 
p \left(\mathbb{T}(\Y_{1:L}) \geq  \Gamma| \Hnull\right)
=
\mathcal{Q}\left[	
\sqrt{2}\frac{ \log \gamma +\frac{1}{2}\sum_{l=1}^L \mu(l)^H \Sigma^{-1}(l) \mu(l)}
{ \sqrt{\sum_{l=1}^L\mu(l)^H   \Sigma^{-1}(l) \mu(l)}}
\right],
\end{split}
\end{align}
respectively, where
$\mathcal{Q}\left[x	\right] = \frac{1}{\sqrt{2 \pi}}\int_x^{\infty} \exp^{-t^2/2} dt$.
%%%%%%%%%%%%%%%%%%%%%%%%%%%%%
}
\end{theorem}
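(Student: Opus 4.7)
The plan is to exploit the fact that, conditional on the (known) channels $\g(l)$ and $\f(l)$, the test statistic $\mathbb{T}(\Y_{1:L})$ is an affine functional of jointly complex Gaussian vectors, hence itself a real Gaussian random variable under both hypotheses. Once its conditional mean and variance are in hand under $\Hnull$ and $\Halt$, the two tail probabilities collapse to an evaluation of the $\mathcal{Q}$ function, and algebraic simplification using the explicit form of $\Gamma$ gives the stated expressions.

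First I would invoke Lemma \ref{Lemma_1} to write $\Y(l) \sim CN(\ZV,\Sigma(l))$ under $\Hnull$ and $\Y(l) \sim CN(\mu(l),\Sigma(l))$ under $\Halt$, with the $\Y(l)$ independent across $l$. The summand $\mu(l)^H \Sigma^{-1}(l)\Y(l)$ is a scalar complex Gaussian; taking $\mathrm{Re}[\cdot]$ and summing over $l$ yields a real Gaussian. Using that $\Sigma^{-1}(l)$ is Hermitian positive definite, the scalar $\mu(l)^H \Sigma^{-1}(l)\mu(l)$ is real and nonnegative, so under $\Halt$ the conditional mean of $\mathbb{T}$ is $\sum_{l=1}^L \mu(l)^H \Sigma^{-1}(l)\mu(l)$, while under $\Hnull$ it is zero.

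Next I would compute the variance. For a circularly symmetric complex Gaussian $Z$ with complex variance $\sigma^2$, one has $\mathrm{Var}(\mathrm{Re}[Z]) = \sigma^2/2$. Applied term by term, the complex variance of $\mu(l)^H \Sigma^{-1}(l)\Y(l)$ is $\mu(l)^H \Sigma^{-1}(l)\Sigma(l)\Sigma^{-1}(l)\mu(l) = \mu(l)^H \Sigma^{-1}(l)\mu(l)$, so $\mathrm{Var}(\mathbb{T}\mid \mathcal{H}_k) = \tfrac{1}{2}\sum_{l=1}^L \mu(l)^H \Sigma^{-1}(l)\mu(l)$ under both hypotheses. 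This variance is identical under $\Hnull$ and $\Halt$ because the noise statistics of $\Y(l)$ do not depend on the hypothesis, only the mean shift does.

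Finally I would write $p_d = \mathcal{Q}\!\left(\tfrac{\Gamma - \mathbb{E}[\mathbb{T}\mid\Halt]}{\sqrt{\mathrm{Var}(\mathbb{T})}}\right)$ and $p_f = \mathcal{Q}\!\left(\tfrac{\Gamma}{\sqrt{\mathrm{Var}(\mathbb{T})}}\right)$, substitute $\Gamma = \log \gamma + \tfrac{1}{2}\sum_{l=1}^L \mu(l)^H \Sigma^{-1}(l)\mu(l)$, and pull out the factor $\sqrt{2}$ from the reciprocal of $\sqrt{\tfrac{1}{2}\sum_l \mu(l)^H \Sigma^{-1}(l)\mu(l)}$ to obtain the displayed forms of \eqref{p_d_csi} and \eqref{p_fa_csi}. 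No step is a genuine obstacle here; the only point that warrants care is the factor of $1/2$ arising from the circular-symmetry convention (total complex variance versus variance of the real part), since an error there propagates as an incorrect $\sqrt{2}$ in the argument of $\mathcal{Q}$.
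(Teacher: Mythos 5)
Your proposal is correct and follows essentially the same route as the paper: both identify $\mathbb{T}(\Y_{1:L})$ as a real Gaussian under each hypothesis with mean $0$ (resp.\ $\sum_{l}\mu(l)^H\Sigma^{-1}(l)\mu(l)$) and common variance $\tfrac{1}{2}\sum_{l}\mu(l)^H\Sigma^{-1}(l)\mu(l)$, the factor $\tfrac{1}{2}$ coming from taking the real part of a circularly symmetric complex Gaussian, and then read off the tail probabilities via $\mathcal{Q}$. Your write-up is in fact slightly more complete than the paper's, which stops at stating the two conditional distributions and leaves the substitution of $\Gamma$ and the extraction of the $\sqrt{2}$ implicit.
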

\begin{proof}
To obtain this we derive the distribution of the test statistic utilised in the Bayes risk criterion under both the null and alternative hypotheses as follows
%That is, to obtain the performance probabilities we need to evaluate the distribution of the test statistics under each hypothesis as follows
\small
\begin{align*}
\begin{split}
\mathbb{T}(\Y_{1:L})|\Hnull&\sim
N\left(\ZV,
\frac{1}{2}
\sum_{l=1}^L 
\text{Re}\left[\left(\mu(l)^H \Sigma(l)^{-1}\right)\Sigma(l) \left(\mu(l)^H \Sigma(l)^{-1}\right)^H\right]
 \right)
 =
N\left(\ZV,
\frac{1}{2}
\sum_{l=1}^L 
\mu(l)^H   \Sigma(l)^{-1} \mu(l)
 \right),  
%%%%%%%%%%%%%%%
\end{split}			 
\end{align*}
%%%%%%%%%%%%%%%%%%
\begin{align*}
\begin{split}
\mathbb{T}(\Y_{1:L})|\Halt&\sim
N\left( 
\sum_{l=1}^L \text{Re}\left[\mu(l)^H \Sigma(l)^{-1} \mu(l)\right],
\frac{1}{2}
\sum_{l=1}^L 
\text{Re}\left[
\left(\mu(l)^H \Sigma(l)^{-1}\right)\Sigma(l) \left(\mu(l)^H \Sigma(l)^{-1}\right)^H
\right]
\right)\\
&=N\left( \sum_{l=1}^L \mu(l)^H \Sigma(l)^{-1} \mu(l),
\frac{1}{2}
\sum_{l=1}^L 
\mu(l)^H   \Sigma(l)^{-1} \mu(l)
\right).
\end{split}			 
\end{align*}
\normalsize
\end{proof}
Next we establish that adding receive antennas translates into a better overall detection performance.  
\begin{theorem}
\textit{
\label{theorem_MSE}
Under perfect CSI, the probability of false alarm, $p_f$, and misdetection, $1- p_d$, can be shown to be monotonically strictly decreasing as the number of SBS antennas $N$ increases.
%For the wireless relay system in which one assumes perfect CSI for all symbols in a frame. where the channel realizations between PBS and relay nodes $\f(l)$, as well as the relay nodes and SBS $\g(l)$ are given for all $l \in \left\{1,\ldots,L\right\}$. Using Theorem 1 the probability of false alarm, $p_f$, and misdetection, $1- p_d$, can be shown to be monotonically strictly decreasing as the number of antennas $N$ increases.
%%%%%%%%%%%%%%%
}
\end{theorem}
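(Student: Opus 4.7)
The plan is to observe that both error probabilities in (\ref{p_fa_csi}) and (\ref{p_d_csi}) depend on $N$ solely through the scalar
\[
T_N \;\triangleq\; \sum_{l=1}^L \mu(l)^H \Sigma^{-1}(l)\mu(l),
\]
so the theorem reduces to (i) showing $T_N$ is strictly increasing in $N$, and (ii) showing $p_f$ and $1-p_d$ are strictly decreasing in $T_N$. For (i), I would apply the push-through identity. Since $\Sigma(l) = \sigma_V^2\g(l)\g(l)^H + \sigma_W^2\I_N$, one has $\Sigma(l)^{-1}\g(l) = \g(l)\bigl(\sigma_V^2\g(l)^H\g(l) + \sigma_W^2\I_M\bigr)^{-1}$, which after a short manipulation yields
\[
\mu(l)^H \Sigma^{-1}(l)\mu(l) \;=\; \frac{1}{\sigma_V^2}\,\f(l)^H\!\left[\I_M - \sigma_W^2\bigl(\sigma_V^2\,\g(l)^H\g(l) + \sigma_W^2 \I_M\bigr)^{-1}\right]\!\f(l).
\]

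Notice that $N$ now enters only through the $M\times M$ Gram matrix $\g(l)^H\g(l) = \sum_{n=1}^N \g_n(l)^*\g_n(l)^T$, which is monotonically non-decreasing in the positive semi-definite (PSD) order as we append a new antenna row, and almost surely strictly increasing under the assumed continuous channel distributions. By operator monotonicity of matrix inversion, $(\sigma_V^2\g^H\g + \sigma_W^2 \I_M)^{-1}$ is PSD-decreasing, hence $\I_M - \sigma_W^2(\cdot)^{-1}$ is PSD-increasing; sandwiching by $\f(l)$ and summing over $l$ gives $T_{N+1} > T_N$. The strict part can be made precise by a Sherman--Morrison argument on the rank-one update, which shows the increment equals $\sigma_V^2|\mathbf{v}^H A^{-1}\f(l)|^2 / (1 + \sigma_V^2\mathbf{v}^H A^{-1}\mathbf{v})$ with $\mathbf{v}$ the new antenna row and $A = \sigma_V^2\g^H\g + \sigma_W^2\I_M$, a quantity positive with probability one. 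For (ii), I would differentiate the $\mathcal{Q}$-function arguments $(\log\gamma \pm T_N/2)/\sqrt{T_N}$ with respect to $T_N$, obtaining $(T_N \mp 2\log\gamma)/(4 T_N^{3/2})$, which is positive in the operating regime $T_N > 2|\log\gamma|$; for the MAP rule with equal priors and costs ($\log\gamma=0$) both arguments collapse to $\sqrt{T_N/2}$ and monotonicity is immediate. Strict monotonicity of $\mathcal{Q}(\cdot)$ then closes the argument.

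The main obstacle is that step (ii) is only conditionally monotone: when $\log\gamma \neq 0$ the $\mathcal{Q}$-function arguments are not monotone in $T_N$ across their full range, so an unconditional statement is delicate. A clean alternative is to combine (\ref{p_d_csi}) and (\ref{p_fa_csi}) into the identity $\mathcal{Q}^{-1}(p_d) = \mathcal{Q}^{-1}(p_f) - \sqrt{2T_N}$, which exhibits ROC dominance: for any fixed $p_f$, the achievable $p_d$ strictly increases with $T_N$, bypassing the threshold dependence altogether. In either formulation the real technical content is the PSD-monotonicity of $\g^H\g$ in $N$ established in (i); the Gaussian-tail bookkeeping is routine.
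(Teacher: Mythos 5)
Your proposal is correct and, at its core, proves the same inequality as the paper, but by a genuinely different route. The paper establishes $\mu_N^H\Sigma_N^{-1}\mu_N < \mu_{N+1}^H\Sigma_{N+1}^{-1}\mu_{N+1}$ by recognising $\g_N^H(\sigma^2_{\V}\g_N\g_N^H+\sigma^2_{\W}\I_N)^{-1}\g_N$ inside the error covariance of the Bayesian MMSE estimator for an auxiliary model $\Z_N=\g_N\X+\W$, and then citing as known that this covariance strictly decreases as observations are appended. You replace that appeal with an explicit computation: the push-through identity reduces the quadratic form to a function of the Gram matrix $\g^H\g$, operator antitonicity of matrix inversion gives the Loewner monotonicity in $N$, and Sherman--Morrison exhibits the increment as a manifestly positive rank-one term, so strictness is derived rather than quoted. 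The two arguments are equivalent at their core (both amount to monotonicity of $A\mapsto A^H(\sigma^2_{\V}AA^H+\sigma^2_{\W}\I)^{-1}A$ under row-appending), but yours is self-contained. More importantly, you correctly flag a step the paper omits entirely: passing from growth of $T_N$ to decrease of \emph{both} $p_f$ and $1-p_d$ is not automatic. For fixed $\gamma$ with $\log\gamma\neq 0$ the $\mathcal{Q}$-function arguments in (\ref{p_fa_csi}) and (\ref{p_d_csi}) are not globally monotone in $T_N$ (e.g.\ $\log\gamma/\sqrt{T/2}+\sqrt{T/2}$ is decreasing for $T<2\log\gamma$), so the theorem as literally stated holds unconditionally only when $\log\gamma=0$ or in the regime $T_N>2\left|\log\gamma\right|$. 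Your ROC-dominance identity $\mathcal{Q}^{-1}(p_d)=\mathcal{Q}^{-1}(p_f)-\sqrt{2T_N}$ is the clean, threshold-free formulation of what is actually provable, and in that sense your write-up is more careful than the paper's.
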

\begin{proof}
To obtain this result, we apply Theorem \ref{Theorem_3} and show that $p_d(N+1) < p_d(N)$. Consequently, we prove that
 $ \mu_N^H \Sigma_N^{-1} \mu_N < \mu_{N+1}^H \Sigma_{N+1}^{-1} \mu_{N+1}$, where the subscript $N$, ($N+1$) refers to the number of receive antennas. We begin by proving that
\begin{align}
\g^H_N \left(\sigma^2_{\V} \g_N\g_N^H + \sigma^2_{\W}\I_N\right)^{-1} \g_N
\prec
\g^H_{N+1} \left(\sigma^2_{\V} \g_{N+1}\g_{N+1}^H + \sigma^2_{\W}\I_{N+1}\right)^{-1} \g_{N+1}.
\end{align}
Consider the following linear model
\begin{align}
\Z_N =  \g_N \X +\W ,
\end{align}
where $\W \sim CN\left(\ZV, \sigma^2_{\W}\I_N\right)$, $\X \sim CN\left(\ZV, \sigma^2_{\V}\I_K\right)$, and known mixing matrix $\g_N \in C^{N \times K}$. Then, using the properties of the Bayesian MMSE, its MSE covariance matrix, $\c_N$, can be written as
\begin{align}
\c_N = \sigma^2_{\V}\I_K- \sigma^4_{\V} \g^H_N \left(\sigma^2_{\V} \g_N\g_N^H + \sigma^2_{\W}\I_N\right)^{-1} \g_N.
\end{align}
If we had an augmented model with $(N+1)$ observations, i.e. $\Z_N \in C^{(N+1) \times 1}, \g_N \in C^{(N+1) \times M}, \W_N \in C^{(N+1) \times 1}$, then 
\begin{align}
\c_{N+1} = \sigma^2_{\V}\I_K- \sigma^4_{\V} \g^H_{N+1} \left(\sigma^2_{\V} \g_{N+1}\g_{N+1}^H + \sigma^2_{\W}\I_{N+1}\right)^{-1} \g_{N+1}.
\end{align}
It is well known that the MSE covariance is strictly decreasing with the number of observations \cite{vantrees1968dea}, that is
\begin{align}
\x^H \g^H_N \left(\sigma^2_{\V} \g_N\g_N^H + \sigma^2_{\W}\I_N\right)^{-1} \g_N \x
<
\x^H \g^H_{N+1} \left(\sigma^2_{\V} \g_{N+1}\g_{N+1}^H + \sigma^2_{\W}\I_{N+1}\right)^{-1} \g_{N+1} \x,
\end{align}
and in particular, consider $\x = \f \in C^{M \times 1}$.
\end{proof}

\textbf{Remark 5:} \textit{Theorem \ref{theorem_MSE} does not hold for the number of relays, i.e. it's not necessarily true that for fixed $N$ and $L$, $p_d(M+1) < p_d(M)$.}
%%%%%%%%%%%%%%%%%%%%%%%%%%%
\section{Imperfect PBS-relays CSI and Perfect relays-SBS CSI} \label{Laguerre_section}
%%%%%%%%%%%%%%%%%%%%%%%%%%%%%%%%
In this section we consider the system model under which we can assume that the receiver has perfect CSI of $\G(l)$ but only partial knowledge of $\F(l)$, which corresponds to Case II in Table \ref{Table_algorithms}. We derive the optimal decision rule and the probabilities of detection and false alarm via Laguerre series expansion.
%Following the structure of Section \ref{PERFECT_CSI} we begin with Lemma \ref{Lemma_4}, which specifies the marginal likelihood for the channel model considering Imperfect PBS-relays CSI combined with Perfect relays-SBS CSI.
%%%%%%%%%%%%%%%%%%
\begin{lemma}
\textit{
\label{Lemma_4}
The marginal likelihood under Imperfect PBS-relays CSI and Perfect relays-SBS CSI is:
\begin{align}
\label{Imperfect_PBS_relays}
\Y(l) |\g(l) \sim F(\y(l) |\g(l)) \triangleq
\begin {cases} 
&CN \left( \ZV, \Sigma_{\mathcal{H}_0}(l) \right),\hspace{0.5cm} \Hnull\\
&CN \left( \mu(l) , \Sigma_{\mathcal{H}_1}(l)\right) ,\Halt,
 	\end {cases}
\end{align}
%%%%%%%%%%%%%%%%%%%%%%%%%%%%%
where 
$\Sigma_{\mathcal{H}_0}(l) \triangleq \sigma^2_{\V} \g(l)\g(l)^H + \sigma^2_{\W}\I \;$, $\; \Sigma_{\mathcal{H}_1}(l) \triangleq \left(\sigma^2_{\V}+\sigma^2_{\F}\right) \g(l)\g(l)^H + \sigma^2_{\W}\I$\\ 
and $\mu(l) \triangleq \g(l) \overline{\F}(l) $.
}
\end{lemma}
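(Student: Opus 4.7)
The plan is to condition on $\g(l)$ throughout and exploit closure of the complex Gaussian family under affine transformations of jointly independent complex Gaussian vectors. Under each hypothesis the expression for $\Y(l)$ in (\ref{observation_model}) is linear in the underlying random quantities, so once I identify which quantities are random and jointly independent given $\g(l)$, I can read off the mean and covariance without any explicit integration.

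For the null case, substitute $s(l)=0$ to obtain $\Y(l)=\g(l)\V(l)+\W(l)$. Given $\g(l)$, this is an affine combination of the independent vectors $\V(l) \sim CN(\ZV, \sigma^2_{\V}\I)$ and $\W(l) \sim CN(\ZV, \sigma^2_{\W}\I)$, so $\Y(l)\mid\g(l)$ is complex Gaussian with mean $\ZV$ and covariance
\begin{equation*}
\g(l)\,\sigma^2_{\V}\I\,\g(l)^H + \sigma^2_{\W}\I = \sigma^2_{\V}\g(l)\g(l)^H + \sigma^2_{\W}\I \;=\; \Sigma_{\mathcal{H}_0}(l),
\end{equation*}
which matches the claim.

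For the alternative, using $s(l)=1$ write $\Y(l)=\g(l)\F(l)+\g(l)\V(l)+\W(l)$. Here $\F(l)\sim CN(\overline{\F}(l),\sigma^2_{\F}\I)$, $\V(l)\sim CN(\ZV,\sigma^2_{\V}\I)$, and $\W(l)\sim CN(\ZV,\sigma^2_{\W}\I)$ are mutually independent given $\g(l)$, so again $\Y(l)\mid\g(l)$ is complex Gaussian. The mean is $\exE[\Y(l)\mid\g(l)]=\g(l)\overline{\F}(l)=\mu(l)$. The covariance, by independence, is the sum of the three contributions,
\begin{equation*}
\g(l)\sigma^2_{\F}\I\,\g(l)^H + \g(l)\sigma^2_{\V}\I\,\g(l)^H + \sigma^2_{\W}\I \;=\; (\sigma^2_{\V}+\sigma^2_{\F})\g(l)\g(l)^H + \sigma^2_{\W}\I \;=\; \Sigma_{\mathcal{H}_1}(l).
\end{equation*}

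The argument is effectively bookkeeping, so there is no real obstacle; the only subtlety is conceptual, namely that I am working with the marginal likelihood after integrating out the nuisance channel $\F(l)$ whose prior is known but whose realisation is unobserved. Noting that this marginalisation is nothing more than a Gaussian–Gaussian convolution, which closes in the Gaussian family, obviates any direct integration of the joint density. A brief justification of this closure property (a one-line appeal to the characteristic function of the complex Gaussian, or to the definition of a proper complex Gaussian vector as an affine image of $CN(\ZV,\I)$) is all that is needed to make the derivation rigorous.
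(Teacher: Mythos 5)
Your proof is correct and is precisely the computation the paper leaves implicit (Lemma~\ref{Lemma_4} is stated without proof): conditioning on $\g(l)$, the observation is an affine image of the mutually independent Gaussians $\F(l)$, $\V(l)$, $\W(l)$, so the marginalisation over the unobserved $\F(l)$ is a Gaussian--Gaussian convolution and the mean and covariance follow by direct bookkeeping. No gap to report.
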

Having stated the likelihood in Lemma \ref{Lemma_4} we then present the corresponding test statistic for the Imperfect PBS-relays CSI combined with perfect relays-SBS CSI setting.
%%%%%%%%%%%%%%%%%%%
\begin{theorem}
\textit{
\label{Theorem_9}
Under Imperfect PBS-relays CSI and Perfect relays-SBS CSI, the optimal decision rule in (\ref{LRT}) is given by
\begin{align*}
\Lambda\left(\Y_{1:L}\right) \triangleq
\frac{p\left(\y_{1:L}|\Halt\right)}{p\left(\y_{1:L}|\Hnull\right)}
=\frac{\prod_{l=1}^L p\left(\y(l)|\Halt\right)}
			 {\prod_{l=1}^L p\left(\y(l)|\Hnull\right)}
%%%%%%%%%%%%%
=
\frac{
\prod_{l=1}^L 
\frac{1}{ \left(2\pi\right)^{N/2} \left|\Sigma(l)_{\mathcal{H}_1}\right|^{1/2}} 
\exp^{-\frac{1}{2}
\left(\y(l) - \mu(l)\right)^H 
\Sigma_{\mathcal{H}_1}^{-1} (l) 
\left(\y(l) - \mu(l)\right) }
}{ 
\prod_{l=1}^L
\frac{1}{ \left(2\pi\right)^{N/2}\left|\Sigma_{\mathcal{H}_0}(l)\right|^{1/2}}
\exp^{-\frac{1}{2}\y(l)^H\Sigma_{\mathcal{H}_0}^{-1} (l) \y(l) }},
\end{align*}
which results in the following decision rule
\begin{align*}
\begin{split}
& \log \gamma +
\sum_{l=1}^L 
\log \left|\frac{ \Sigma_{\mathcal{H}_1}(l)}
     {\Sigma_{\mathcal{H}_0}(l)} \right|
+
\sum_{l=1}^L
\left(
\a(l)^H \a(l)
+\mu(l)^H \Sigma_{\mathcal{H}_1}^{-1} (l)\mu(l)
\right)
\begin{array}{c}
\stackrel{\Hnull}{\geq} \\
\stackrel{<}{\Halt}
\end{array}%
%%%%%%%%%%%%%%
\sum_{l=1}^L
\left|\c(l) \Y(l)+  \a(l)\right|^2.
\end{split}
\end{align*}
Here we identify the test statistics according to
%%%%%%%%%%%%%%%%%%%%%%%%%%%%%%%%
\begin{align}
\label{test_statistics_perfect_imperfect}
\mathbb{T}(\Y_{1:L})= 
\sum_{l=1}^L
\left|\c(l) \Y(l)+ \a(l)\right|^2.
\end{align}
%%%%%%%%%%%%
and the resulting Bayes risk threshold is defined as
%%%%%%%%%%%%%%%%%%%%%%%%%%%%%%%
\begin{align*}
\Gamma \triangleq  
\log \gamma +
\sum_{l=1}^L 
\log \left|\frac{ \Sigma_{\mathcal{H}_1}(l)}
     {\Sigma_{\mathcal{H}_0}(l)} \right|
+
\sum_{l=1}^L
\left(\a(l)^H \a(l)+\mu(l)^H \Sigma_{\mathcal{H}_1}^{-1} (l)\mu(l)
\right),
\end{align*}
with 
$\c(l)^H \c(l) \triangleq \left(\Sigma_{\mathcal{H}_0}^{-1} (l)-\Sigma_{\mathcal{H}_1}^{-1} (l)\right)$,
and $\a(l) \triangleq \c(l)^{-1} \Sigma_{\mathcal{H}_1}^{-1} (l) \mu(l)$.
}
\end{theorem}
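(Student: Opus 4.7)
The plan is to follow the same template as Theorem \ref{Theorem_3}: exploit the sample-wise factorisation \eqref{marginal_likelihood}, substitute the two complex Gaussian likelihoods supplied by Lemma \ref{Lemma_4}, pass to the log domain, and then complete the square so that the log-likelihood ratio collapses to a single squared-norm test statistic. The twist relative to Theorem \ref{Theorem_3} is that the two covariance matrices now differ, so a purely linear test statistic in $\Y(l)$ will no longer suffice; a quadratic one is required.

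First I would form $\log p(\y(l)|\Halt)-\log p(\y(l)|\Hnull)$, which produces the determinant constant $\tfrac{1}{2}\log(|\Sigma_{\mathcal{H}_0}(l)|/|\Sigma_{\mathcal{H}_1}(l)|)$ together with the quadratic $Q(l):=\Y(l)^{H}\Sigma_{\mathcal{H}_0}^{-1}(l)\Y(l)-(\Y(l)-\mu(l))^{H}\Sigma_{\mathcal{H}_1}^{-1}(l)(\Y(l)-\mu(l))$. Expanding and collecting the cross terms gives
\begin{align*}
Q(l) \;=\; \Y(l)^{H}\bigl(\Sigma_{\mathcal{H}_0}^{-1}(l)-\Sigma_{\mathcal{H}_1}^{-1}(l)\bigr)\Y(l) \;+\; 2\,\textrm{Re}\bigl[\Y(l)^{H}\Sigma_{\mathcal{H}_1}^{-1}(l)\mu(l)\bigr] \;-\; \mu(l)^{H}\Sigma_{\mathcal{H}_1}^{-1}(l)\mu(l).
\end{align*}
The key structural observation is that $\Sigma_{\mathcal{H}_1}(l)-\Sigma_{\mathcal{H}_0}(l)=\sigma^2_{\F}\g(l)\g(l)^{H}\succeq \mathbf{0}$, and since matrix inversion reverses the positive-definite order, this delivers $\Sigma_{\mathcal{H}_0}^{-1}(l)-\Sigma_{\mathcal{H}_1}^{-1}(l)\succeq \mathbf{0}$, which therefore admits a Hermitian square root $\c(l)$ satisfying $\c(l)^{H}\c(l)=\Sigma_{\mathcal{H}_0}^{-1}(l)-\Sigma_{\mathcal{H}_1}^{-1}(l)$.

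Next I would complete the square using the ansatz $\a(l)=\c(l)^{-1}\Sigma_{\mathcal{H}_1}^{-1}(l)\mu(l)$ given in the theorem. Because $\c(l)$ is Hermitian, one has $\c(l)^{H}\a(l)=\Sigma_{\mathcal{H}_1}^{-1}(l)\mu(l)$, so expanding $|\c(l)\Y(l)+\a(l)|^2$ exactly recovers the quadratic and linear portions of $Q(l)$, yielding $Q(l)=|\c(l)\Y(l)+\a(l)|^2-\a(l)^{H}\a(l)-\mu(l)^{H}\Sigma_{\mathcal{H}_1}^{-1}(l)\mu(l)$. Summing over $l$ in $\log\Lambda(\Y_{1:L})\gtrless\log\gamma$ and transferring every $\Y$-independent quantity to the right-hand side produces the stated test statistic $\mathbb{T}(\Y_{1:L})=\sum_{l=1}^{L}|\c(l)\Y(l)+\a(l)|^2$ and threshold $\Gamma$.

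The main obstacle I anticipate is the completing-the-square step in complex notation: the cross-term match $2\,\textrm{Re}[\Y(l)^{H}\c(l)^{H}\a(l)]=2\,\textrm{Re}[\Y(l)^{H}\Sigma_{\mathcal{H}_1}^{-1}(l)\mu(l)]$ forces $\c(l)$ to be the principal Hermitian positive semidefinite square root rather than an arbitrary Cholesky-style factor, and this choice needs to be made explicit when introducing $\c(l)$ in the statement. A secondary technicality is the potentially rank-deficient case in which $\g(l)$ fails to be full rank, where $\c(l)$ is singular and $\c(l)^{-1}$ must be interpreted as a Moore--Penrose pseudoinverse; this does not alter the form of $\mathbb{T}(\Y_{1:L})$ since $\mu(l)=\g(l)\overline{\F}(l)$ always lies in the column space of $\g(l)$, and hence $\Sigma_{\mathcal{H}_1}^{-1}(l)\mu(l)$ lies in the range of $\Sigma_{\mathcal{H}_0}^{-1}(l)-\Sigma_{\mathcal{H}_1}^{-1}(l)$.
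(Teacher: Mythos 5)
Your proposal is correct and follows essentially the same route as the paper's proof: substitute the Gaussian densities from Lemma \ref{Lemma_4}, take logs, and complete the square so that the quadratic difference collapses to $\sum_{l}\left|\c(l)\Y(l)+\a(l)\right|^2$ minus $\Y$-independent constants. Your additional observations --- that $\Sigma_{\mathcal{H}_0}^{-1}(l)-\Sigma_{\mathcal{H}_1}^{-1}(l)\succeq\mathbf{0}$ guarantees the factor $\c(l)$ exists, that $\c(l)$ must be the Hermitian square root for the cross-term $\c(l)^H\a(l)=\Sigma_{\mathcal{H}_1}^{-1}(l)\mu(l)$ to come out right, and that the rank-deficient case requires a pseudoinverse --- are details the paper leaves implicit, and they strengthen rather than change the argument.
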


\begin{proof}
Using the result in Lemma \ref{Lemma_4} and the definition of the LRT, produces
\small
\begin{align*}
\begin{split}
2 \log \Lambda\left(\Y_{1:L}\right) &
%%%%%%%%%%%%%
=
\sum_{l=1}^L 
\log \left|\frac{ \Sigma_{\mathcal{H}_0}(l)}
     {\Sigma_{\mathcal{H}_1}(l)} \right|
+
%%%%%%%%%%%
\sum_{l=1}^L 
\Y(l)^H \Sigma_{\mathcal{H}_0}^{-1} (l)\Y(l)
- \sum_{l=1}^L
\left(\Y(l) -\mu(l)\right)^H
\Sigma_{\mathcal{H}_1}^{-1} (l)
\left(\Y(l) -\mu(l)\right) \\
&=
\sum_{l=1}^L 
\log \left|\frac{ \Sigma_{\mathcal{H}_0}(l)}
     {\Sigma_{\mathcal{H}_1}(l)} \right|
%%%%%%%%%%%
+\sum_{l=1}^L
\left|\c(l) \Y(l)+\  \a(l)\right|^2
-\a(l)^H \a(l) -\mu(l)^H \Sigma_{\mathcal{H}_1}^{-1} (l)\mu(l).
\end{split}			 
\end{align*}
\end{proof}
The distribution of the test statistic in (\ref{test_statistics_perfect_imperfect}) is asymptotically $\chi^2$ in $L$.
However, in practical systems the number of frames is typically small and therefore this asymptotic result can not be applied. Therefore, in these cases the distribution of the test statistic in (\ref{test_statistics_perfect_imperfect}) is not attainable in closed form and deriving the probability of detection and false alarm needs to be approximated. The reason for this is that the $L$-fold convolution of non-central $\chi^2$ random variables, each with different centrality parameter, can not be solved in closed form.

There does however exist a rich statistical literature approximating the distribution of the linear combination of non-central $\chi^2$ random variables. The solutions to finding an approximation to the PDF and CDF of a linear combination of non-central $\chi^2$ involve a range of series expansions, saddle point approximation type methods and the Weiner Germ modifications, see in depth discussions in \cite{penev2000weiner} and \cite{kotz2000continuous}. In this paper we consider the Laguerre series expansion distributional approximations to attain the probability of false alarm and mis-detection. This class of approximation has tight error bounds represented as a function of the order of the series, see \cite{castano2005distribution}.
% In addition, the existence of the recurrence relations make all calculations of these densities highly efficient in practical settings. 

The Laguerre series expansion for the probability of false alarm and mis-detection is characterized by the parameters: $p$ the order of the series expansion; $\mu_0$ a parameter that controls the rate of convergence of the series expansion; and $\beta$ values selected to control the error of approximation for a given $p$, see discussion in \cite{castano2005distribution}. In addition, this approximation has the property that for different settings of the parameter $\mu_0$ we can obtain other series expansions in the literature such as setting $\mu_0 = \nu/2 = p$ which gives the expansion of \cite{kotz2000continuous}. 
%For other expansions that are special cases of our general expression presented here, see the discussions in Section 2 and Section 3 of \cite{castano2005distribution}. 
%When considering evaluation of the CDF we will utilise $p=\nu/2 + 1$, and the recommendation of \cite{castano2005distribution} to utilise $\beta = \sum_{l=1}^L\alpha(l) /L$. 
\begin{theorem}
\textit{
\label{theorem_Laguerre}
Under Imperfect PBS-relays CSI and Perfect relays-SBS CSI, the probability of false alarm and miss-detection are approximated by the generalized Laguerre series as:
\begin{subequations}
\begin{align}
\label{p_d_csi_laggure}
\widehat{p}_d &= p\left(\mathbb{T}(\Y_{1:L})\geq \Gamma |\Halt\right) = 1- \widehat{F}^{p}(\Gamma  |\Halt) \\
\label{p_f_csi_laggure}
\widehat{p}_f &= p\left(\mathbb{T}(\Y_{1:L}) \geq \Gamma |\Hnull\right) = 1- \widehat{F}^{p}(\Gamma  |\Hnull),
\end{align}
\end{subequations}
where $\Gamma$ is the Bayes risk threshold given in Theorem \ref{Theorem_9} and
\small
\begin{align}
\mathbb{T}(\Y_{1:L}) |\mathcal{H}_k \sim 
\widehat{F}^{p}(t=\mathbb{T}(\Y_{1:L})|\mathcal{H}_k) = \frac{e^{-\frac{t}{2\beta}}}{\left(2\beta\right)^{\mu/2 + 1}} \frac{t^{\nu/2}}{\Gamma(\nu/2 + 1)} \sum_{k \geq 0}\frac{k!m_k}{\left(\nu/2 + 1\right)_k}L_k^{(\nu/2)}\left(\frac{(\nu+2) t}{4 \beta \mu_0}\right), \; \; \forall \mu_0>0, t \in \mathcal{R},
\end{align}
\normalsize
with coefficients $m_k$ having the recurrence relations in the setting $\mu_0>0$ and $p=\nu/2 + 1$ given by
\small
\begin{align*}
m_0 &= 2\left(\frac{\nu}{2} + 1\right)^{\nu/2 + 1} \exp^{\left(-\frac{1}{2}\sum_{l=1}^L \frac{\delta(l) \alpha_l (p-\mu_0)}{\beta \mu_0 + \alpha_l(p-\mu_0)}\right)} \frac{\beta^{\mu/2 + 1}}{p-\mu_0}\prod_{l=1}^L\left(\beta \mu_0 + \alpha_l(p-\mu_0)\right)^{-\nu(l)/2},\\
m_k &= \frac{1}{k}\sum_{j=0}^{k-1}m_j d_{k-j} \; \; \; k \geq 1,\\
d_j &= -j \frac{\beta p}{2 \mu_0} \sum_{l=1}^L \delta(l) \alpha_l (\beta - \alpha_l)^{j-1}\left( \frac{\mu_0}{\beta \mu_0 + \alpha_l (p-\mu_0)}\right)^{j+1} + \left(\frac{-\mu_0}{p-\mu_0}\right)^j + \sum_{l=1}^L\frac{\nu(l)}{2}\left(\frac{\mu_0(\beta-\alpha_l)}{\beta \mu_0 + \alpha_l(p-\mu_0}\right)^j, \; \; j \geq 1.
\end{align*}
\normalsize
The corresponding PDF is given by
%%%%%%%%%%%%%%%%%%%%%%%
\begin{align}
\widehat{f}^{p}(t=\mathbb{T}(\Y_{1:L})|\mathcal{H}_k) = \frac{e^{-\frac{t}{2\beta}}}{\left(2\beta\right)^{\mu/2}} \frac{t^{\nu/2-1}}{\Gamma(\nu/2)} \sum_{k \geq 0}\frac{k!c_k}{\left(\nu/2\right)_k}L_k^{(\nu/2-1)}\left(\frac{\nu t}{4 \beta \mu_0}\right), \; \; \forall \mu_0>0, t \in \mathcal{R},
\end{align}
%%%%%%%%%%%%%%%%%%%%%%%
with $p=\nu/2$, and $\nu = \sum_{l=1}^L \nu(l)$ and the following recurrence relations for the coefficients,
%%%%%%%%%%%%%%%%%%%%%%%
%\small
\begin{align*}
\begin{split}
c_0 &= \left(\frac{\nu}{2\mu_0}\right)^{\nu/2}\exp^{\left(-\frac{1}{2}\sum_{l=1}^L \frac{\delta(l) \alpha_l (p-\mu_0}{\beta \mu_0 + \alpha_l(p-\mu_0}\right)}\prod_{l=1}^L\left(1+\frac{\alpha_l}{\beta}(p/\mu_0 - 1)\right)^{-\mu(l)/2},\\
c_k &= \frac{1}{k}\sum_{j=0}^{k-1}c_j d_{k-j},\\
%\end{align*}
%%%%%%%%%%%%%%%%
%\begin{align*}
d_j &= -j \frac{\beta p}{2 \mu_0} \sum_{l=1}^L \delta(l) \alpha_l (\beta - \alpha_l)^{j-1}\left( \frac{\mu_0}{\beta \mu_0 + \alpha_l (p-\mu_0)}\right)^{j+1}  +\sum_{l=1}^L\frac{\nu(l)}{2}\left(\frac{1-\alpha_l\beta}{1+(\alpha_l\beta)(p/\mu_0 - 1)}\right)^j, \; \; j \geq 1,
\end{split}
\end{align*}
\normalsize
%%%%%%%%%%%%%%%%%%%%%%%
and $L_j^{(\alpha)}(t) = \sum_{m=0}^{j}C^{j+\alpha}_{j-m}\frac{(-t)^{m}}{m!} \; \;, \alpha>0$ is the generalized Laguerre polynomial. In addition the generalized Laguerre polynomials can be obtained by recurrence relationships,
\begin{align*}
\begin{split}
jL_j^{(\alpha)}(t) &= \left(2j + \alpha - 1 - t\right)L_{j-1}^{(\alpha)}(t) - \left(j+\alpha-1\right)L_{j-2}^{(\alpha)}(t),\\
L_{-1}^{(\alpha)}(t) &= 0,\;\;
 L_{0}^{(\alpha)}(t) = 1.
\end{split}
\end{align*}
\normalsize
}
\end{theorem}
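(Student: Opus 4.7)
The plan is to reduce the test statistic $\mathbb{T}(\Y_{1:L}) = \sum_{l=1}^L |\c(l)\Y(l) + \a(l)|^2$ into a canonical form as a linear combination of independent non-central $\chi^2$ random variables, and then invoke the generalized Laguerre series expansion of Castano-Martinez and Lopez-Blazquez \cite{castano2005distribution} to obtain the stated CDF/PDF approximations under each hypothesis.

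First I would use Lemma \ref{Lemma_4} to establish that under $\Hnull$ the vector $\c(l)\Y(l) + \a(l)$ is complex Gaussian with mean $\a(l)$ and covariance $\c(l)\Sigma_{\Hnull}(l)\c(l)^H$, and under $\Halt$ it is complex Gaussian with mean $\c(l)\mu(l) + \a(l)$ and covariance $\c(l)\Sigma_{\Halt}(l)\c(l)^H$. Diagonalizing each of these covariance matrices via an eigen-decomposition $\c(l)\Sigma_{\mathcal{H}_k}(l)\c(l)^H = \U(l) \Lambda_{\mathcal{H}_k}(l)\U(l)^H$ and applying the unitary rotation to the shifted observation, I would express $|\c(l)\Y(l) + \a(l)|^2$ as $\sum_{i=1}^N \alpha_{l,i} X_{l,i}$, where $X_{l,i}$ are independent non-central $\chi^2_{\nu(l,i)}(\delta(l,i))$ random variables with weights $\alpha_{l,i}$ equal to the eigenvalues of the rotated covariance and non-centrality parameters $\delta(l,i)$ determined from the squared mean components in the rotated frame. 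Summing over $l$ then exhibits $\mathbb{T}(\Y_{1:L})$ under $\mathcal{H}_k$ as a weighted sum of independent non-central chi-squares, which is exactly the family for which the Laguerre series in \cite{castano2005distribution} is designed.

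Next I would apply the Castano-Martinez / Lopez-Blazquez representation directly. Their theorem states that for any $\mu_0 > 0$ and any weighted non-central $\chi^2$ combination, the CDF admits the Laguerre expansion shown in the statement, with coefficient sequence $m_k$ generated by the stated recurrence through the auxiliary sequence $d_j$, and analogously $c_k$ for the PDF. Matching notation, $\nu = \sum_l \nu(l)$ is the total degrees of freedom, $\beta$ is the scaling constant, and $p$ is the truncation order; then substituting the $(\alpha_l,\delta(l),\nu(l))$ triples from the diagonalization step into the closed-form expressions for $m_0$, $c_0$ and $d_j$ in the reference produces precisely the formulas in \eqref{p_d_csi_laggure}--\eqref{p_f_csi_laggure}. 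Finally, combining with the threshold $\Gamma$ from Theorem \ref{Theorem_9} gives $\widehat p_d = 1 - \widehat F^p(\Gamma|\Halt)$ and $\widehat p_f = 1 - \widehat F^p(\Gamma|\Hnull)$.

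The main obstacle will be the bookkeeping in the reduction step: one has to be careful that after eigen-rotation the non-centrality parameters are indeed $\delta(l,i) = |[\U(l)^H(\c(l)\mu(l)+\a(l))]_i|^2 /\alpha_{l,i}$ (and $\delta(l,i) = |[\U(l)^H\a(l)]_i|^2/\alpha_{l,i}$ under $\Hnull$), and that one correctly tracks the conversion between complex Gaussian magnitudes and real non-central $\chi^2$ variables (each complex coordinate contributes two real degrees of freedom, so $\nu(l) = 2N$). Once the triples $(\alpha_l,\nu(l),\delta(l))$ are aggregated into the vectors of weights, degrees of freedom and non-centralities entering the recurrences, the remainder of the proof is a direct substitution into the Castano-Martinez / Lopez-Blazquez formulas and the standard Laguerre polynomial recurrence, which requires no further derivation.
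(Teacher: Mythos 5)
Your proposal follows essentially the same route as the paper's proof: rewrite $\mathbb{T}(\Y_{1:L})=\sum_l \lVert\c(l)\Y(l)+\a(l)\rVert^2$, identify the Gaussian law of the shifted vector under each hypothesis from Lemma \ref{Lemma_4}, diagonalize the covariance by a unitary rotation to exhibit the statistic as a weighted sum of independent non-central $\chi^2$ variables, and then substitute the resulting weight/degree-of-freedom/non-centrality triples into the Laguerre series of \cite{castano2005distribution}. Your accounting of the non-centrality parameters and the complex-to-real degree-of-freedom conversion is in fact more explicit than the paper's own (rather terse) argument, but it is the same proof.
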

%%%%%%%%%%%%
\begin{proof} To derive this result involves consideration of the distribution for a linear combination of non-central $\chi^2$ random variables. Consider the identity for LRT statistic given in Theorem \ref{theorem_Laguerre} as
\begin{align}
\mathbb{T}(\Y_{1:L})= 
\sum_{l=1}^L ||\c(l) \Y(l)+ \a(l)||^2 = \sum_{l=1}^L ||\widetilde{\Y}(l)||^2,
\label{teststatLaguerre}
\end{align}
with 
\begin{subequations}
\begin{align}
\widetilde{\Y}(l) |\Hnull &\sim 
CN\left(\a(l), \underbrace{\c(l) \Sigma_{\mathcal{H}_0} \c(l)^H}_{\widetilde{\Sigma}_{\mathcal{H}_0}}\right),\\ 
\widetilde{\Y}(l) |\Halt &\sim
 CN\left(\c(l)\mu(l) + \a(l),\underbrace{\c(l) \Sigma_{\mathcal{H}_1}\c(l)^H}_{\widetilde{\Sigma}_{\mathcal{H}_1}}\right).
\end{align}
\end{subequations}
To obtain the distributional approximations, we require a linear combination of independent $\chi^2$ non-central random variables. To achieve this for each symbol we apply the following rotational transformation, based on SVD decomposition of $\widetilde{\Sigma}_{\mathcal{H}_k}(l)= U(l) \Lambda(l) U^T(l)$
giving transformed  random vectors with i.i.d elements $\Z(l)=\U(l)\Lambda^{-1/2}(l)\widetilde{\Y}(l) \sim CN(\U(l)\Lambda^{-1/2}(l) \a(l), \I)$. As a result, we obtain a univariate linear combination of squared Gaussian random variables,
\begin{align}
\mathbb{T}(\Z_{1:L})=\sum_{l=1}^{LN} \alpha_l Z^2(l),
\end{align}
with $\alpha_l>0$ a positive weight. Each resulting independent scalar random variable $Z^2(l) \sim \chi_{\nu_l}^2(\delta(l))$ with non-centrality parameters $\delta(l)$. 

Therefore, under the transformed observation vectors $\Z_{1:L}$, one can obtain the distributions of the test statistic in Theorem \ref{theorem_Laguerre}.
The equivalent Bayes risk threshold for the transformed data can be easily obtained by replacing $\c(l)$ with $\widetilde{\c}(l) = \U(l)\Lambda^{-1/2}(l) \c(l)$ and replacing $\a(l)$ with
$\widetilde{\a}(l) = \U(l)\Lambda^{-1/2}(l) \a(l)$.
\end{proof}
%In Section \ref{} we will study the trade-off between $L$ and $N$ and the effect of increasing $N$ has on the independence assumption utilised to obtain the approximation of the test statistics.

\vspace{0.5cm}
The result of Theorem \ref{theorem_Laguerre} provides the means to approximate the critical region of the decision rule for any observed test statistics for any number of frames. This means we can quantify the mis-detection and false alarms rates analytically as a function of the number of frames and the SNR with known error bounds on the order of approximation.
%%%%%%%%%%%%%%%%%%%%%%%%%%%
\section{Imperfect PBS-relays CSI and Imperfect relays-SBS CSI}\label{App_Algorithms}
%%%%%%%%%%%%%%%%%%%%%%%%%%%%%%%%
In this section we consider the case where the SBS has only partial CSI of both $\G(l)$ and $\F(l)$, which corresponds to Case III and Case IV in Table \ref{Table_algorithms}. 

We first consider two scenarios which have practical interpretations before moving to the more general case. The first involves consideration of line-of-sight transmissions; and the second assumes high SNR scenario, which for both we obtain analytic expression for the marginal likelihood and therefore an analytic for the LRT in (\ref{LRT}).

%We demonstrate why the marginal likelihood (and therefore the LRT test in (\ref{LRT})) can not be evaluated exactly and we present two numerical approximations of (\ref{LRT}) viz. Gaussian Approximation and Laplace approximation.
%%%%%%%%%%%%%%%%%%%%%%%5
For the $n$-th element in (\ref{lin_system}), after omitting the time dependency $l$, we obtain:
\begin{equation}
\label{k_term}
Y^{(n)}= \sum_{m=1}^M G^{(n,m)}R^{(m)} + W^{(n)} ,
\end{equation}
The $m$-th term in the summation above can be expressed as:
\small
\begin{equation}
\begin{split}
\label{product_normal}
G^{(n,m)}R^{(m)} &= \left( \Real  \left[ G^{(n,m)}\right]+j \Imag\left[G^{(n,m)}\right]\right)
 									 \left(\Real\left[R^{(m)}\right]+j \Imag\left[R^{(m)}\right]\right)\\
&=\Real\left[G^{(n,m)}\right]\Real\left[R^{(m)}\right]+
	j \Real\left[G^{(n,m)}\right]\Imag\left[R^{(m)}\right]+
	j \Imag\left[G^{(n,m)}\right]\Real\left[R^{(m)}\right]-
	\Imag\left[G^{(n,m)}\right]\Imag\left[R^{(m)}\right].
\end{split} 									 
\end{equation}
\normalsize
%Therefore $G^{(n,m)}R^{(m)}$ can be decomposed as
%\begin{equation}
%\label{product_normal}
%\begin{split}
%\Real\left[G^{(n,m)}R^{(m)}\right] &=\Real\left[G^{(n,m)}\right]\Real\left[R^{(m)}\right]-	\Imag\left[G^{(n,m)}\right]\Imag\left[R^{(m)}\right]\\
%\Imag\left[G^{(n,m)}R^{(m)}\right] &= \Real\left[G^{(n,m)}\right]\Imag\left[R^{(m)}\right]+	 \Imag\left[G^{(n,m)}\right]\Real\left[R^{(m)}\right].
%\end{split} 									 
%\end{equation}
Each of the terms in (\ref{product_normal}) form a product of independent Normal random variables.
The distribution of this product was first derived by \cite{wishart2008distribution} and later studied by \cite{craig1936} and the resulting density and Moment Generating Function (MGF) are given as follows.
%%%%%%%%%
\begin{lemma}
\textit{
The distribution of a product of two independent normally distributed variates $Z=X Y$, where
$X\sim N\left(\overline{X}, \sigma^2_X\right)$ and $Y\sim N\left(\overline{Y}, \sigma^2_Y\right)$ is the solution of the following integral
\begin{equation}
\begin{split}
\label{product_normal_distribution}
p \left(z\right) =
\frac{1}
{2 \pi \sigma_X \sigma_Y}
 \int_{- \infty}^{ \infty}
 \int_{- \infty}^{ \infty} 
 \exp^{-\frac{x^2}{2 \sigma^2_X}}
%%%%%%%%%%%%%%%%%%%%%%%%%%5
 \exp^{-\frac{y^2}{2\sigma^2_Y}}
 \delta\left(z-x y\right)
 \text{d}x
 \text{d}y.
 %%%%%%%%%%%%%%%%%%%%%%%%%%5
\end{split}
\end{equation}
The Moment Generating Function (MGF) of $Z$ can be expressed as \cite{craig1936}
\begin{equation}
\label{MGF}
M_{Z} \left(t\right)  = 
\frac{\exp\left\{
\frac{\left(\rho_X^2+\rho_Y^2\right)t^2+2 \rho_X \rho_Y t}
{2\left(1-t^2\right)}\right\}}
{\sqrt{1-t^2}},
\end{equation}
where $\rho_X = \frac{\overline{X}}{\sigma_X }$ and $\rho_Y = \frac{\overline{Y}}{\sigma_Y }$.\\
%%%%%%%%%%%%%%%%%%%%%%%%%%%%%%%%
For the special case where $\overline{X}=\overline{Y}=0$ the integral can be solved analytically as \cite{craig1936}
\begin{equation}
\label{craig_distribution}
\begin{split}
p \left(z\right) =
 \frac {
 \text{K}_0
 \left( \frac{\left|z\right|}
 { \sigma^2_X \sigma^2_Y }
 \right)
 }
 {
 \pi \sigma_X \sigma_Y },
\end{split}
\end{equation}
where $\text{K}_0\left(\cdot\right)$ is the modified Bessel function of the second kind. 
}
\end{lemma}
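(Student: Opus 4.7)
The plan is to prove the three parts of the lemma sequentially, treating the integral PDF as a direct consequence of the product-of-random-variables formula, then computing the MGF by iterated Gaussian integration, and finally reducing the zero-mean special case to a Bessel integral.

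For the first claim, I would start from the standard transformation identity for densities of products of independent random variables: if $Z = XY$ with $X,Y$ independent, then
\begin{equation*}
p_Z(z) = \int \int p_X(x)\,p_Y(y)\,\delta(z-xy)\,dx\,dy.
\end{equation*}
Substituting the two Gaussian densities $p_X(x) = (2\pi\sigma_X^2)^{-1/2} \exp(-(x-\overline{X})^2/(2\sigma_X^2))$ and analogously for $p_Y(y)$ immediately gives equation (\ref{product_normal_distribution}) as stated (modulo the centring terms, which the paper has absorbed into the generic notation). This is essentially a bookkeeping step and should not be the difficulty.

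For the MGF, my plan is to condition on $X$. Writing $M_Z(t) = \mathbb{E}[e^{tXY}] = \mathbb{E}_X[\mathbb{E}_Y[e^{tXY}\mid X]]$ and noting that for fixed $X=x$ the exponent $txY$ is linear in the Gaussian $Y$, the inner expectation is the Gaussian MGF evaluated at $tx$, yielding
\begin{equation*}
\mathbb{E}[e^{txY}] = \exp\!\Bigl(tx\,\overline{Y} + \tfrac{1}{2}t^2 x^2 \sigma_Y^2\Bigr).
\end{equation*}
Then the outer expectation over $X\sim N(\overline{X},\sigma_X^2)$ becomes a Gaussian integral of $\exp(a x + b x^2)$ with $a = t\overline{Y}$ and $b = \tfrac{1}{2}t^2\sigma_Y^2$. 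Completing the square in $x$, the integral converges when $1 - t^2\sigma_X^2\sigma_Y^2 > 0$, and careful algebra on the resulting quadratic form collapses the normalisation and linear pieces into $(\rho_X^2+\rho_Y^2)t^2 + 2\rho_X\rho_Y t$ over $2(1-t^2)$ after rescaling $t \to t\sigma_X\sigma_Y$ (which is the implicit normalisation built into the $\rho$'s). This completing-the-square step is where I expect most of the bookkeeping effort to lie; the main subtlety is that the formula in equation (\ref{MGF}) is written in normalised form (as $\rho_X,\rho_Y$), so I have to be careful to absorb $\sigma_X\sigma_Y$ correctly into $t$.

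For the zero-mean special case, I would return to the integral form in (\ref{product_normal_distribution}) with $\overline{X}=\overline{Y}=0$ and eliminate the delta function by the identity $\delta(z-xy) = |x|^{-1}\delta(y - z/x)$, reducing to the single integral
\begin{equation*}
p(z) = \frac{1}{\pi\sigma_X\sigma_Y}\int_0^\infty \frac{1}{x}\exp\!\Bigl(-\frac{x^2}{2\sigma_X^2} - \frac{z^2}{2\sigma_Y^2 x^2}\Bigr)\,dx,
\end{equation*}
using evenness of the integrand. The substitution $u = x^2/(2\sigma_X^2)$ then gives $dx/x = du/(2u)$ and maps the integral onto the standard representation $K_0(a) = \tfrac{1}{2}\int_0^\infty u^{-1}\exp(-u - a^2/(4u))\,du$, with $a = |z|/(\sigma_X\sigma_Y)$. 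Matching the prefactor yields (\ref{craig_distribution}). The hardest step is recognising the Bessel-function integral representation and checking that the arguments and prefactors line up correctly; the substitution itself is routine, but the constants demand care.
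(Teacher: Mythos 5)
Your proof is correct, but note that the paper does not actually prove this lemma at all: it is stated as a known result with citations to Wishart--Bartlett and Craig (1936), so your derivation is a self-contained reconstruction rather than a parallel of anything in the text. All three steps check out. The conditioning argument for the MGF is the standard one, and carrying through the completion of the square for $W=\rho_X+U\sim N(\rho_X,1)$ with exponent $aW+bW^2$, $a=t\rho_Y$, $b=t^2/2$, gives exactly $\frac{(\rho_X^2+\rho_Y^2)t^2+2\rho_X\rho_Y t}{2(1-t^2)}$ over $\sqrt{1-t^2}$; you are also right that this is the MGF of the \emph{normalised} product $XY/(\sigma_X\sigma_Y)$ (the paper's appendix uses it in precisely that form, so your reading of the implicit rescaling is the intended one). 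Two small points deserve to be made explicit rather than waved at. First, the integral (\ref{product_normal_distribution}) as printed omits the centring terms $\overline{X},\overline{Y}$ in the Gaussian exponents, so it is literally valid only in the zero-mean case; you noticed this but should say plainly that the general statement requires $(x-\overline{X})^2$ and $(y-\overline{Y})^2$. Second, your Bessel computation correctly yields $a=|z|/(\sigma_X\sigma_Y)$, i.e.\ $p(z)=\frac{1}{\pi\sigma_X\sigma_Y}K_0\!\left(\frac{|z|}{\sigma_X\sigma_Y}\right)$, whereas (\ref{craig_distribution}) as printed has $\sigma_X^2\sigma_Y^2$ in the argument of $K_0$; the printed version is a typo (it is dimensionally inconsistent with the prefactor and disagrees with Craig's formula), so you should state that your constants are the correct ones rather than claiming they "line up" with the lemma as written.
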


Following this result of the MGF we obtain an asymptotic result for the marginal likelihood in (\ref{marginal_likelihood}) under both hypotheses.
\begin{theorem}
\textit{
\label{theorem_normal_product}
The distribution of $Y^{(n)}$ in (\ref{k_term}) is asymptotically Normal when either $\rho_G = \frac{\overline{G}^{(n,m)}}{\sigma_G } \rightarrow \infty   $ or $\rho_F = \frac{\overline{F}^{(m)}}{\sigma_F } \rightarrow \infty$. Therefore
\begin{equation}
\label{normal_PSCI}
Y^{(n)}= \sum_{m=1}^M G^{(n,m)}R^{(m)} + W^{(k)}\sim
\begin {cases} 
&CN \left( \mu_{\Hnull} , \Sigma_{\Hnull} \right), \Hnull\\
&CN \left( \mu_{\Halt} , \Sigma_{\Halt}\right) ,\Halt.
 	\end {cases}
\end{equation}
Therefore, assuming only the linear dependence between the $N$ components of $\Y$ and ignoring any tail dependence in the joint multivariate distribution, we conclude that $\Y$ is multivariate Gaussian.
}
\end{theorem}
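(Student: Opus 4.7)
The plan is to start from the decomposition in (\ref{product_normal}) and treat each cross-product $G^{(n,m)}R^{(m)}$ as a \emph{perturbed linear} random variable. Write $G^{(n,m)} = \overline{G}^{(n,m)} + \sigma_G \epsilon_G^{(n,m)}$ and $R^{(m)} = \overline{R}^{(m)} + \sigma_R \epsilon_R^{(m)}$ where $\epsilon_G^{(n,m)}, \epsilon_R^{(m)}$ are independent standard (complex) Gaussians. Expanding the product gives a deterministic term $\overline{G}^{(n,m)}\overline{R}^{(m)}$, two purely Gaussian terms $\overline{G}^{(n,m)}\sigma_R \epsilon_R^{(m)}$ and $\sigma_G \overline{R}^{(m)} \epsilon_G^{(n,m)}$, and a single non-Gaussian residual $\sigma_G \sigma_R \epsilon_G^{(n,m)}\epsilon_R^{(m)}$. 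Everything except the last term is already exactly Gaussian, so the proof reduces to showing the product-of-Gaussians residual is asymptotically negligible relative to the dominant linear terms.

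Second, I would quantify ``negligible'' through the variance ratio. The variance of the Gaussian part contributed by factor $m$ scales as $(\overline{G}^{(n,m)})^2 \sigma_R^2 + (\overline{R}^{(m)})^2 \sigma_G^2$ whereas the residual has variance $\sigma_G^2 \sigma_R^2$. Dividing, the ratio of the residual variance to the linear-Gaussian variance is $O(\rho_G^{-2} + \rho_F^{-2})$, which vanishes under either asymptotic regime stated in the theorem. Equivalently, using the MGF in (\ref{MGF}), after recentering by the mean $\overline{X}\overline{Y}$ and rescaling by the total standard deviation, one can expand the exponent for $t$ in a shrinking neighbourhood and verify that as $\rho_X \to \infty$ (or $\rho_Y \to \infty$), the MGF tends to $\exp(t^2/2)$, which by L\'evy's continuity theorem yields convergence in distribution to a standard Normal. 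This also pins down the limiting mean and variance of each $G^{(n,m)}R^{(m)}$.

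Third, sum the asymptotically Gaussian contributions over $m = 1,\ldots,M$ and add the additive noise $W^{(n)}$. Since the marginals are asymptotically Gaussian and independent across $m$, the sum $Y^{(n)}$ is itself asymptotically Gaussian with mean $\mu_{\mathcal{H}_k} = \sum_{m=1}^M \overline{G}^{(n,m)}\overline{R}^{(m)}_{\mathcal{H}_k}$ and variance obtained by summing the contributing variances (plus $\sigma_W^2$), evaluated under each hypothesis using the model in (\ref{observation_model}). This delivers (\ref{normal_PSCI}).

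The main obstacle, and the weakest part of the argument, is the last claim: even though every marginal $Y^{(n)}$ is asymptotically Gaussian, this does not automatically imply that the joint distribution of $\Y = (Y^{(1)},\ldots,Y^{(N)})$ is multivariate Gaussian, because the $Y^{(n)}$ share the factors $R^{(m)}$ and hence are dependent through a non-Gaussian copula. As the theorem statement itself flags, the multivariate conclusion rests on matching the linear (covariance) dependence and discarding higher-order tail dependence. I would therefore conclude by computing the asymptotic cross-covariance $\mathrm{Cov}(Y^{(n)}, Y^{(n')})$ directly from the model and declaring it to be the multivariate Gaussian limit under this linear-dependence approximation, rather than proving full joint convergence in distribution.
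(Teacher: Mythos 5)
Your proposal is correct, and its primary route is genuinely different from the paper's. The paper proves the scalar claim by direct computation: it standardizes $Z = XY/(\sigma_X\sigma_Y)$, substitutes into Craig's MGF formula (\ref{MGF}), and manipulates the exponent until, with $\alpha = t/\sqrt{1+\rho_X^2+\rho_Y^2}\to 0$, the MGF collapses to $\exp(t^2/2)$. Your main argument instead decomposes each factor as mean plus fluctuation, observes that $\overline{X}\,\overline{Y}+\overline{X}\sigma_Y\epsilon_Y+\overline{Y}\sigma_X\epsilon_X$ is already exactly Gaussian, and disposes of the single non-Gaussian residual $\sigma_X\sigma_Y\epsilon_X\epsilon_Y$ by noting that its variance is a fraction $1/(1+\rho_X^2+\rho_Y^2)$ of the total, so after standardization it vanishes in probability and Slutsky's theorem gives convergence to $N(0,1)$. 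This is more elementary (no MGF algebra), makes the mechanism transparent — the product is a Gaussian plus an $O(\rho^{-1})$ perturbation — and delivers the same limiting mean $\rho_X\rho_Y$ and variance $1+\rho_X^2+\rho_Y^2$ that the paper computes; your secondary MGF remark is essentially the paper's proof restated. On the multivariate claim you are right to be uneasy, but your handling matches the paper's: the appendix proof stops at the scalar product, and joint Gaussianity of $\Y$ is not proved anywhere — it is carried entirely by the modelling assumption written into the theorem statement (``assuming only the linear dependence \ldots ignoring any tail dependence''), so computing the cross-covariances and declaring the Gaussian copula, as you propose, is exactly the level of rigour the paper itself adopts.
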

\begin{proof}
%%%%%%%%%%%%%%%%%%%%%%%
See Appendix.
\end{proof}
Theorem \ref{theorem_normal_product} shows that under the following conditions, the Gaussian Approximation (GA), that will be presented next, is valid:
\begin{enumerate}
	\item The CSI estimation error, quantified by $\sigma^2_{\G}$ and/or $\sigma^2_{\F}$ is low.
  \item The mean value of one or both of the channels estimates ($\overline{G}^{(k,m)}$, $\overline{R}^{(m)}$) is large, i.e. strong line-of-sight, for example in Rician channels.
\end{enumerate}

Next we consider generalising the analysis to relax the assumptions in Theorem \ref{theorem_normal_product} making the resulting solution widely applicable. Consequently, the distribution of the marginal likelihood in (\ref{marginal_likelihood}) under both hypotheses is intractable. This is because it involves finding the distribution of $Y^{(n)}$ in (\ref{k_term}) which can not be obtained analytically. This is due to the fact that $\left(M_Z\left(t\right)\right)^M \neq M_{\sum_{m=1}^M Z_m}\left(t\right)$, which means that this distribution is not closed under convolution.
%%%%%%%%%%%%%%%%%%
%The integral in (\ref{marginal_likelihood_1_sample}) can not be performed analytically, due to the multiplication of multiple Gaussian distributions and the high-dimensionality. 
%Thus, we derive low complexity approximations that will enable us to perform hypothesis testing.
%%%%%%%%%%%%%%%%%%%%%%%%%%%%
\subsection{Gaussian Approximation via Moment Matching}\label{Gaussian_Approximation}
We derive a low-complexity detection algorithm that is based on moment matching so that the distribution of the received signal is approximated by a matrix variate Gaussian distribution based on the results obtained in Theorem \ref{theorem_normal_product}. We show under which conditions this approximation is valid and asses the approximation error.

\begin{lemma}
\textit{
\label{Gaussian_apprx}
The first two moments of $\Y(l)$ can be expressed as
\begin{align*}
\begin{split}
\exE\left[\Y(l)\right] = \exE\left[\G(l) \R(l) + \W(l)\right] = 
\begin {cases} 
&\ZV, \hspace{1.3cm} \Hnull\\
&\overline{\G}(l)\; \overline{\R}(l), \Halt.
\end {cases}
\end {split} 
\end{align*}
%%%%%%%%%%%%
\begin{align*}
\begin{split}
\exE \left[\Y(l) \Y(l)^H\right] &=
\exE\left[\left(\G(l) \R(l)+\W(l)\right)\left(\G(l) \R(l)+\W(l)\right)^H\right]  \\
&=\begin {cases} 
& M \sigma^2_{\V} \sigma^2_{\G}\I + \sigma^2_{\W}\I,  \hspace{4cm} \Hnull\\
&\sigma^2_{\G} \text{Tr} \left[ \b^H(l) \right]\ \I +\overline{\G}(l)\; \b(l) \overline{\G}^H(l) + \sigma^2_{\W}\I , \Halt
\end {cases}
\end {split} 
\end{align*}
where $\text{Tr}\left[X\right]$ is the trace of matrix $X$ and $\b(l) \triangleq \left(\Sigma_{\V}+\Sigma_{\F}+\overline{\F}(l)\;\ \overline{\F}^H(l)\right)$.\\
We make a Gaussian approximation on the multivariate observation vector to obtain:
\begin{align*}
\begin{split}
\Y(l) \sim
\begin {cases} 
&CN\left(\Y(l);\ZV, \underbrace{M \sigma^2_{\V} \sigma^2_{\G}\I + \sigma^2_{\W}\I}_{\Sigma_{\mathcal{H}_0(l) }}\right),  \hspace{0.3cm} \Hnull\\
&CN\left(\Y(l);\underbrace{\overline{\G}(l) \; \overline{\R}(l)}_{\mu(l)}, 
\underbrace{
\sigma^2_{\G} \text{Tr} \left[ \b^H(l) \right]\ \I +\overline{\G}(l)\; \b(l) \overline{\G}^H(l) 
-\mu(l) \mu^H(l) + \sigma^2_{\W}\I}_{\Sigma_{\mathcal{H}_1(l)}}\right), \Halt.
\end {cases}
\end {split} 
\end{align*}
\normalsize
}
\end{lemma}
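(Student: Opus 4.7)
The plan is to establish the lemma by direct computation of the first two moments using the independence structure of the random quantities in the model, then invoke a moment-matched Gaussian approximation. All computations reduce to applications of the tower property of expectation together with the prior specifications for $\G$, $\F$, $\V$, and $\W$.

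First I would compute the means. Under $\Hnull$ we have $\Y(l) = \G(l)\V(l) + \W(l)$, and using independence of $\G$, $\V$, $\W$ together with $\exE[\V(l)] = \exE[\W(l)] = \ZV$, linearity of expectation gives $\exE[\Y(l)\mid\Hnull] = \ZV$. Under $\Halt$, $\Y(l) = \G(l)(\F(l) + \V(l)) + \W(l)$, and the same argument, with $\exE[\F(l)] = \overline{\F}(l)$, yields $\exE[\Y(l)\mid\Halt] = \overline{\G}(l)\,\overline{\F}(l) = \overline{\G}(l)\,\overline{\R}(l)$.

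The second-moment computation is the main technical step. I would condition first on $\G(l)$, use $\exE[\R(l)\R(l)^H] = \Sigma_{\V} + \Sigma_{\F} + \overline{\F}(l)\overline{\F}(l)^H \triangleq \b(l)$ under $\Halt$ (and analogously $\exE[\V(l)\V(l)^H] = \Sigma_\V = \sigma_\V^2 \I$ under $\Hnull$), and then take the outer expectation over $\G(l)$. The cross terms involving $\W$ vanish by independence and zero mean. The key identity needed is
\begin{equation*}
\exE\bigl[\G(l)\,\M\,\G(l)^H\bigr] = \overline{\G}(l)\,\M\,\overline{\G}(l)^H + \sigma_{\G}^2\,\mathrm{Tr}[\M]\,\I,
\end{equation*}
valid for any fixed Hermitian $\M$, which follows from writing $\G(l) = \overline{\G}(l) + \Delta$ with $\Delta$ having i.i.d.\ $CN(0,\sigma_{\G}^2)$ entries, expanding the quadratic form entry-wise, and using that only the diagonal $(i,i),(m,m')$ covariances survive. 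Applying this with $\M = \b(l)$ under $\Halt$ and $\M = \sigma_\V^2 \I$ under $\Hnull$ (so that $\mathrm{Tr}[\M] = M\sigma_\V^2$), and adding the $\sigma_{\W}^2 \I$ term from $\exE[\W(l)\W(l)^H]$, produces the claimed expressions for $\exE[\Y(l)\Y(l)^H]$.

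Finally, to obtain the covariance matrices $\Sigma_{\mathcal{H}_0}(l)$ and $\Sigma_{\mathcal{H}_1}(l)$ stated in the lemma I would subtract $\mu(l)\mu(l)^H$ from the second moment under each hypothesis (this is a no-op under $\Hnull$ since the mean vanishes, and it produces the $-\mu(l)\mu(l)^H$ correction visible in $\Sigma_{\mathcal{H}_1}(l)$). The moment-matched Gaussian approximation then consists of declaring $\Y(l) \sim CN(\mu(l), \Sigma_{\mathcal{H}_k}(l))$ with these matched first two moments; its justification rests on Theorem~\ref{theorem_normal_product}, which establishes asymptotic normality of each entry $Y^{(n)}(l)$ in the regimes identified in that theorem, combined with approximating the multivariate law by the Gaussian having matched second-order structure (i.e.\ linear dependence only, ignoring residual tail dependence as remarked after Theorem~\ref{theorem_normal_product}). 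The hardest part conceptually is not the algebra but being precise about this last step: the Gaussian approximation is only exact in the moments, and the multivariate extension from entry-wise asymptotic normality to a joint complex Gaussian vector is the modelling assumption one commits to rather than a consequence one derives.
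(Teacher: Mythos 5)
The paper states this lemma without any proof, so there is nothing to compare line by line; your route --- tower property over $\G(l)$, the quadratic-form identity $\exE[\G\,\M\,\G^H]=\overline{\G}\,\M\,\overline{\G}^H+\sigma_\G^2\,\mathrm{Tr}[\M]\,\I$ for i.i.d.\ perturbation entries, then moment matching justified by Theorem~\ref{theorem_normal_product} --- is exactly the computation the lemma presupposes, and your closing caveat that the multivariate Gaussian step is a modelling commitment rather than a derived consequence is the right way to read the lemma. The identity itself is correct, and under $\Halt$ it reproduces the stated $\sigma_\G^2\,\mathrm{Tr}[\b^H]\,\I+\overline{\G}\,\b\,\overline{\G}^H+\sigma_\W^2\I$.

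However, your claim that applying the same identity with $\M=\sigma_\V^2\I$ under $\Hnull$ ``produces the claimed expressions'' does not hold as written. That substitution gives
\begin{equation*}
\exE\left[\Y(l)\Y(l)^H\mid\Hnull\right]=\sigma_\V^2\,\overline{\G}(l)\,\overline{\G}(l)^H+M\sigma_\V^2\sigma_\G^2\,\I+\sigma_\W^2\I,
\end{equation*}
which carries an extra term $\sigma_\V^2\,\overline{\G}(l)\,\overline{\G}(l)^H$ relative to the lemma's $M\sigma_\V^2\sigma_\G^2\I+\sigma_\W^2\I$. Note that the lemma's own $\Halt$ expression \emph{does} retain the analogous term, since $\overline{\G}\,\b\,\overline{\G}^H$ contains $\overline{\G}\,\Sigma_\V\,\overline{\G}^H=\sigma_\V^2\,\overline{\G}\,\overline{\G}^H$; so the two branches of the stated lemma are mutually inconsistent unless $\overline{\G}(l)=\ZV$ (a blind-CSI special case the paper never asserts for $\G$). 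The honest conclusion of your derivation is that the $\Hnull$ second moment should include $\sigma_\V^2\,\overline{\G}(l)\,\overline{\G}(l)^H$; you should either state that corrected expression or explicitly flag the condition under which it vanishes, rather than asserting agreement with the formula as printed.
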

The approximated distribution of $\Y$ matrix has the same structure as (\ref{Imperfect_PBS_relays}), and we can therefore utilise a similar procedure to obtain the decision rule:
\begin{lemma}
\textit{
\label{Lemma_9}
Utilizing the results in Lemma \ref{Gaussian_apprx} combined with Lemma \ref{Lemma_4} results in the LRT decision rule and Bayesian threshold as in Theorem \ref{Theorem_9}.
}
\end{lemma}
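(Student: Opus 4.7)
The plan is to observe that Lemma \ref{Gaussian_apprx} produces, under the moment-matched Gaussian approximation, a likelihood of exactly the same structural form as the one in Lemma \ref{Lemma_4}: a zero-mean complex Gaussian under $\Hnull$ with some covariance $\Sigma_{\mathcal{H}_0}(l)$, and a complex Gaussian under $\Halt$ with mean $\mu(l)$ and covariance $\Sigma_{\mathcal{H}_1}(l)$. The only difference between the two settings lies in what $\mu(l)$, $\Sigma_{\mathcal{H}_0}(l)$, and $\Sigma_{\mathcal{H}_1}(l)$ evaluate to; in the present case they are the moment-matched quantities
\[
\mu(l) = \overline{\G}(l)\overline{\R}(l), \qquad \Sigma_{\mathcal{H}_0}(l) = M\sigma^2_{\V}\sigma^2_{\G}\I + \sigma^2_{\W}\I,
\]
and $\Sigma_{\mathcal{H}_1}(l) = \sigma^2_{\G}\text{Tr}[\b^H(l)]\I + \overline{\G}(l)\b(l)\overline{\G}^H(l) - \mu(l)\mu^H(l) + \sigma^2_{\W}\I$, whereas in Lemma \ref{Lemma_4} they were $\g(l)\overline{\F}(l)$, $\sigma^2_{\V}\g(l)\g(l)^H + \sigma^2_{\W}\I$, and $(\sigma^2_{\V}+\sigma^2_{\F})\g(l)\g(l)^H + \sigma^2_{\W}\I$ respectively.

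Next, I would invoke the derivation in Theorem \ref{Theorem_9}. That derivation is entirely algebraic: it takes the log of the ratio of two $N$-variate complex Gaussian densities with different means and covariances, splits off the log-determinant term $\log|\Sigma_{\mathcal{H}_1}(l)/\Sigma_{\mathcal{H}_0}(l)|$, and rewrites the difference of quadratic forms $\y^H\Sigma_{\mathcal{H}_0}^{-1}\y - (\y-\mu)^H\Sigma_{\mathcal{H}_1}^{-1}(\y-\mu)$ by completing the square using the decomposition $\c(l)^H\c(l) = \Sigma_{\mathcal{H}_0}^{-1}(l) - \Sigma_{\mathcal{H}_1}^{-1}(l)$ and the shift $\a(l) = \c(l)^{-1}\Sigma_{\mathcal{H}_1}^{-1}(l)\mu(l)$. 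None of these manipulations depend on the particular functional form of $\mu$, $\Sigma_{\mathcal{H}_0}$, or $\Sigma_{\mathcal{H}_1}$; they only require that both covariances are Hermitian positive-definite so that their inverses exist and $\Sigma_{\mathcal{H}_0}^{-1}(l) - \Sigma_{\mathcal{H}_1}^{-1}(l)$ admits a square-root factor $\c(l)$.

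Therefore the proof reduces to two observations. First, the Gaussian approximation is structurally identical to Lemma \ref{Lemma_4}, so one substitutes the new quantities into the template and reads off the test statistic $\mathbb{T}(\Y_{1:L}) = \sum_{l=1}^L |\c(l)\Y(l) + \a(l)|^2$ and threshold $\Gamma$ verbatim from Theorem \ref{Theorem_9}. Second, one must briefly verify the positive-definiteness of the moment-matched $\Sigma_{\mathcal{H}_0}(l)$ and $\Sigma_{\mathcal{H}_1}(l)$; this follows since both contain the strictly positive term $\sigma^2_{\W}\I$ added to Hermitian positive-semidefinite matrices, and since the moment-matched covariance under $\Halt$ is a bona fide covariance matrix of the random vector $\Y(l)$ by construction in Lemma \ref{Gaussian_apprx}.

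The only mild obstacle I anticipate is checking that $\Sigma_{\mathcal{H}_0}^{-1}(l) - \Sigma_{\mathcal{H}_1}^{-1}(l)$ is positive-definite, which is required for the square-root factor $\c(l)$ to exist as a valid matrix; this is equivalent to $\Sigma_{\mathcal{H}_1}(l) \succ \Sigma_{\mathcal{H}_0}(l)$, i.e., the signal-plus-noise covariance dominates the noise-only covariance. This should hold generically since under $\Halt$ the observation contains additional signal-induced variance contributions from $\b(l)$ and the $\sigma^2_{\F}$ term propagated through $\overline{\G}(l)$; beyond this sanity check, the lemma follows immediately by pattern matching with Theorem \ref{Theorem_9}.
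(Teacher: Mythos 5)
Your proposal is correct and follows essentially the same route as the paper, which justifies the lemma by noting that the moment-matched Gaussian approximation has the same structural form as the likelihood in Lemma~\ref{Lemma_4} and then reuses the algebra of Theorem~\ref{Theorem_9} verbatim. Your additional check that $\Sigma_{\mathcal{H}_1}(l)\succ\Sigma_{\mathcal{H}_0}(l)$ (so that $\c(l)$ exists) is a worthwhile detail the paper leaves implicit, and it does hold here since the difference reduces to $\sigma^2_{\G}\bigl(M\sigma^2_{\F}+\|\overline{\F}(l)\|^2\bigr)\I+\overline{\G}(l)\left(\Sigma_{\V}+\Sigma_{\F}\right)\overline{\G}^H(l)$.
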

%\subsubsection{Error Bounds on the Normal Approximation}
In making the GA, it is important to quantify the associated error with such a distributional assumption in evaluation of the LRT. Understanding the approximation error allows us to provide guidance on system design relating to the number of relay and the length of frames in order to mitigate errors in evaluating mis-detection and false alarms probabilities.

%Next we provide a framework that will allow practical analysis of the accuracy in evaluating the decision criteria formed from GA. 
%
%To evaluate the validity of the GA we will develop an analytic bound on the approximation error required
%\begin{enumerate}
%	\item Obtain a multi-variate Berry-Esseen bound on the absolute difference between the GA and a multi-variate normal.
%	\item Evaluate the bound on the GA error when calculating $p_d$ and $p_f$.
%	\item Quantify the proximity of the lower bound of the decision boundary relative to the bound obtained from the K-S test.
%\end{enumerate}
%%%%%%%%%%%
%Under this Gaussian distribution, the LRT is given by the expression presented in Lemma \ref{Lemma_4}. We are considering a linear combination of independent random vectors which are approximated as Gaussian distributed, and we wish to measure the error in this approximation. As discussed in \cite{dasgupta} establishing bounds on the error of the CLT in multi-dimensional settings is challenging. Here we consider the error in the mean as given by the Berry-Esseen result applied to the setting of partial CSI settings in Theorem \ref{theorem_Berry_Esseen}. 
%asymptotic theory statistics probability
\begin{theorem}
\textit{
\label{theorem_Berry_Esseen}
Under a Gaussian approximation to the distribution of the linearly transformed received signals $\widetilde{\Y}(1),\ldots,\widetilde{\Y}(M)$, where 
$\widetilde{\Y}(m) = \mathbb{T}\left(\G^{(:,m)}R^{(m)}\right),\; m=1,\cdots,M$, and $\mathbb{T}\left(\cdot\right)$ is the linear standardization transformation, we obtain the Kolmogorov distance on all convex sets $A \in \mathcal{A}$ for\\
$\bm{S}_M = \frac{\widetilde{\Y}(1)+\ldots+\widetilde{\Y}(M)}{\sqrt{M}}$ given by
\begin{align*}
sup_{A \in \mathcal{A}}|Pr\left(\bm{S}_M \in A\right) - Pr\left(\Z \in A\right)| \leq 
\frac{400 N^{1/4} \mathbb{E}\left[||\widetilde{\Y}(m)||^3\right]  }
{\sqrt{M} },
\end{align*}
where $\mathbb{E}\left[||\widetilde{\Y}(m)||^3\right] = 2 \sqrt{2} \frac{\Gamma\left(\frac{N+3}{2}\right)}{\Gamma\left(\frac{N}{2}\right)} $, and $\Z \sim N\left(\bm{0},\I\right)$,
}
\end{theorem}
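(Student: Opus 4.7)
The plan is to recognise the stated inequality as an application of a multivariate Berry-Esseen theorem for convex sets, specifically the Bentkus bound, and then verify that the three hypotheses (i.i.d.\ summands, zero mean, identity covariance) are met by the standardised vectors $\widetilde{\Y}(1),\ldots,\widetilde{\Y}(M)$. The Bentkus theorem states that for i.i.d.\ centred $N$-dimensional random vectors $X_1,\ldots,X_M$ with covariance $\I_N$ and $S_M=(X_1+\cdots+X_M)/\sqrt{M}$,
\[
\sup_{A\in\mathcal{A}}\bigl|\Pr(S_M\in A)-\Pr(\Z\in A)\bigr|\;\leq\;\frac{C\,N^{1/4}\,\exE\bigl[\|X_1\|^3\bigr]}{\sqrt{M}},
\]
with absolute constant $C\leq 400$ when $\mathcal{A}$ is the class of measurable convex subsets of $\mathbb{R}^N$. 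The overall architecture of the argument is therefore to (i) invoke this bound, (ii) check its hypotheses for the transformed signals, and (iii) evaluate $\exE[\|\widetilde{\Y}(m)\|^3]$ in closed form.

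First I would spell out the standardisation $\mathbb{T}(\cdot)$ explicitly: subtract $\exE[\G^{(:,m)} R^{(m)}]$ and pre-multiply by the inverse square root of $\mathrm{Cov}(\G^{(:,m)} R^{(m)})$, obtainable through an SVD of that covariance, mirroring the whitening step already used in the proof of Theorem \ref{theorem_Laguerre}. The resulting $\widetilde{\Y}(m)$ have zero mean and identity covariance by construction, and the independence across relays assumed in the model assumptions --- the channels $\G^{(:,m)}$ and signals $R^{(m)}$ being independent in $m$ --- provides the i.i.d.\ property required to apply Bentkus's bound. At that point the non-asymptotic estimate $C\,N^{1/4}\,\exE[\|\widetilde{\Y}(m)\|^3]/\sqrt{M}$ is immediate.

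The final step, which I expect to be the main obstacle, is obtaining the closed-form expression $\exE[\|\widetilde{\Y}(m)\|^3]=2\sqrt{2}\,\Gamma((N+3)/2)/\Gamma(N/2)$. Since $\widetilde{\Y}(m)$ has zero mean and identity covariance but is \emph{not} exactly Gaussian (being a whitened version of a product of two Gaussian vectors, as characterised through the MGF in equation (\ref{MGF})), the stated identity must be interpreted either as a Gaussian surrogate for the third moment or as an upper bound obtained by dominating $\|\widetilde{\Y}(m)\|$ by the norm of a standard $N$-dimensional real Gaussian. In the latter case the norm follows a chi distribution with $N$ degrees of freedom, whose third moment is exactly $2\sqrt{2}\,\Gamma((N+3)/2)/\Gamma(N/2)$; I would justify this surrogate by appealing to Theorem \ref{theorem_normal_product}, which ensures that in the line-of-sight or low CSI-error regime the true distribution of $\widetilde{\Y}(m)$ is well approximated by a multivariate Gaussian, making the chi-distributed third moment a faithful proxy. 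Substituting this value back into the Bentkus bound then yields the stated inequality.
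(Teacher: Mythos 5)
Your proposal follows essentially the same route as the paper: whiten each $\Y(m)$ via an SVD of its covariance, invoke the multivariate Berry--Esseen bound over convex sets (the Bentkus form with constant $400$ and the $N^{1/4}$ factor, which is what the paper's citation supplies), and evaluate $\exE[\|\widetilde{\Y}(m)\|^3]$ as the third moment of a $\chi_N$ distribution. Your explicit caveat that the $\chi_N$ identity is only valid as a Gaussian surrogate --- the paper silently asserts $(\widetilde{\Y}^{(i)}(m))^2\sim\chi_1^2$ even though the whitened product of Gaussians is not exactly Gaussian --- is a point the paper glosses over but which is covered by the theorem's own premise, so the argument is sound and matches the paper's.
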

%%%%%%%%%%%%%%%
\begin{proof}
Using the result of Lemma \ref{Gaussian_apprx}, we transform the corresponding observation vectors $\Y(m)$ according to the following SVD decomposition of the covariance matrix of $\Y(m)$ given by $\Sigma(m) = \U(m)\Lambda(m) \U(m)^H$. This produces the transformed i.i.d. random vectors given by\\
$\widetilde{\Y}(m)=\U(m)\Lambda^{-1/2}(m) \left(\Y(m) - \mathbb{E}[\Y(m)]\right)$.
Having obtained i.i.d. vectors, we apply the multi-dimensional Berry-Essen bound \cite{dasgupta}. 
To do this we only need to calculate $\mathbb{E}\left[||\widetilde{\Y}(m)||^3\right]$.
Writing $||\widetilde{\Y}(m)|| = \sqrt{\left(\widetilde{\Y}^{(1)}(m)\right)^2+\ldots+\left(\widetilde{\Y}^{(N)}(m)\right)^2}$, we have that $\left(\widetilde{\Y}^{(i)}(m)\right)^2 \sim \chi_1^2 \left(0\right)\; \forall i\in\left\{1,\ldots,N\right\}$, and therefore
$\sum_{n=1}^N \left(\widetilde{\Y}^{(n)}(m)\right)^2 \sim \chi_N^2 \left(0\right)$ and consequently, 
$||\widetilde{\Y}(m)|| \sim \chi_N \left(0\right)$. The third moment of $||\widetilde{\Y}(m)||$, which follows a $\chi_N$ distribution, is given by
$2 \sqrt{2} \frac{\Gamma\left(\frac{N+3}{2}\right)}{\Gamma\left(\frac{N}{2}\right)}$. 
\end{proof}
%This result is asymptotic and is understood to have a maximum value of $1$. 
\textbf{Remark}: considering convex sets of the form $\left(-\infty, \x \right]$, $\forall \x \in \mathbb{R}^N$, the Berry-Esseen result shows the maximum error we can make under our Gaussian approximation of each observation vector and therefore provides a bound on the approximation error on the marginal likelihood used in the LRT.\\
\textbf{Remark}: the maximum error we can make under GA decreases at a rate of $\sqrt{M}$ (that is, the number of relays) for a fixed number of antennas, $N$. Furthermore, for a fixed number of relays, $M$, the approximation error becomes unbounded for increasing number of receive antennas, since
$\frac{\Gamma\left(\frac{N+3}{2}\right)}{\Gamma\left(\frac{N}{2}\right)} \stackrel{N \rightarrow \infty}{\longrightarrow}\infty$.

%%%%%%%%%%%%%%%%%%%%%
%%%%%%%%%%%%%%%%%%%%%%%%
\subsection{Approximation of the Marginal Evidence via Laplace Approach}
%%%%%%%%%%%%%%%%%%%%%%%%%%%%%%%%
In this section, a more accurate estimation of the marginal likelihoods than the GA is developed.
This is based on the Laplace approximation \cite{kass1995bayes}. The Laplace method can approximate integrals via a series expansion which uses local information about the integrand around its maximum. Therefore, it is most useful when the integrand is highly concentrated in this region. 

Under the full Bayesian paradigm, the evidence in (\ref{marginal_likelihood}) is obtained via the following marginalisation:
%However, we demonstrate that even finding analytic representation 
%In order to evaluate the elements $p\left(\y | \mathcal{H}_k\right)$ of (\ref{marginal_likelihood}), we express the one-sample marginal likelihood as:
\begin{equation}
\begin{split}
\label{marginal_likelihood_1_sample}
p\left(\y |\mathcal{H}_k\right) 
&= 
\int  p\left(\y |\r , \mathcal{H}_k\right)
 p\left(\r | \mathcal{H}_k\right)d  \r \\
 &= 
 \int_{r_1} \ldots \int_{r_M}
  p\left(\y |r_{r_1} , \mathcal{H}_k\right)
  \cdots
   p\left(\y |r_{r_M} , \mathcal{H}_k\right)
 p\left(r_1 | \mathcal{H}_k\right)
  \cdots
 p\left(r_M | \mathcal{H}_k\right)
 \text{d}  r_1 \ldots \text{d}  r_M ,
\end{split}
\end{equation}
where $\R  \triangleq \left[R_1\left(l\right),\ldots, R_M\left(l\right)\right]^T$.
The densities in (\ref{marginal_likelihood_1_sample}) can be expressed as:
\begin{align}
\label{distributions_1}
\Y |\left(\R=\r  ; \mathcal{H}_k \right) \sim CN \left(\ \overline{\G}  \r , \left(\sigma^2_{\G}\left\|\r \right\|^2+\sigma^2_{\W}\right) \I\right)
\end{align}
\begin{align}
\label{distributions_2}
\begin {split} 
\R \sim F(\r) \triangleq
\begin {cases} 
&CN \left(\ \ZV, \Sigma_V\ \right), \hspace{1.3cm} \Hnull\\
&CN \left(\ \overline{\F} s , \Sigma_{\F}+\Sigma_V\ \right), \Halt.
\end {cases}
\end {split} 
\end{align}
This integral is intractable and we shall approximate it via an application of the Laplace approximation.

%In particular, the Laplace method approximates the integral in Equation (\ref{marginal_likelihood_1_sample}) by using a second order Taylor series expansion around the mode of the posterior distribution, effectively approximating the posterior density locally at this point by a Gaussian distribution. 
%This is distinct from the moment matched Gaussian approximation in Section \ref{Gaussian_Approximation}. In the rest of this section we omit the model index, and concentrate on estimating the marginal likelihood.\\
To do so we begin by defining the following quantity (we discard the time dependency $l$ here):
\begin{align}
h\left(\r \right) \triangleq \log \left( p\left(\y |\r\right) p\left(\r\right)\right). 
\end{align}  
This expression is now expanded using a Taylor series about its maximum \textit{a-posteriori} (MAP) estimate, denoted by $\widehat{\R} = \argmax_{\r} p(\r|\y)$. This is the point where the posterior density is maximised, i.e. the mode of the posterior distribution. Hence, we obtain
\small
\begin{align}
\label{Taylor_expansion}
h\left(\r \right) = 
h\left(\widehat{\R} \right) + 
\left(\r  - \widehat{\R} \right)^T \underbrace{\frac{\partial h\left(\widehat{\R} \right)}{\partial \r }}_{
\left(=0\right) \text{ at MAP location}}+
\frac{1}{2} \left(\r  - \widehat{\R} \right)^T \frac{\partial^2 h\left(\widehat{\R} \right)}{\partial^2 \r }
\left(\r  - \widehat{\R} \right)+ \ldots \;\;.
\end{align} 
\normalsize
%%%%%%%%%%%%%%
The second term in equation (\ref{Taylor_expansion}) cancels because at the maximum of $h\left(\r \right)$ (which is by definition what the MAP location represents), the first derivative is zero.\\
Replacing $h\left(\r \right)$  by the truncated second-order Taylor series yields:
\begin{align}
h\left(\r \right)\approx
h\left(\widehat{\R} \right) + 
\frac{1}{2} \left(\r  - \widehat{\R} \right)^H \mathbb{\H} \left(\r  - \widehat{\R} \right),
\end{align} 
where $\mathbb{\H}_k$ is the Hessian of the log posterior, evaluated at $\widehat{\R} $:
\begin{align}  
\label{Hessian}
\mathbb{\H}\triangleq  \left.\frac{\partial^2 h\left(\widehat{\R} \right)}{\partial^2 \r }\right|_{\r=\widehat{\R} }=
\left. \frac{\partial^2 \ln p\left(\r |\y \right)}{ \partial \r   \partial \r^H} \right|_{\r=\widehat{\R} }.
\end{align}  

%The Hessian for our system model is derived in Appendix \ref{Appendix_1}.\\
%%%%%%%%%%%%%%%%%%%%%%%%%%%%%%
We now concentrate on approximating the $\log$ of the integral in (\ref{marginal_likelihood_1_sample}):
\begin{align}
\label{LaplaceApp}
\begin{split}
\log p\left(\y \right)  
&= \log \int p\left(\y |\r\right) 
						p\left(\r\right) d \r\\
&= \log \int 						
						\exp^{h\left(\r \right)} d \r\\
&\stackrel{ \approx}{\tiny \text{Taylor series}} \log \int 						
						\exp^{h\left(\widehat{\r} \right) + 
\frac{1}{2} \left(\r  - \widehat{\R} \right)^T \mathbb{\H} \left(\r  - \widehat{\R} \right)} d \r\\
&= h\left(\widehat{\R} \right) +\log \int 						
						\underbrace{\exp^{\frac{1}{2} \left(\r  - \widehat{\R} \right)^T \mathbb{\H} \left(\r  - \widehat{\R} \right)}}
						_{\propto CN \left(\widehat{\R},\mathbb{\H}\right)}
						d \R\\
&=
h\left(\widehat{\R}\right)+\frac{1}{2} \log \left|2 \pi \mathbb{\H}\right|						\\
&= \log p\left(\widehat{\r}\right)
+\log p\left(\y|\widehat{\r}\right) + \left|2 \pi \mathbb{\H}^{-1}\right|^{1/2}.
\end{split}						
\end{align}  
Finally, the marginal likelihood estimate can be written as
\begin{align}
\label{laplace_est}
%p_{\text{Laplace}}\left(\y\right)=
\widehat{p}\left(\y\right)=
p\left(\widehat{\r}\right)
p\left(\y |\widehat{\r}\right)
\left|2 \pi \mathbb{\H}^{-1}\right|^{1/2}.
\end{align}  
The Laplace approximation to the marginal likelihood consists of a term for the data likelihood at the mode (second term of (\ref{laplace_est})), a penalty term from the prior (first term of (\ref{laplace_est})), and a volume term calculated from the local curvature (third term of (\ref{laplace_est})).\\
%\begin{theorem}
%\label{Theorem_laplace}
Under the Laplace approximation presented in (\ref{laplace_est}), the LRT decision rule in (\ref{LRT}) is approximated by
\begin{equation}
\widehat{\Lambda}\left(\Y_{1:L}\right) =
\frac{\prod_{l=1}^L \widehat{p}\left(\y(l)| \Halt\right)}
{\prod_{l=1}^L \widehat{p}\left(\y(l)| \Hnull\right) },
\begin{array}{c}
\stackrel{\Hnull}{\geq} \\
\stackrel{<}{\Halt}
\end{array}%
\gamma 
\end{equation}
where $\widehat{p}\left(\y(l)| \mathcal{H}_k\right)$ is the Laplace marginal likelihood approximation under the $k$-th hypothesis.
%%%%%%%%%%%%%
%The decision rule is given by
%\begin{align*}
%\widehat{\Lambda}\left(\Y_{1:L}\right) 
%\begin{array}{c}
%\stackrel{\Hnull}{\geq} \\
%\stackrel{<}{\Halt}
%\end{array}%
%%%%%%%%%%%%%%%
%\gamma 
%\end{align*}
%\end{theorem}
The major difficulty in evaluating (\ref{laplace_est}) is the requirement to evaluate the MAP estimate $\widehat{\R}$ under each hypotheses. This task is nontrivial as it involves a non-convex and non-linear optimisation problem. We derive the MAP estimate for this scenario via the Bayesian Expectation Maximasation (BEM) methodology, see the derivation in the Appendix .
%\subsubsection{Performance analysis}
%We develop here a novel approach to estimate the .....
%\subsection{Approximation of the Marginal Evidence via Harmonic Mean Estimator}
%Here we develop an alternative solution that is based on a Monte Carlo sampling methodology.
%
%
%The key is the harmonic mean identity, which states that the reciprocal of the integrated
%likelihood is equal to the posterior harmonic mean of the likelihood. 
%While this is an unbiased and simulation-consistent estimator, its reciprocal can
%have infinite variance and so it is unstable in general.
%\subsubsection{Background of the Harmonic Mean Estimator}
%Consider data $\y$, a likelihood function $p(\y|\x)$ from a model for $\y$ conditioned on $\x$, and a prior distribution $p(\x)$. The marginal likelihood of $\y$ is then defined as
%\begin{align}
%p(\y) = \int p(\y|\x)p(\x) d \x
%\end{align}
%
%The marginal likelihood is the normalizing constant for the product of the likelihood and
%the prior in forming the posterior density $p(\x|\y)$. Furthermore, as a function of $\y$ prior to
%data collection, $p(\y)$ is the prior predictive density.\\
%By Bayes's theorem,
%\begin{align}
%\frac{1}{p(\y)} = \int \frac{p(\x|\y)}{p(\y|\x)} d \x = \exE_{\x|\y}\left[\frac{1}{p(\y|\x)}|\y\right]
%\end{align}
%and the HME can be expressed as
%\begin{align}
%\widehat{p}(\y) =\frac{1}{N} \sum_{n=1}^N \frac{1}{p(\y|\x^{[n]})},
%\end{align}
%where $\x^{[n]}$ are samples from the posterior distribution $p(\x|\y)$.
%%%%%%%%%%%%%%%

%%%%%%%%%%%%%%%%%%%%%%%%%%%
\section{Simulation Results} \label{SimulationResults}
%%%%%%%%%%%%%%%%%%%%%%%%%%%%%%%%
In this section, we present the performance of the proposed algorithms via Monte Carlo simulations.
\subsection{Simulation Set-up}
The simulation settings for all the simulations are as follows:
\begin{itemize}
	\item The prior distribution for all the channels is Rayleigh fading, and the channels are assumed to be both spatially and temporally independent.
%\item The channels uncertainty is set to $\sigma^2_{\G}=0.25$ and $\sigma^2_{\F} = 0.5$.
\item We define the \textit{receive SNR} as the ratio of the average received signal power to the average noise power
\begin{align*}
\text{SNR} \triangleq 
10 \log \frac
{\text{Tr}\left[\exE\left[\left(\G(l) \F(l) s(l)\right)\left(\G(l) \F(l) s(l)\right)^H\right]\right]}
{\text{Tr}\left[\exE\left[\left(\G(l) \V(l) + \W(l)\right)\left(\G(l) \V(l) +\W(l)\right)^H\right]\right]}=
%%%%
10 \log \frac
{1}
{\sigma^2_{\V}+\frac{1}{M}\sigma^2_{\W}}.
\end{align*}
\item The SNR is set to $0$ dB.
\item The results are obtained from simulations over $100,000$ channels and noise realisations for a given set of $N$, $M$ and $L$.
\item For the Laguerre series expansion, the order of the series expansion was set to $p=100$.
\end{itemize}
%%%%%%%%%%%%%%%%%%%%%%%%%%%%%%%%%%%%%%%%%%%%%
\subsection{Study of detection probability Vs. frame length}
In this section we study the relationship between the ability to detect the presence of a signal in a spectrum sensing problem as a function of the length of the frame, $L$. We undertake this study in two different scenarios, the first involves perfect CSI according to Section \ref{PERFECT_CSI} and the second involves partial CSI according to Section \ref{Laguerre_section}.
We set the channels uncertainty $\overline{\F}(l)=\bm{0}$ and $\sigma^2_{\F} = 1$, thus only prior information is available for the $\F$ channels.

In presenting results we fix the false alarm rate $p_f$ to $10\%$.
We repeat this study for a range of values of the number of receive antennas, $N \in \left\{1, 2, 4, 8\right\}$. The results are depicted in Fig. \ref{fig:fig11} and they demonstrate the following key points: 
\begin{enumerate}
	\item for all frame lengths, as the number of receive antennas is increased, the probability of detection improves as expected;
	\item for all frame length, the detection probability under perfect CSI always outperforms significantly the performance of the model with partial CSI;
	\item asymptotically in the frame length, $L$, the probability of detection for any number of receive antennas converges to $1$, with different rates, depending on $N$;
	\item such a study provides generic performance specifications that allow us to obtain the same detection probability for different combinations of frame length and number of receive antennas. For example, with $L=10$ and $N=1$, this will be equivalent to $L=3$ and $N=4$.
	\item it also guides system design that for a given desired probability of detection, we see the saturation point, after which, increasing the frame length delivers negligible improvement. 
\end{enumerate}
\textbf{Evaluation of the Gaussian approximation}\\
The accuracy of the GA in Section \ref{Gaussian_Approximation} is bounded via a multi-dimensional Berry-Esseen inequality in Theorem \ref{theorem_Berry_Esseen}. Here we study this accuracy using a graphical Q-Q plots of each element of $\Y(l)$ as a function of the number of relays $M$. 
%This result is interesting as $M$ does not explicitly appear in the Berry-Esseen bound.
The results are presented in Fig. \ref{fig:fig10} and demonstrate that for a fixed frame length and number of receive antennas, as one increases the number of relays $M$, the Gaussian approximation that we made in Lemma \ref{Gaussian_apprx} improves. We see that in the setting of partial CSI which is relevant to practical scenarios, the number of relays required before one can make a reasonable Gaussian approximation is around $8$.

\subsection{Comparison of Detection Probability under different LRT Statistic Approximations}
In this section we present a comprehensive comparison of the distributional estimators derived for the LRT test statistic in order to evaluate the probability of detection.
This is undertaken in a range of different scenarios and we compare the distributional estimates under different levels of CSI versus the best case scenario bounds. The comparison is undertaken between:
\begin{enumerate}
	\item the analytic evaluations of the probabilities of detection and false alarm under the setting of perfect CSI, according to results obtained in Theorem \ref{Theorem_1} (denoted by: CSI Theory);
	\item  the Monte Carlo based empirical estimation of the probabilities of detection and false alarm under the setting of perfect PBS-relays CSI, and perfect relays-SBS CSI, according to the decision rule derived in Theorem \ref{Theorem_3} (denoted by: CSI empirical);	
	\item the analytic evaluations of the probabilities of detection and false alarm under the setting of imperfect PBS-relays CSI, and perfect relays-SBS CSI, according to the Laguerre series expansion density approximations derived in Theorem \ref{theorem_Laguerre} in (\ref{p_d_csi_laggure}-\ref{p_f_csi_laggure}) (denoted by: P-CSI Laguerre);
	\item the Monte Carlo based empirical estimation of the probabilities of detection and false alarm under the setting of imperfect PBS-relays CSI, and perfect relays-SBS CSI, according to the decision rule derived in Theorem \ref{Theorem_9} (denoted by: P-CSI empirical);

	\item the Monte Carlo based Gaussian approximation of the probabilities of detection and false alarm under the setting of imperfect PBS-relays CSI, and imperfect relays-SBS CSI, according to Lemma \ref{Gaussian_apprx} applied to the decision rule derived in Theorem \ref{Theorem_9} (denoted by: PP-CSI Gaussian);

	\item the Monte Carlo based Laplace approximation of the probabilities of detection and false alarm under the setting of imperfect PBS-relays CSI, and imperfect relays-SBS CSI, corresponding decision rule also derived (denoted by: PP-CSI Laplace).

\end{enumerate}
The scenarios we consider involve varying the number of receive antennas $N$ and the number of relays $M$, for a fixed frame length $L=1$ and a fixed SNR of $0$ dB. The Receiver Operating Characteristic (ROC) curves are presented in Figs. \ref{fig:fig1}- \ref{fig:fig6}, for each of these comparisons.
The following summary details the key points of this analysis:
\begin{enumerate}
	\item In all study combinations of $N$ and $M$, the probability of detection for each probability of false alarm, had an ordering of algorithmic performance, in agreement with theory, given by:
\begin{enumerate}	
\item [i.] Optimal performance under perfect CSI. This results in the theoretical upper bound of Theorem \ref{Theorem_1} which agreed exactly with the Monte Carlo estimate under this scenario.
%%%%%%%%%%%%%%%%%%%%%%%%
\item [ii.] This was followed by the results of the imperfect PBS-relays CSI, and perfect relays-SBS CSI which were obtained under the Laguerre approximation and again compared to a Monte Carlo simulation estimated.
\item [iii.] Finally the results of the approximations when least information is known, imperfect PBS-relays CSI, and imperfect relays-SBS CSI which were obtained under the Laplace approximation and the Gaussian approximation. The Laplace approximation outperformed the Gaussian approximation in situations in which the distribution of the test statistic was not close to Gaussian. 
\end{enumerate}	  
\item In all the examples the Laplace approximation outperformed the Gaussian approximation or was directly comparable in performance as the Central Limit Theorem became viable, i.e. when $M$ was large, as presented in Fig.\ref{fig:fig3}.
\end{enumerate}
%The comparisons between the averaged approximation and the Monte Carlo obtained empirical estimates which demonstrate the accuracy of the approximations. In each case, the approximation results are very close to the Monte Carlo solutions, if not identical.
%%%%%%%%%%%%%%%%%%%%%%%%%%%%%%%%%%%%%%
\section{Conclusions and Future Work} \label{Conclusions}
In this paper we developed a framework for spectrum sensing in cooperative amplify-and-forward cognitive radio networks.
We developed the Bayesian optimal decision rule under various scenarios of CSI varying from perfect to imperfect CSI.
We designed two algorithms to approximate the marginal likelihood, and obtained the decision rule. We utilised a Laguerre series expansion to approximate the distribution of the test statistic in cases where its distribution can not be derived exactly.
Future research will include comparison of the Laplace method to other low complexity approaches, such as the Akaike and Bayesian information
criteria.
%%%%%%%%%%%%%%%%%%%%%%%%%%%%%%%%%%%%%%
\section{Acknowledgment}
The authors would like to thank Professor Antonia Casta\~no-Mart\'inez from the University of C\'adiz, Spain for the valuable comments and for providing the maple code of the Laguerre series.

\bibliographystyle{IEEEtran}
\bibliography{../../../references}

% Generated by IEEEtran.bst, version: 1.13 (2008/09/30)
\begin{thebibliography}{10}
\providecommand{\url}[1]{#1}
\csname url@samestyle\endcsname
\providecommand{\newblock}{\relax}
\providecommand{\bibinfo}[2]{#2}
\providecommand{\BIBentrySTDinterwordspacing}{\spaceskip=0pt\relax}
\providecommand{\BIBentryALTinterwordstretchfactor}{4}
\providecommand{\BIBentryALTinterwordspacing}{\spaceskip=\fontdimen2\font plus
\BIBentryALTinterwordstretchfactor\fontdimen3\font minus
  \fontdimen4\font\relax}
\providecommand{\BIBforeignlanguage}[2]{{%
\expandafter\ifx\csname l@#1\endcsname\relax
\typeout{** WARNING: IEEEtran.bst: No hyphenation pattern has been}%
\typeout{** loaded for the language `#1'. Using the pattern for}%
\typeout{** the default language instead.}%
\else
\language=\csname l@#1\endcsname
\fi
#2}}
\providecommand{\BIBdecl}{\relax}
\BIBdecl

\bibitem{haykin2005cognitive}
S.~Haykin, ``{Cognitive radio: brain-empowered wireless communications},''
  \emph{IEEE journal on selected areas in communications}, vol.~23, no.~2, pp.
  201--220, 2005.

\bibitem{fette2006cognitive}
B.~Fette, \emph{{Cognitive radio technology}}.\hskip 1em plus 0.5em minus
  0.4em\relax Newnes, 2006.

\bibitem{IEEE802_22}
``{IEEE P802.22/D0.5, Draft Standard for Wireless Regional Area Networks Part
  22: Cognitive Wireless RAN Medium Access Control (MAC) and Physical Layer
  (PHY) specifications: Policies and procedures for operation in the TV
  Bands},'' 2008.

\bibitem{ghasemi2008spectrum}
A.~Ghasemi and E.~Sousa, ``{Spectrum sensing in cognitive radio networks:
  requirements, challenges and design trade-offs},'' \emph{Communications
  Magazine, IEEE}, vol.~46, no.~4, pp. 32--39, 2008.

\bibitem{akyildiz2008survey}
I.~Akyildiz, W.~Lee, M.~Vuran, and S.~Mohanty, ``{A survey on spectrum
  management in cognitive radio networks},'' \emph{IEEE Communications
  Magazine}, vol.~46, no.~4, p.~40, 2008.

\bibitem{quan2008optimal}
Z.~Quan, S.~Cui, and A.~Sayed, ``Optimal linear cooperation for spectrum
  sensing in cognitive radio networks,'' \emph{Selected Topics in Signal
  Processing, IEEE Journal of}, vol.~2, no.~1, pp. 28--40, 2008.

\bibitem{zhang2008cooperative}
W.~Zhang, R.~Mallik, B.~Letaief \emph{et~al.}, ``Cooperative spectrum sensing
  optimization in cognitive radio networks,'' in \emph{Communications, 2008.
  ICC'08. IEEE International Conference on}.\hskip 1em plus 0.5em minus
  0.4em\relax IEEE, 2008, pp. 3411--3415.

\bibitem{unnikrishnan2008cooperative}
J.~Unnikrishnan and V.~Veeravalli, ``{Cooperative sensing for primary detection
  in cognitive radio},'' \emph{IEEE journal of selected topics in signal
  processing}, vol.~2, no.~1, pp. 18--27, 2008.

\bibitem{chen2011}
Q.~Chen, M.~Motani, W.~Wong, and A.~Nallanathan, ``Cooperative spectrum sensing
  strategies for cognitive radio mesh networks,'' \emph{Selected Topics in
  Signal Processing, IEEE Journal of}, vol.~5, no.~1, pp. 56--67, 2011.

\bibitem{chen2011cooperative}
Z.~Chen and X.~Zhang, ``Cooperative spectrum sharing based on amplify and
  forward relaying in cognitive radio networks,'' \emph{Wireless Personal
  Communications}, pp. 1--14, 2011.

\bibitem{yucek2009survey}
T.~Yucek and H.~Arslan, ``{A survey of spectrum sensing algorithms for
  cognitive radio applications},'' \emph{Communications Surveys \& Tutorials,
  IEEE}, vol.~11, no.~1, pp. 116--130, 2009.

\bibitem{srinivasa2010soft}
S.~Srinivasa and S.~Jafar, ``Soft sensing and optimal power control for
  cognitive radio,'' \emph{Wireless Communications, IEEE Transactions on},
  vol.~9, no.~12, pp. 3638--3649, 2010.

\bibitem{vantrees1968dea}
H.~Van~Trees, \emph{{Detection, estimation, and modulation theory.. part 1,.
  detection, estimation, and linear modulation theory}}.\hskip 1em plus 0.5em
  minus 0.4em\relax Wiley New York, 1968.

\bibitem{atsc74recommended}
A.~ATSC, ``{Recommended Practice Guideline Document entitled:"ATSC Recommended
  Practice: Receiver Performance guidelines"},'' 74.

\bibitem{ding2009mimo}
M.~Ding and S.~Blostein, ``{MIMO minimum total MSE transceiver design with
  imperfect CSI at both ends},'' \emph{IEEE Trans. on Signal Processing},
  vol.~57, no.~3, pp. 1141--1150, 2009.

\bibitem{rey2005robust}
F.~Rey, M.~Lamarca, and G.~Vazquez, ``{Robust power allocation algorithms for
  MIMO OFDM systems with imperfect CSI},'' \emph{IEEE Transactions on Signal
  Processing}, vol.~53, no.~3, 2005.

\bibitem{penev2000weiner}
S.~Penev and T.~Raykov, ``A wiener germ approximation of the noncentral chi
  square distribution and of its quantiles,'' \emph{Computational Statistics},
  vol.~15, no.~2, pp. 219--228, 2000.

\bibitem{kotz2000continuous}
S.~Kotz, N.~Johnson, and N.~Balakrishnan, \emph{{Continuous multivariate
  distributions: models and applications}}.\hskip 1em plus 0.5em minus
  0.4em\relax Wiley-Interscience, 2000.

\bibitem{castano2005distribution}
A.~Casta{\~n}o-Mart{\'\i}nez and F.~L{\'o}pez-Bl{\'a}zquez, ``{Distribution of
  a sum of weighted noncentral chi-square variables},'' \emph{Test}, vol.~14,
  no.~2, pp. 397--415, 2005.

\bibitem{wishart2008distribution}
J.~Wishart and M.~Bartlett, ``{The distribution of second order moment
  statistics in a normal system},'' in \emph{Mathematical Proceedings of the
  Cambridge Philosophical Society}, vol.~28, no.~04.\hskip 1em plus 0.5em minus
  0.4em\relax Cambridge Univ Press, 2008, pp. 455--459.

\bibitem{craig1936}
C.~Craig, ``{The Frequency of Function of xy},'' \emph{Annals of Mathematics},
  vol.~7, no.~1, pp. 1--15, 1636.

\bibitem{dasgupta}
A.~DasGupta, \emph{{Asymptotic theory of statistics and probability}}.\hskip
  1em plus 0.5em minus 0.4em\relax Springer Verlag, 2008.

\bibitem{kass1995bayes}
R.~Kass and A.~Raftery, ``{Bayes factors.}'' \emph{Journal of the American
  Statistical Association}, vol.~90, no. 430, 1995.

\bibitem{dempster:1977}
A.~Dempster, N.~Laird, and D.~Rubin, ``{Maximum likelihood from incomplete data
  via the EM algorithm (with discussion)},'' \emph{JR Statist. Soc}, vol.~39,
  pp. 1--38, 1977.

\bibitem{baum:1970}
\BIBentryALTinterwordspacing
L.~E. Baum, T.~Petrie, G.~Soules, and N.~Weiss, ``A maximization technique
  occurring in the statistical analysis of probabilistic functions of markov
  chains,'' \emph{The Annals of Mathematical Statistics}, vol.~41, no.~1, pp.
  164--171, 1970. [Online]. Available: \url{http://dx.doi.org/10.2307/2239727}
\BIBentrySTDinterwordspacing

\bibitem{kay:1998}
S.~Kay, \emph{{Fundamentals of Statistical Signal Processing, Volume 2:
  Detection Theory}}.\hskip 1em plus 0.5em minus 0.4em\relax Prentice Hall PTR,
  1998.

\end{thebibliography}
\newpage
\appendix 
\textbf{Proof of Theorem \ref{theorem_normal_product}}
\begin{proof}
%%%%%%%%%%%%%%%%%%%%%%%
Consider the normalised product of two independent normally distributed random variables defined as
\begin{equation}
Z=\frac{X Y}{\sigma_X \sigma_Y},
\end{equation}
 where $X\sim N\left(\overline{X}, \sigma^2_X\right)$ and $Y\sim N\left(\overline{Y}, \sigma^2_Y\right)$, and define $\rho_X = \frac{\overline{X}}{\sigma_X }$ and $\rho_Y = \frac{\overline{Y}}{\sigma_Y }$. \\
The expectation of $Z$ is given by
\begin{equation*}
\overline{Z} = \exE\left[Z\right]=\frac{\exE\left[X\right] \exE\left[Y\right]}{\sigma_X \sigma_Y} = \rho_X \rho_Y,
\end{equation*}
The variance of $Z$ is given by
\begin{equation*}
\sigma^2_Z = \exE\left[Z^2\right]-\exE\left[Z\right]^2=
\frac{\exE\left[X^2\right] \exE\left[Y^2\right]}
{\sigma^2_X \sigma^2_Y}-\overline{Z}^2=
\frac{ 
\left(\sigma^2_X+\overline{X}^2\right) 
\left(\sigma^2_Y+\overline{Y}^2\right)
}
{\sigma^2_X \sigma^2_Y} -\exE\left[Z\right]^2=
1+\rho_X^2+\rho_Y^2.
\end{equation*}
We define $\widetilde{Z} = \left(Z-\overline{Z}\right)/\sigma_Z$ and derive the MGF of $\widetilde{Z}$ using (\ref{MGF}):
\begin{equation}
\begin{split}
M_{\widetilde{Z}} \left(t\right)  =& 
M_{\left(Z-\overline{Z}\right)/\sigma_Z} \left(t\right)  = 
\exE\left[\exp \left\{ 
\frac{\left(Z-\overline{Z}\right)}{\sigma_Z} t 
\right\}\right]\\
%%%%
=&\exp \left\{ 
-\frac{\overline{Z}}{\sigma_Z} t 
\right\}
%%%%%%
\exE\left[
\exp \left\{ 
\frac{Z}{\sigma_Z} t 
\right\}
\right]\\
=&
%%%%%
\exp \left\{ 
-\frac{\overline{Z}}{\sigma_Z} t 
\right\}
M_{Z} \left(\frac{t}{\sigma_Z}\right)\\
=&
\frac{\exp \left\{ 
-\frac{\rho_X \rho_Y}{\sqrt{1+\rho_X^2+\rho_Y^2}} t 
\right\}
%%%%%%%%%%
\exp\left\{
\frac{\frac{\left(\rho_X^2+\rho_Y^2\right)t^2}{1+\rho_X^2+\rho_Y^2}+ 
\frac{2 \rho_X \rho_Y t}{\sqrt{1+\rho_X^2+\rho_Y^2}}}
{2\left(1-\frac{t^2}{1+\rho_X^2+\rho_Y^2}\right)}\right\}}
{\sqrt{1-\frac{t^2}{1+\rho_X^2+\rho_Y^2}}}\\
%%%%%%%%%%%
=&
\frac{
%%%%%%%%%%
\exp\left\{
\frac{\frac{\left(\rho_X^2+\rho_Y^2\right) t^2}{1+\rho_X^2+\rho_Y^2}
+\frac{2 \rho_X \rho_Y t}{\sqrt{1+\rho_X^2+\rho_Y^2}}}
{2\left(1-\frac{t^2}{1+\rho_X^2+\rho_Y^2}\right)}
-\frac{\rho_X \rho_Y t}{\sqrt{1+\rho_X^2+\rho_Y^2}} \right\}}
{\sqrt{1-\frac{t^2}{1+\rho_X^2+\rho_Y^2}}}\\
%%%%%%%%%%%
=&
\frac{
%%%%%%%%%%
\exp\left\{
\frac{\frac{\left(\rho_X^2+\rho_Y^2\right) t^2}{1+\rho_X^2+\rho_Y^2}
+\frac{2 \rho_X \rho_Y t}{\sqrt{1+\rho_X^2+\rho_Y^2}} 
-\frac{2\rho_X \rho_Y t}{\sqrt{1+\rho_X^2+\rho_Y^2}}
\left(1-\frac{t^2}{1+\rho_X^2+\rho_Y^2}\right)
}
{2\left(1-\frac{t^2}{1+\rho_X^2+\rho_Y^2}\right)}
 \right\}}
{\sqrt{1-\frac{t^2}{1+\rho_X^2+\rho_Y^2}}}\\
%%%%%%%%%
%%%%%%%%%%%
=&
\frac{
%%%%%%%%%%
\exp\left\{
\frac{\frac{\left(\rho_X^2+\rho_Y^2\right) t^2}{1+\rho_X^2+\rho_Y^2}
+\frac{2\rho_X \rho_Y t}{\sqrt{1+\rho_X^2+\rho_Y^2}}
\frac{t^2}{1+\rho_X^2+\rho_Y^2}
}
{2\left(1-\frac{t^2}{1+\rho_X^2+\rho_Y^2}\right)}
 \right\}}
{\sqrt{1-\frac{t^2}{1+\rho_X^2+\rho_Y^2}}}.
%%%%%%%%%%%
\end{split}
\end{equation}
Finally, we define $\alpha = \frac{t}{\sqrt{1+\rho_X^2+\rho_Y^2}}$ and take the limits $\rho_X  \to \infty$ or $\rho_Y  \to \infty$  to obtain the following standard normal distribution:
%%%%%%%%%%%%%%%%%%%%%%%%
\begin{equation}
\lim_{
\begin{array}{c}
\rho_X  \to \infty\\
\rho_Y  \to \infty
\end{array}
 }
 M_{\widetilde{Z}}=
 \lim_{
 \alpha  \to 0
 }
\frac{
%%%%%%%%%%
\exp\left\{
\frac{\left(\rho_X^2+\rho_Y^2\right) \alpha^2
+2\rho_X \rho_Y \alpha^3
}
{2\left(1-\alpha^2\right)} \right\}}
{\sqrt{1-\alpha^2}}
%%%%%%%%%
= \exp\left\{\frac{t^2}{2}\right\},
\end{equation}
which is the MGF of $N\left(0,1\right)$.
Therefore, we obtain that $Z\sim \left(\overline{Z},\sigma_Z^2\right)$.
\end{proof}
%%%%%%%%%%%%%%%%%%%%%%%%%%%%%%%%%%%%%%%%%%%%%%
\textbf{Deriving the MAP estimate of $\widehat{\R}$ in (\ref{laplace_est})}\label{Appendix_1}\\
The MAP optimisation problem can be written as
\begin{align}  
\label{MAP_estimation1}
\widehat{\R} = \argmax_{\r} p\left(\r |\y \right)
=\argmax_{\r} p\left(\y |\r \right)p\left(\r \right),
\end{align} 
where $p\left(\y |\r \right)$ and $p\left(\r \right)$ are defined in (\ref{distributions_1})-(\ref{distributions_2}). Then, the MAP estimate is the solution for the following optimisation problem, where for simplicity we remove the time dependence $l$:
\begin{align}  
\label{MAP_estimation}
\begin{split}  
\widehat{\R} &= \argmax_{\r} p\left(\y|\r\right)p\left(\r\right)\\
&= \argmax_{\r} \underbrace{\frac{1}{\left(\sigma^{2}_{\G} \left\|\r\right\|^{2}+\sigma^{2}_{\W}\right)^{N}}
		\exp^{\left(-\frac{\left\|\y- \overline{\G}\r\right\|^{2}}{\sigma^{2}_{\G}\left\|\r\right\|^{2} +\sigma^{2}_{\W}}\right)}}_{p\left(\y|\r\right)}
%%%%%%%%%%%%%%%%%%%%%%%%%%%%%%%%%%%%%%%     p(x)          %%%%%%%%%%%%%%%%%%%%%%%%%%%%%%%%%%%%%%%%%%%%%%%%%%%%%%%%%%
\times 
\underbrace{\frac{1}{\left(\sigma^{2}_{\R}\right)^{M}} \exp^{\left(-\frac{\left\|\r-\overline{\R}\right\|^{2}}   {\sigma^{2}_{\R}}\right)}}_{p\left(\r\right)},
\end{split}  
\end{align}  
where
\begin{align}
\begin {cases}
\begin{split}
    &\Hnull: \overline{\R} = \mathbf{0}, \hspace{0.25cm}\sigma^{2}_{\R} = \sigma^2_{\V}\\
    &\Halt: \overline{\R} = \overline{\F}s, \sigma^{2}_{\R} =  \sigma^2_{\F} +\sigma^2_{\V}.
\end{split}
\end {cases} 
\end{align}

Problem (\ref{MAP_estimation}) is non-linear and non-convex. We shall utilise the Bayesian Expectation Maximisation (BEM) methodology to solve it efficiently under each hypothesis.
%%%%%%%%%%%%%%%%%%%%%%%%%%%%
The BEM algorithm (see \cite{dempster:1977}, \cite{baum:1970}) is an iterative method that alternates between an E step, which infers posterior distributions over hidden variables given a current parameter setting, and an M step, which maximises $p\left(  \y  ,\G ,\R\right) $ with respect to $\R$ given the
statistics gathered from the E step. 
The BEM can be easily evaluated using the following iterative steps, at iteration $(n+1)$:
\begin{subequations}
\begin{align}
\label{BEM_algorithm_E}
&\text{\textbf{E Step}:\;\;} L\left(\R\right) = 
\exE_{\G |\y;\widehat{\R}^n } \left[\log p\left(  \y  ,\G ,\R\right) \right]\\
\label{BEM_algorithm_M}
&\text{\textbf{M Step}:\;\;} \widehat{\R}^{n+1} = \argmax_{\R}L\left(\R\right)
\end{align}
\end{subequations}
%%%%%%%%%%%%%%%%%%%%%%%%%%%
The \textbf{E Step} can be expressed as:
\begin{align}
\label{E_step}
\begin{split}
L(\R) &= \exE_{\G |\Y;\widehat{\R}^n } \left[\log p\left(  \Y  ,\G ,\R\right)   \right]\\
		  &= \exE_{\G |\Y;\widehat{\R}^n } \left[\log p\left(  \Y  |\G ,\R\right) +\log p\left(\R|\cancel{\G}\right)  \right]\\
      &= \exE_{\G |\Y ;\widehat{\R}^n }\left[-\frac{1}{\sigma^{2}_{\W}}\left\|\Y -\G  \R \right\|^{2}-
        	\frac{1}{\sigma^{2}_{\R}}\left\| \R-\overline{\R}\right\|^{2} \right]+\text{constant}\\
      &= \frac{1}{\sigma^{2}_{\W}}
      \left(2 \Y^H \R \exE_{\G |\Y ;\widehat{\R}^n }\left[\G\right]  -
      \R^T \exE_{\G |\Y ;\widehat{\R}^n }\left[\G^T\G\right] \R \right)- 
      \frac{1}{\sigma^{2}_{\R}}\left( \R^H \R- 2\R^H\overline{\R} \right)+\text{constant}.
\end{split}
\end{align}
where constant contains all terms that are independent of $\R$.
%%%%%%%%%%%%%%%%%%%%%%%%%%%

The conditional expectations in (\ref{E_step})can be evaluated using Bayesian MMSE as follows:
we first re-write the observation model (\ref{observation_model}) as:
\begin{align}
\begin{split}
\Y &=\G\R + \W= \left(\R^T  \otimes \I\right) \text{vec}\left[ \G\right] + \W = \Omega \GAMMA + \W ,
\end{split}
\end{align}
where we define $\Omega \triangleq \left(\R^T  \otimes \I\right) $, $\GAMMA \triangleq \text{vec}\left[ \G\right]$, and  $\otimes$ is the Kronecker product, and $\text{vec}\left[\cdot\right]$ is the vector obtained by stacking the columns of a matrix one over the other.
Since $\Y$ and $\GAMMA$ are jointly Gaussian, the Linear MMSE is also the MMSE estimator (see \cite{kay:1998}). The LMMSE can be expressed as

\begin{align}
\begin{split}
\label{MMSE_expectations_1}
\exE_{\GAMMA |\Y ;\widehat{\R}^n }\left[\GAMMA \right]
&=\exE\left[\GAMMA\right]+ \exE\left[\GAMMA \Y^H\right] \exE^{-1}\left[\Y \Y^H\right]\left(\Y-\exE\left[\Y\right] \right)\\
%%%%%%%%%%%%
&=\overline{\GAMMA}+ \exE\left[\GAMMA \Y^H\right] \exE^{-1}\left[\Y \Y^H\right]\left(\Y-\Omega\overline{\GAMMA} \right)\\
&=\overline{\GAMMA}+   \frac{\Omega^{H} \left(\Y-\Omega\overline{\GAMMA} \right)}
									{\left\|\widehat{\R}^n \right\|^{2}+ \frac{\sigma^{2}_{W}}{\sigma^{2}_{G}}},
\end{split}
\end{align}
where $\overline{\GAMMA} = \exE\left[\GAMMA\right]$.
Next, we evaluate the covariance matrix:
\begin{align}
\begin{split}
\label{MMSE_expectations_2}
\text{Cov}_{\GAMMA |\Y ;\widehat{\R}^n }\left[\GAMMA \right] &=
\exE\left[\GAMMA \; \GAMMA^H\right]-
\exE\left[\GAMMA \;\Y^H\right]
\exE^{-1}\left[\Y \;\Y^H\right]
\exE\left[\Y \;\GAMMA^H\right]\\
&=
\sigma^{2}_{G} \I -\frac{\sigma^{2}_{G}\left(\widehat{\R}^n  \left(\widehat{\R}^n \right)^T  \otimes \I\right)}
												{	\left\|\widehat{\R}^n \right\|^2+	\frac{\sigma^{2}_{W}}{\sigma^{2}_{G}}} .
\end{split}
\end{align}
By rearranging the above expressions, we obtain
\begin{subequations}
\begin{align}
%\begin{split}
\label{MMSE_expectations_1}
\Phi_{1}\left(\Y ,\widehat{\R}^n \right) &\triangleq \exE_{\G |\Y ;\widehat{\R}^n }\left[\G \right]
= \overline{\G}+   \frac{1}{\left\|\widehat{\R}^n \right\|^{2}+ \frac{\sigma^{2}_{W}}{\sigma^{2}_{G}}}\left(\Y-\overline{\G} \widehat{\R}^n \right)\left(\widehat{\R}^n \right)^{H},\\
%\end{split}
%\end{align}
%\begin{subequations}
%\begin{align}
%\begin{split}
\label{MMSE_expectations_2}
\Phi_{2}\left(\Y ,\widehat{\R}^n \right) &\triangleq \exE_{\G |\Y ;\widehat{\R}^n }\left[\G^{H} \G \right] 
=
    \Phi_{1}\left(\Y,\widehat{\R}^n \right)^{T}\Phi_{1}\left(\Y,\widehat{\R}^n \right)
    +  \sigma^{2}_{G} N\left(\I-\frac{\sigma^{2}_{G} \widehat{\R}^n  \left(\widehat{\R}^n \right)^{T}} {\left\|\left(\widehat{\R}^n \right)\right\|^{2}+
    \frac{\sigma^{2}_{W}}{\sigma^{2}_{G}}}
     \right).
%\end{split}
\end{align}
\end{subequations}
Using (\ref{MMSE_expectations_1}-\ref{MMSE_expectations_2}), (\ref{E_step}) can be expressed as
\begin{align}
L(\R) = \frac{1}{\sigma^{2}_{\W}}
      \left(2 \Y^H \R \Phi_{1}\left(\Y ,\widehat{\R}^n \right)   -
      \R^T \Phi_{2}\left(\Y ,\widehat{\R}^n \right)  \R \right)- 
      \frac{1}{\sigma^{2}_{\R}}\left( \R^H \R- 2\R^H\overline{\R} \right)+\text{constant}.
\end{align}
The \textbf{M Step} is obtained by setting to $0$ the derivative of $L(\R)$ with respect to $\R$:
\begin{align}
\begin{split}
\widehat{\R}^{n+1} = \argmax_{\R} L(\R)
        	=\left(\Phi_{2}\left(\Y,\R^{k}\right)+\frac{\sigma^{2}_{\W}}{\sigma^{2}_{\R}}\I\right)^{-1}
\left(\Phi_{1}\left(\Y,\R^{k}\right)^{T}\Y
+ \overline{\R} \frac{\sigma^{2}_{\W}}{\sigma^{2}_{\R}}\right).
\end{split}
\end{align}
The BEM algorithm requires that $\widehat{\R}^{n+1}$ is initialised at $n=0$. The simplest option is to initialise it to the prior, that is $\widehat{\R}^{0} = \overline{\R}.$
\begin{table} 	[h]
  \begin{center}
    	\begin{tabular}{|c|c|c|c|l|l|}
			\addlinespace[1pt	]
			%\toprule[1pt]
			%\addlinespace[1pt]		
			\multicolumn{5}{c}{} \\			\hline
			%\addlinespace[1pt	]
			Case &PBS-relays ($\F$)&relays - SBS ($\G$) &  Section & Decision rule & Performance analysis
			\\
			\hline\addlinespace[2pt	]
			I&$\surd$&$\surd$& \ref{PERFECT_CSI}& Exact 		&Exact analytic\\
			\hline\addlinespace[1pt	]
			II&$\times$&$\surd$&\ref{Laguerre_section}& Exact & Analytic approximation  		\\
			\addlinespace[1pt	]
			%%%
			$$&$$&$$&$$& $$ & via Generalized Laguerre polynomials 		\\
			\hline\addlinespace[1pt	]
			%%%
			III&$\surd$&$\times$&\ref{App_Algorithms}&  Special case of IV	& see Section IV\\
			\hline\addlinespace[1pt	]
			IV&$\times$&$\times$&\ref{App_Algorithms}& Analytic approximation	& Simulation \\
			\addlinespace[1pt	]
			%%%
			$$&$$&$$&$$& via Laplace integrals	& $$ \\
			\addlinespace[1pt	]
			%%%%
			IV&$\times$&$\times$&\ref{App_Algorithms}& Analytic approximation  & Simulation \\
			\addlinespace[1pt	]
			%%%
			$$&$$&$$&$$& via Moments	matching  & $$ \\
			\addlinespace[1pt	]
			%%%
%			IV&$\times$&$\times$&\ref{App_Algorithms}&  Asympthotic exact& Simulation \\
%			\addlinespace[1pt	]
%			%%%
%			$$&$$&$$&$$&  via Bootstrap& $$ \\
%			\addlinespace[1pt	]
			%%%
			\hline
		%\bottomrule[1pt]
			\end{tabular}
\end{center}
\caption{Summary of proposed solutions based on CSI knowledge} 
\label{Table_algorithms}
\end{table}     

\begin{figure}[b]
    \centering
        \epsfysize=5cm
        \epsfxsize=12cm
        \epsffile{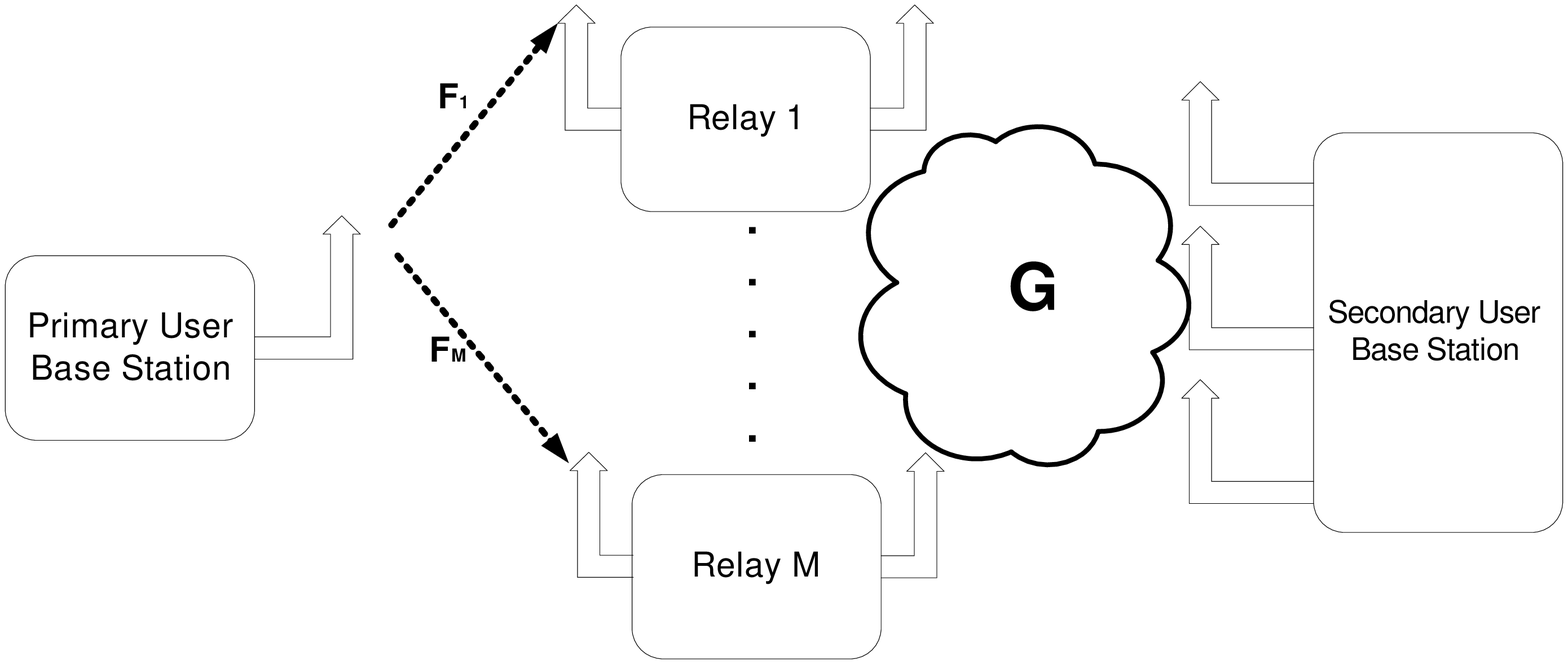}
        \caption{System model of Cooperative Cognitive Radio network with $M$ relays and a multiple antenna receiver}
    \label{fig:system_model}
\end{figure}

%%%%%%%%%%%%%%%%%%%%%%%%%%%%%%%%%
%%%%%%%%%%%%%%%%%%%%%%%%%%%%%%%%%
%%%%%%%%%%%%%%%%%%%%%%%%%%%%%%%%%
\begin{figure}
  \begin{center}
    \centering
    \epsfysize=9cm
    \epsfxsize=10cm
    \epsffile{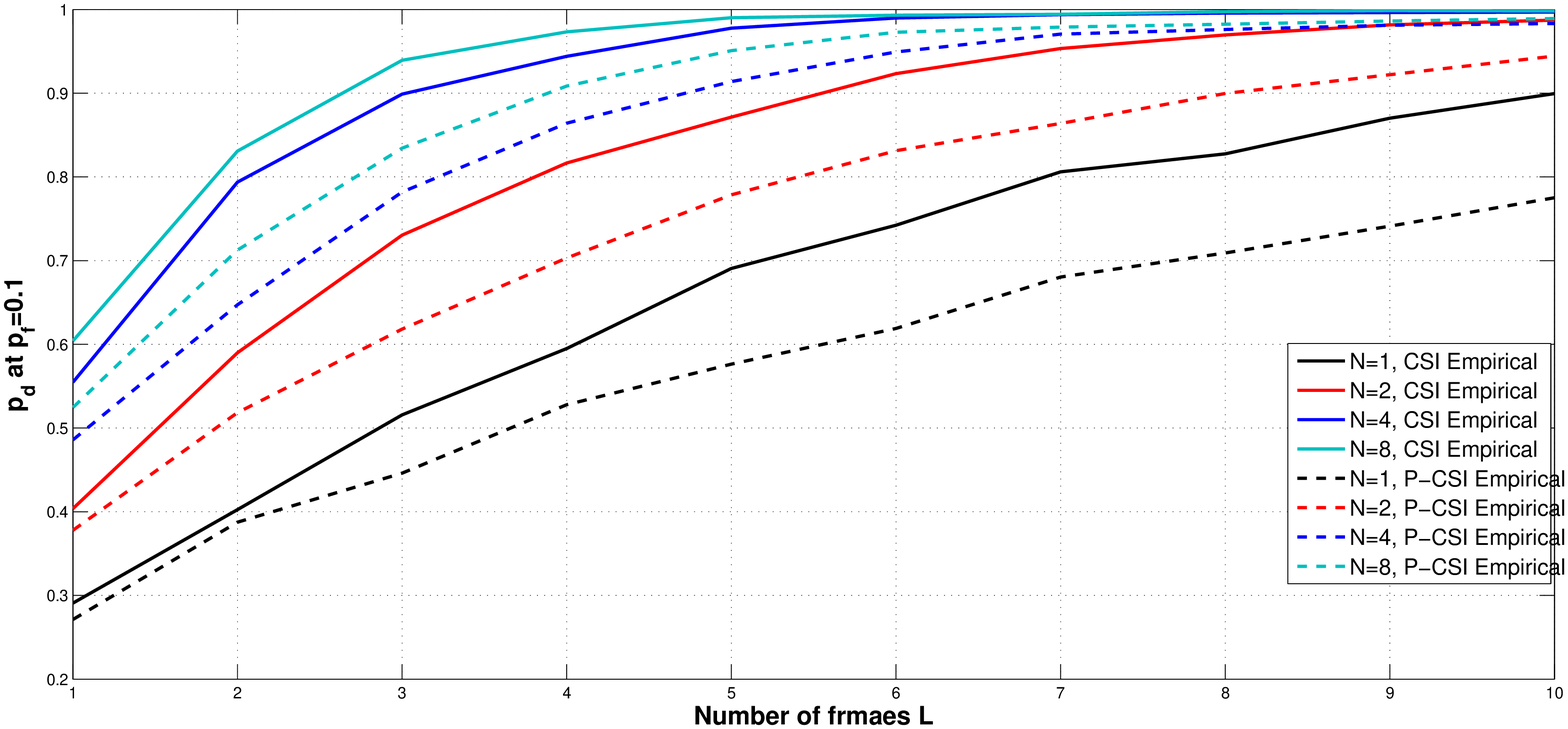}
    \caption{Probability of detection at $p_f=0.1$ for the cases of perfect CSI (Section \ref{PERFECT_CSI}) and imperfect CSI (Section \ref{Laguerre_section})} 
    \label{fig:fig11}
  \end{center}
  %\caption{Various edge detection algorithms}
\end{figure}

%%%%%%%%%%%%%%%%%%%%%%%%%%%%%%%%%
%%%%%%%%%%%%%%%%%%%%%%%%%%%%%%%%%
\begin{figure}
  \begin{center}
    \centering
    \epsfysize=9cm
    \epsfxsize=10cm
    \epsffile{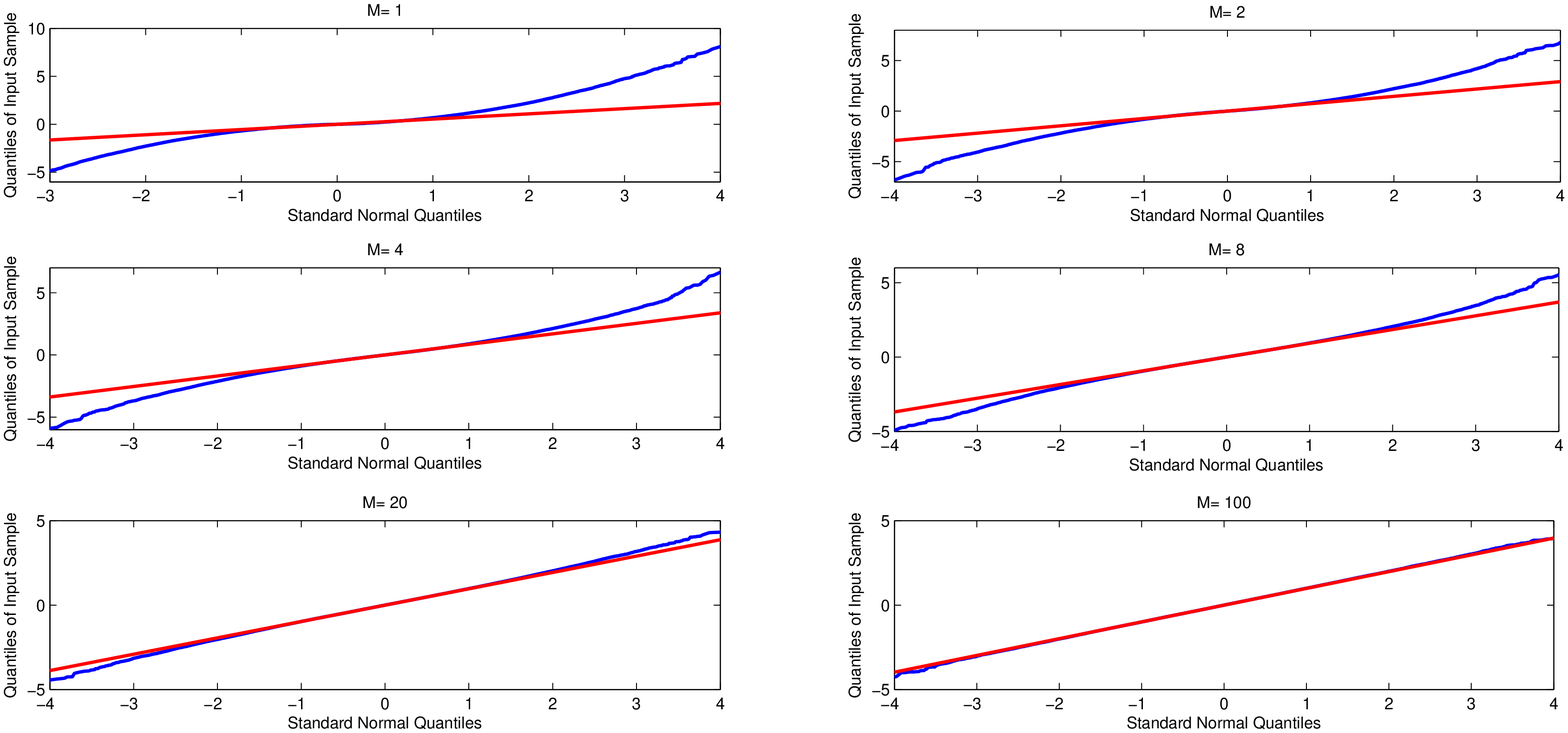}
    \caption{Q-Q plot of the normal approximation per Section \ref{Gaussian_Approximation} for different number of relays}
    \label{fig:fig10}
  \end{center}
  %\caption{Various edge detection algorithms}
\end{figure}

\begin{figure}
  \begin{center}
        \centering
	\epsfysize=9cm
    \epsfxsize=12.5cmcm
    \epsffile{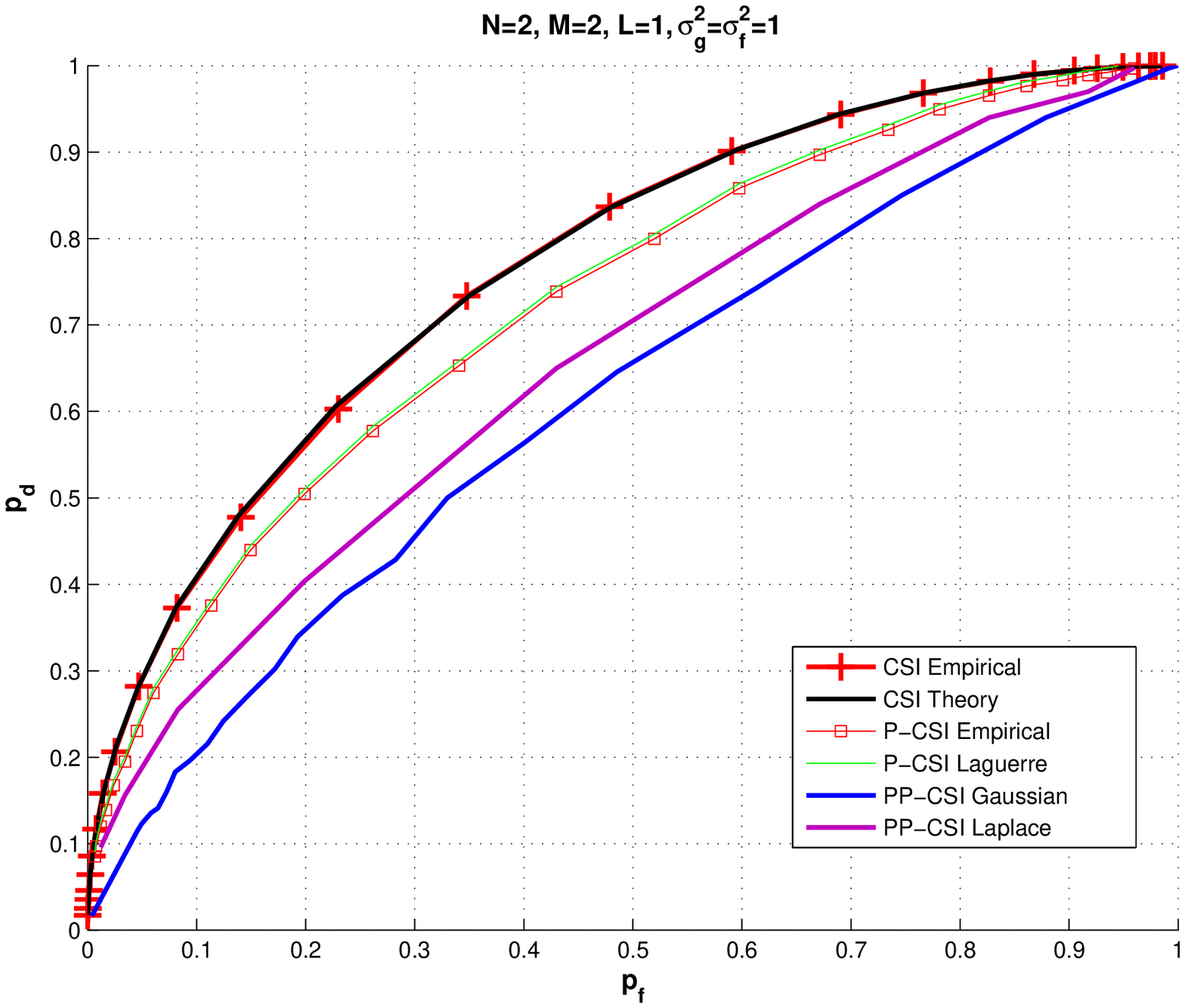}
    \caption{Probability of detection vs. probability of false alarm for $N=2$, $M=2$, $L=1$} 
    \label{fig:fig1}
  \end{center}
  %\caption{Various edge detection algorithms}
\end{figure}

\begin{figure}
  \begin{center}
        \centering
    \epsfysize=9cm
    \epsfxsize=12.5cmcm
    \epsffile{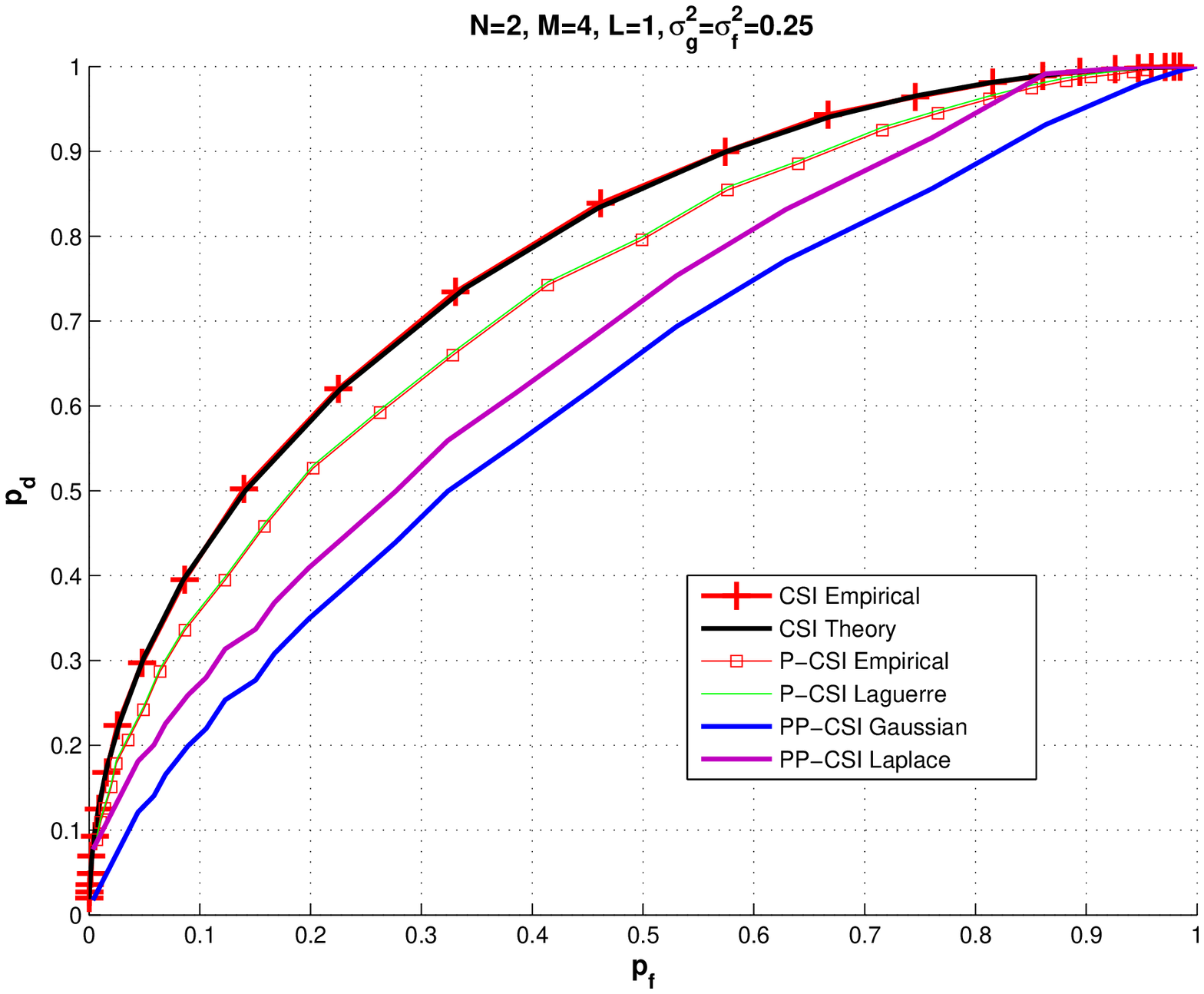}
\caption{Probability of detection vs. probability of false alarm for $N=2$, $M=4$, $L=1$}
    \label{fig:fig2}
  \end{center}
  %\caption{Various edge detection algorithms}
\end{figure}

\begin{figure}
  \begin{center}
   \centering
    \epsfysize=9cm
    \epsfxsize=12.5cm
    \epsffile{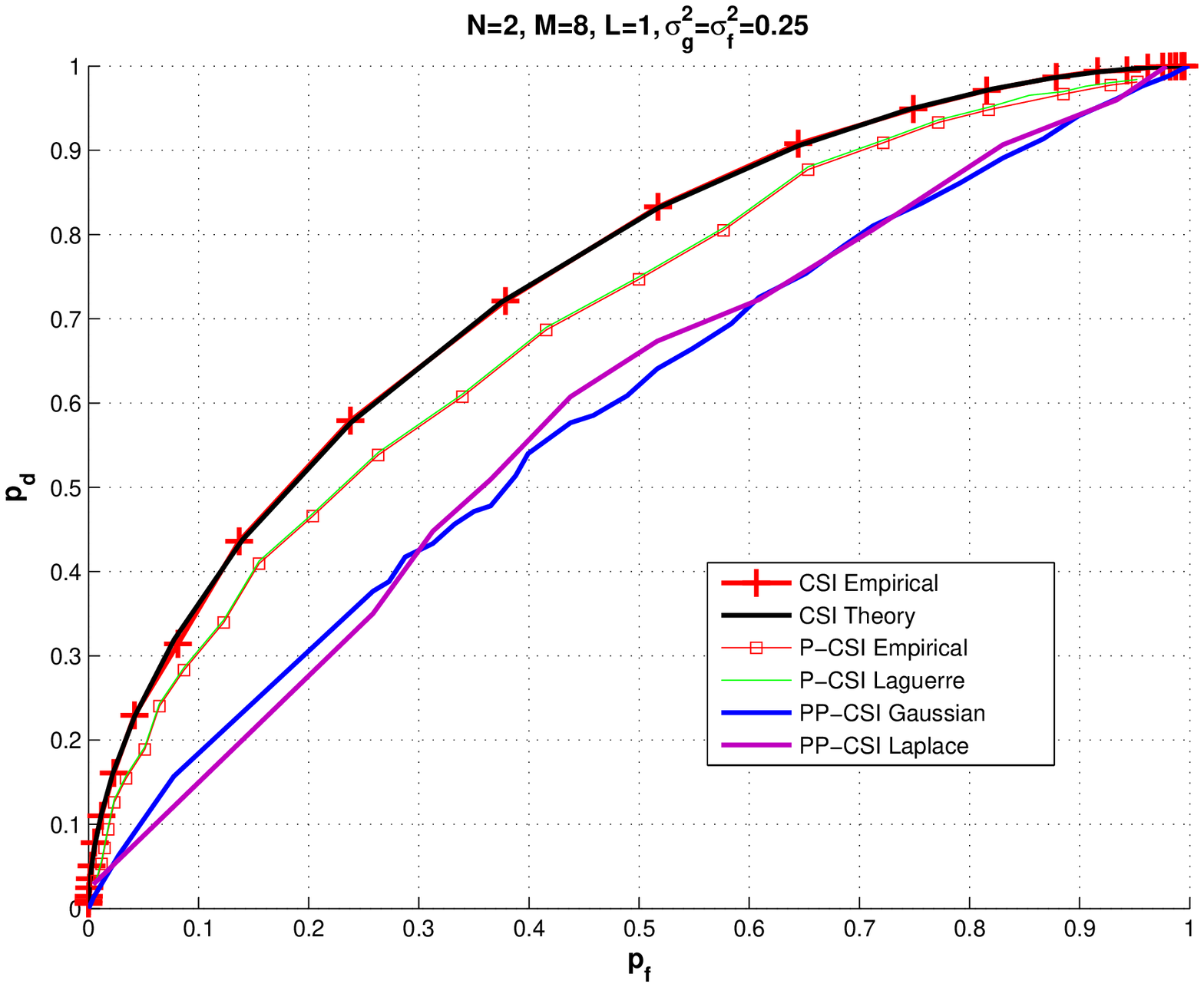}
    \caption{Probability of detection vs. probability of false alarm for $N=2$, $M=8$, $L=1$}
    \label{fig:fig3}   
  \end{center}
  %\caption{Various edge detection algorithms}
\end{figure}

\begin{figure}
  \begin{center}
   \centering
    \epsfysize=9cm
    \epsfxsize=12.5cm
    \epsffile{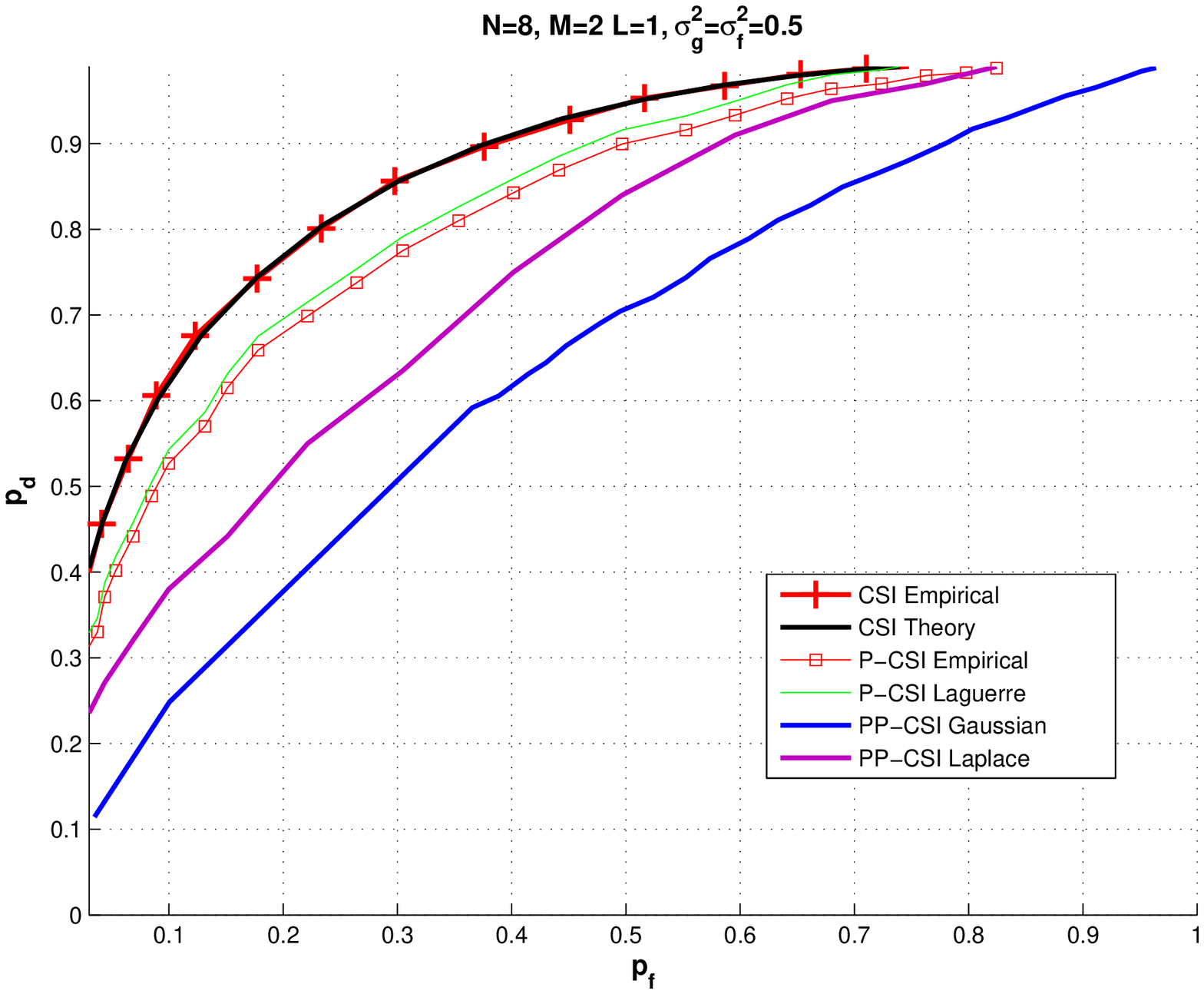}
    \caption{Probability of detection vs. probability of false alarm for $N=8$, $M=2$, $L=1$} 
    \label{fig:fig4}    
  \end{center}
  %\caption{Various edge detection algorithms}
\end{figure}

\begin{figure}
  \begin{center}
  \centering
    \epsfysize=9cm
    \epsfxsize=12.5cm
    \epsffile{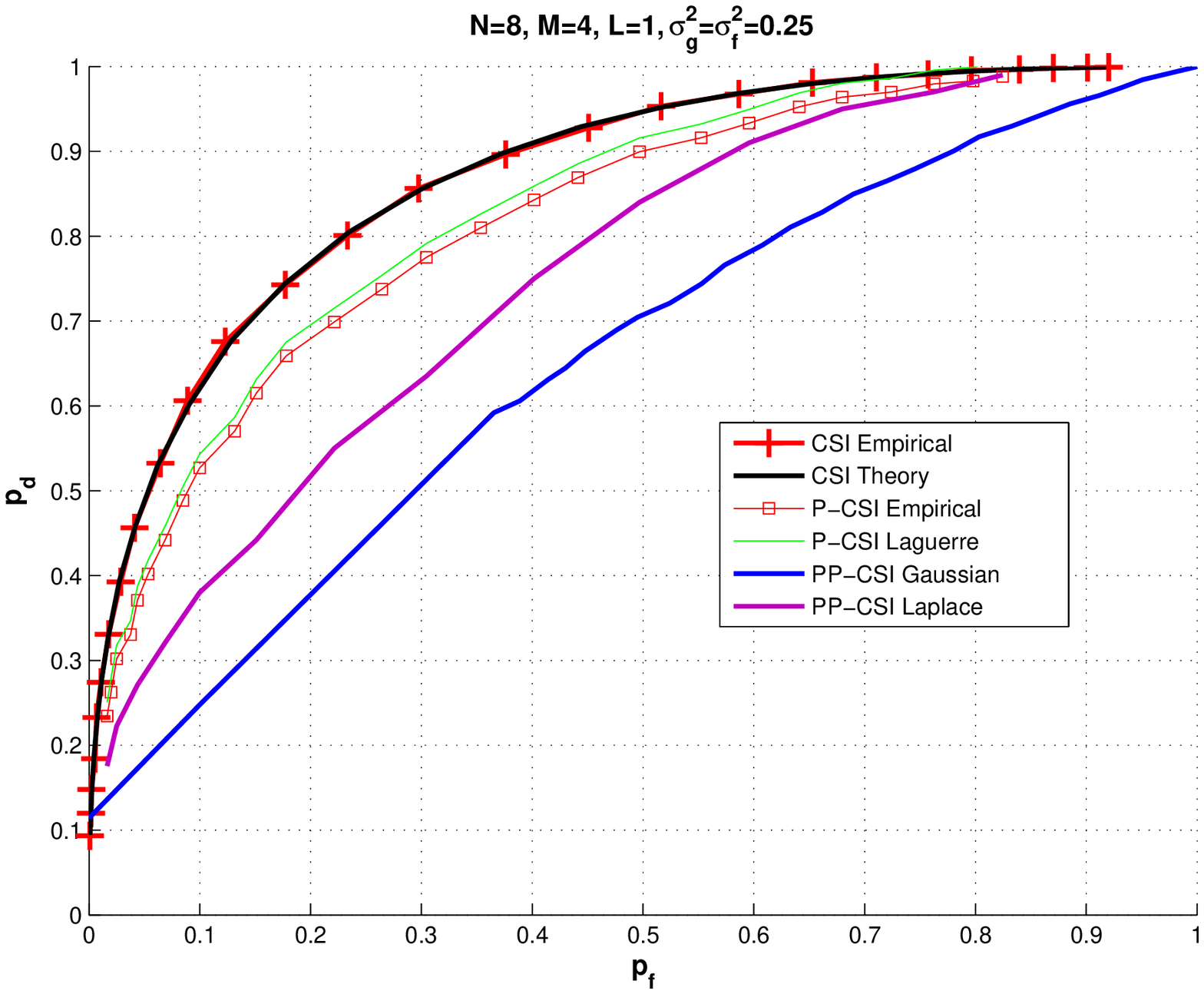}
    \caption{Probability of detection vs. probability of false alarm for $N=8$, $M=4$, $L=1$}
    \label{fig:fig5}   
  \end{center}
  %\caption{Various edge detection algorithms}
\end{figure}

\begin{figure}
  \begin{center}
    \centering
    \epsfysize=9cm
    \epsfxsize=12.5cm
    \epsffile{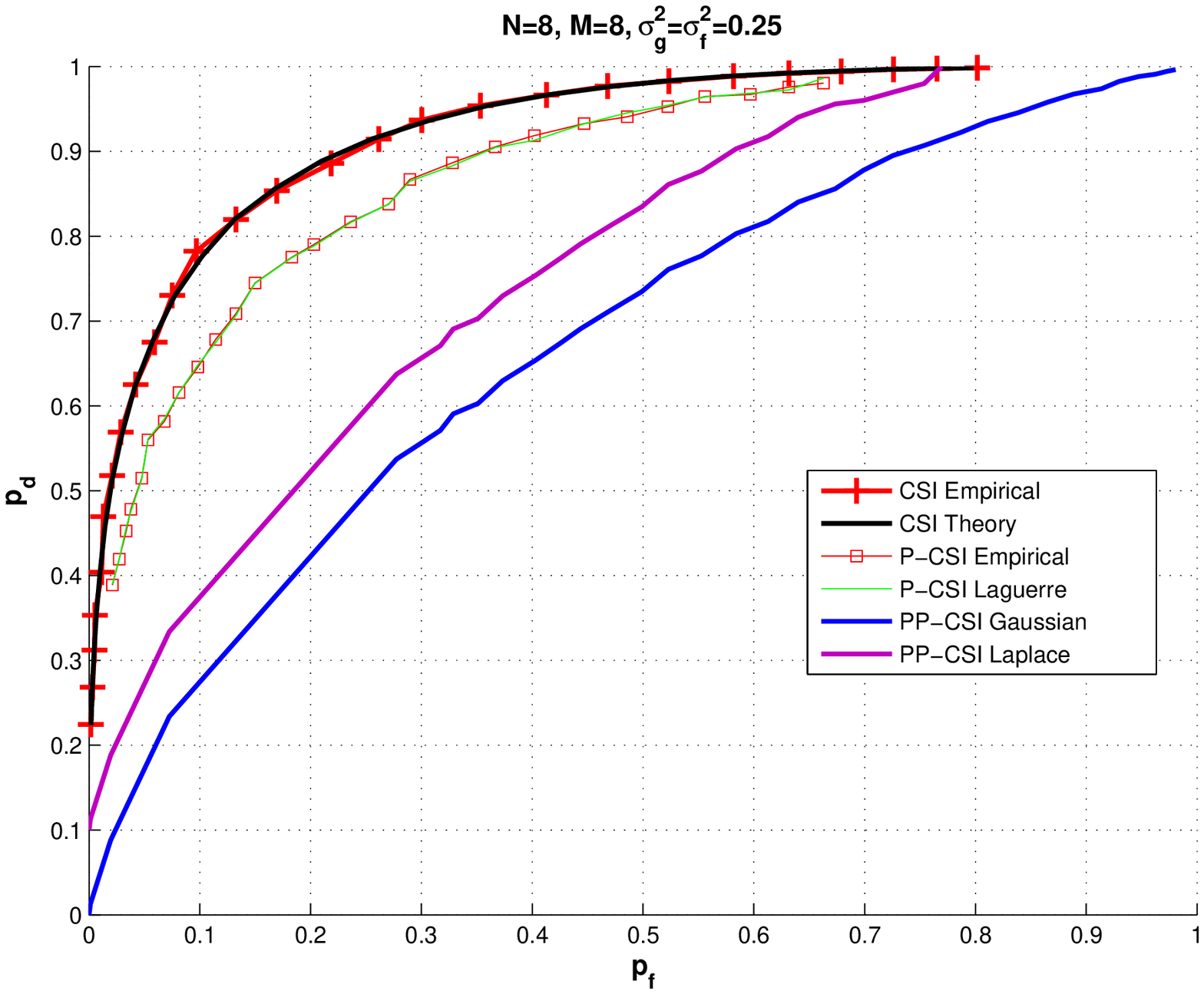}
    \caption{Probability of detection vs. probability of false alarm for $N=8$, $M=8$, $L=1$} 
    \label{fig:fig6}
  \end{center}
  %\caption{Various edge detection algorithms}
\end{figure}
\end{document}